\newtheorem{defn}{Definition}[section] 
\newtheorem{thm}{Theorem}[section] 
\newtheorem{lem}{Lemma}[section] 
\newtheorem{prop}{Proposition}[section] 
\newtheorem{exmp}{Example}[section] 
\newtheorem{cor}{Corollary}[section] 
\newcommand{\nc}{\newcommand}
\nc{\Con}[1]{\mbox{Con}_{#1}}  
\nc{\pc}[3]{\mathsf{#1}_{#2}(#3)} 
\nc{\qval}[1]{\pc{qVal}{}{#1}} 
\nc{\qbound}[1]{\pc{qBound}{}{#1}} 
\nc{\qvali}[2]{\pc{qVal}{#1}{#2}} 
\nc{\qboundi}[2]{\pc{qBound}{#1}{#2}} 
\nc{\encode}[1]{\ulcorner#1\urcorner} 
\nc{\spair}[2]{\llparenthesis\, #1, #2 \,\rrparenthesis}
\newcommand{\schemenp}[1]{\mbox{#1}} 
\newcommand{\scheme}[2]{\schemenp{#1}(#2)} 
\newcommand{\clp}[1]{\scheme{CLP}{#1}} 
\newcommand{\sqclp}[3]{\scheme{SQCLP}{#1,#2,#3}} 
\newcommand{\cdom}{\mathcal{C}} 
\newcommand{\rdom}{\mathcal{R}} 
\newcommand{\hdom}{\mathcal{H}} 
\newcommand{\fdom}{\mathcal{FD}} 
\newcommand{\qdom}{\mathcal{D}} 
\newcommand{\aqdomd}[1]{D_{#1} \setminus \{\bt\}} 
\newcommand{\bqdomd}[1]{(D_{#1} \setminus \{\bt\}) \uplus \{?\}} 
\newcommand{\aqdom}{\aqdomd{}} 
\newcommand{\bqdom}{\bqdomd{}} 
\newcommand{\bt}{\mathrm{\mathbf{b}}} 
\newcommand{\tp}{\mathrm{\mathbf{t}}} 
\newcommand{\dleq}{\trianglelefteqslant} 
\newcommand{\dlt}{\vartriangleleft} 
\newcommand{\dgeq}{\trianglerighteqslant} 
\newcommand{\B}{\mathcal{B}} 
\newcommand{\U}{\mathcal{U}} 
\newcommand{\W}{\mathcal{W}} 
\newcommand{\UW}{\U\!\times\!\W} 
\newcommand{\simrel}{\mathcal{S}} 
\newcommand{\sid}{\simrel_{\mathrm{id}}} 
\newcommand{\toy}{\mathcal{TOY}}
\newcommand{\Prog}{\mathcal{P}} 
\newcommand{\Var}{\mathcal V\!ar} 
\newcommand{\War}{\mathcal W\!ar} 
\newcommand{\X}{\mathcal{X}} 
\newcommand{\set}[2]{\mathrm{#1}(#2)} 
\newcommand{\domset}[1]{\set{dom}{#1}} 
\newcommand{\ranset}[1]{\set{vran}{#1}} 
\newcommand{\varset}[1]{\set{var}{#1}} 
\newcommand{\warset}[1]{\set{war}{#1}} 
\nc{\Exp}{\mbox{Exp}_{\bot}(\Sigma,B,\Var)} 
\nc{\TExp}{\mbox{Exp}(\Sigma,B,\Var)} 
\nc{\GExp}{\mbox{Exp}_{\bot}(\Sigma,B)} 
\nc{\TGExp}{\mbox{Exp}(\Sigma,B)} 
\nc{\Term}{\mbox{Term}_{\bot}(\Sigma,B,\Var)} 
\nc{\TTerm}{\mbox{Term}(\Sigma,B,\Var)} 
\nc{\GTerm}{\mbox{Term}_{\bot}(\Sigma,B)} 
\nc{\TGTerm}{\mbox{Term}(\Sigma,B)} 
\nc{\At}{\mbox{At}(\Sigma,B,\Var)} 
\nc{\GAt}{\mbox{GAt}(\Sigma,B)} 
\nc{\PAt}{\mbox{PAt}(\Sigma,B,\Var)} 
\nc{\GPAt}{\mbox{GPAt}(\Sigma,B)} 
\nc{\Atz}{\mbox{At}_{\Sigma}} 
\nc{\QAtz}{\mbox{At}_{\Sigma}(\qdom)} 
\nc{\sust}{\mbox{Subst}_\Sigma} 
\nc{\Sust}{\mbox{Subst}(\Sigma,B,\Var)} 
\nc{\GSust}{\mbox{GSubst}(\Sigma,B)} 
\nc{\Soln}[3]{\mbox{Sol}_{#1}^{#2}(#3)} 
\nc{\Sol}[2]{\Soln{#1}{}{#2}} 
\nc{\GSol}[2]{\mbox{GSol}_{#1}(#2)}
\nc{\Solc}[1]{\Sol{\cdom}{#1}}
\nc{\CAns}[2]{\mbox{C\!Ans}_{#1}(#2)} 
\newcommand{\qat}[2]{#1 \sharp #2} 
\newcommand{\cqat}[3]{\qat{#1}{#2} \Leftarrow #3} 
\newcommand{\qgets}[1]{\xleftarrow{#1}} 
\newcommand{\Int}[1]{\mbox{Int}_{#1}}
\newcommand{\Intdc}{\Int{\qdom,\cdom}} 
\newcommand{\ibot}{\bot\!\!\!\bot} 
\newcommand{\itop}{\top\!\!\!\top} 
\newcommand{\I}{\mathcal{I}} 
\newcommand{\J}{\mathcal{J}} 
\nc{\closure}[1]{\mbox{cl}_{#1}} 
\newcommand{\Tp}{\mbox{T}_{\!\Prog}} 
\newcommand{\model}[1]{~{\models_{#1}}~} 
\newcommand{\M}[1]{\mathcal{M}_{#1}} 
\newcommand{\Mp}{\M{\Prog}} 
\newcommand{\diff}{~{\Longleftrightarrow_{\mathrm{def}}}~} 
\newcommand{\eqdef}{~{=_{\mathrm{def}}}~} 
\newcommand{\union}{\bigcup} 
\newcommand{\inter}{\bigcap} 
\newcommand{\supr}{\bigsqcup} 
\newcommand{\infi}{\bigsqcap} 
\newcommand{\entail}[1]{~{\succcurlyeq_{#1}}~} 
\newcommand{\sep}{\talloblong} 
\newcommand{\NAT}{\mathbb{N}}
\newcommand{\REAL}{\mathbb{R}}
\newcommand{\tup}[1]{\overline{#1}}   
\newcommand{\ntup}[2]{\tup{#1}_{#2}}  
\newcommand{\infx}[2]{\ {\vdash}_{\!#1}^{\!#2}\ } 
\newcommand{\infxx}[2]{\ {\vdash\!\!\vdash}_{\!#1}^{\!#2}\ } 
\newcommand{\SQCHL}{\mbox{SQCHL}} 
\newcommand{\sqchln}[4]{\infx{#1,#2,#3}{#4}} 
\newcommand{\sqchl}[3]{\sqchln{#1}{#2}{#3}{}} 
\newcommand{\sqchlrdc}{\sqchln{\simrel}{\qdom}{\cdom}{}} 
\newcommand{\sqchlrdcn}[1]{\sqchln{\simrel}{\qdom}{\cdom}{#1}} 
\newcommand{\isqchln}[4]{\infxx{#1,#2,#3}{#4}} 
\newcommand{\isqchlrdc}{\isqchln{\simrel}{\qdom}{\cdom}{}} 
\begin{document}

\long\def\comment#1{}

\title[Fixpoint \& Proof-theoretic Semantics for SQCLP]
{Fixpoint \& Proof-theoretic Semantics\\ for CLP with Qualification and Proximity \thanks{This work has been partially supported by the Spanish projects STAMP (TIN2008-06622-C03-01), PROMETIDOS--CM (S2009TIC-1465) and GPD--UCM (UCM--BSCH--GR58/08-910502).}\\
{\large Technical Report SIC-1-10}}

\author[M. Rodr\'iguez-Artalejo and C. A. Romero-D\'iaz]
{MARIO RODR\'IGUEZ-ARTALEJO and CARLOS A. ROMERO-D\'IAZ \\
Departamento de Sistemas Inform\'aticos y Computaci\'on, Universidad Complutense\\
Facultad de Inform\'atica, 28040 Madrid, Spain\\
\email{mario@sip.ucm.es, cromdia@fdi.ucm.es}
}

\pagerange{\pageref{firstpage}--\pageref{lastpage}}
\volume{\textbf{10} (3):}
\jdate{March 2002}
\setcounter{page}{1}
\pubyear{2002}

\maketitle

\thispagestyle{empty}


\begin{abstract}
Uncertainty in Logic Programming has been investigated during the last decades,
dealing with various extensions of the classical LP paradigm and different applications.
Existing proposals rely on different approaches, such as 
clause annotations based on uncertain truth values,
qualification values as a generalization of uncertain truth values,
and unification based on proximity  relations.
On the other hand,  the CLP scheme 
has established itself as a powerful extension of LP
that supports efficient computation over specialized domains 
while keeping a  clean declarative semantics.
In this report we propose a new scheme SQCLP
designed as an extension of CLP that supports qualification values and proximity relations.
We show that several  previous proposals 
can be viewed as particular cases of the new scheme, obtained by partial instantiation.
We present a declarative semantics for SQCLP that is based on observables, 
providing fixpoint and proof-theoretical characterizations of least program models
as well as an implementation-independent  notion of goal solutions.
\end{abstract}

\begin{keywords}
Constraint Logic Programming,
Qualification Domains and Values,
Proximity Relations.
\end{keywords}

\section{Introduction}
\label{sec:introduction}


Many extensions of logic programming (shortly LP) to deal with uncertainty have been proposed in the last decades.
A line of research not related to this report is based on probabilistic extensions of LP such as \cite{NS92}.
Other proposals in the field replace classical two-valued logic 
by some kind of many-valued logic whose truth values 
can be attached to computed answers and are usually  interpreted as certainty degrees.
The next paragraphs summarize some relevant approaches of this kind. 

There are extensions of LP using annotations in program clauses
to compute a certainty degree for the head atom from the certainty degrees previously computed for the body atoms.
This line of research includes the seminal proposal of Quantitative Logic Programming  by \cite{VE86} 
and inspired later works such as the Generalized Annotated logic Programs (shortly GAP) by  \cite{KS92}
and the QLP scheme for Qualified LP \cite{RR08}.
While \cite{VE86} and other early approaches used real numbers of the interval $[0,1]$ as certainty degrees,
QLP and GAP take elements from a parametrically given lattice to be used in annotations and attached to computed answers. 
In the case of QLP, the lattice is called a {\em qualification domain} 
and its elements (called {\em qualification values})  are not always understood as certainty degrees.
As argued in \cite{RR08}, GAP is a more general framework, 
but QLP's semantics  have some advantages for its intended scope.

There are also extended LP languages based on fuzzy logic \cite{Zad65,Haj98}, 
which can be classified into two major lines.
The first line includes Fuzzy LP languages such as \cite{Voj01,VGM02,GMV04} 
and the Multi-Adjoint LP (shortly MALP) framework by \cite{MOV01a,MOV01b}.
All these approaches extend classical LP by using clause annotations and a fuzzy interpretation of the connectives
and aggregation operators occurring in program clauses and goals. 
There is a relationship between Fuzzy LP and GAP that has been investigated in \cite{KLV04}.
Intended applications of Fuzzy LP languages include expert knowledge representation. 

The second line includes Similarity-based LP (shortly SLP)
 in the sense of \cite{AF99,Ses02,LSS04} and related proposals,
which keep the classical syntax of LP clauses but use a {\em similarity relation} over a set of symbols $S$ 
to allow ``flexible'' unification of syntactically different symbols with a certain approximation degree.
Similarity relations over a given set $S$ have been defined in \cite{Zad71,Ses02} and related literature as 
fuzzy relations represented by mappings $\simrel : S \times S \to [0,1]$ which satisfy reflexivity, symmetry and transitivity  axioms analogous to those required for classical equivalence relations. A more general notion called {\em proximity relation} was 
introduced in \cite{DP80} by omitting the transitivity axiom. 
As  noted by  \cite{SM99} and other authors, the transitivity property required for similarity relations
may conflict with user's intentions in some cases. 
The {\sf Bousi$\sim$Prolog} language \cite{JRG09,JR09b,JR09} 
has been designed with the aim of generalizing SLP to work with proximity relations.
A different generalization of SLP is the SQLP scheme \cite{CRR08},
designed as an extension of the QLP scheme.
In addition to clause annotations in QLP style, SQLP uses a given similarity 
relation $\simrel : S \times S \to D$ (where $D$ is the carrier set of a parametrically given qualification domain)
in order to support flexible unification.
In the sequel we use the acronym SLP as including proximity-based LP languages also.
Intended applications of SLP include flexible query answering.  
An analogy of proximity relations in a different context (namely partial constraint satisfaction) can be found in \cite{FW92}, where several metrics are proposed to measure the proximity between the solution sets of two different constraint satisfaction problems.

Several of the above mentioned LP extensions (including GAP, QLP, the Fuzzy LP language in  \cite{GMV04} and SQLP) have used constraint solving as an implementation technique.
However, we only know two approaches which have been conceived as extensions of the classical CLP scheme \cite{JL87}.
Firstly, \cite{Rie96,Rie98phd} extended the formulation of CLP by \cite{HS88} with quantitative LP in the sense of \cite{VE86}; this work was motivated by problems from the field of natural language processing. 
Secondly, \cite{BMR01} proposed a semiring-based approach to CLP, where constraints are solved in a soft way with levels of consistency represented by values of a semiring.
This approach was motivated by constraint satisfaction problems and implemented with {\tt clp(FD,S)} in \cite{GC98} for a particular class of semirings which enable to use local consistency algorithms.
The relationship between \cite{Rie96,Rie98phd,BMR01} and the results of this report will be further discussed in Section \ref{sec:conclusions}.

Finally, there are a few preliminary attempts to combine some of the above mentioned approaches 
with the Functional Logic Programming (shortly FLP) paradigm 
found in languages such as {\sf Curry} \cite{curry} and $\toy$ \cite{toy}. 
Similarity-based unification for FLP languages has been investigated by \cite{MP07}, 
while \cite{CRR09} have proposed a generic scheme QCFLP designed as a common extension of
the two schemes CLP and QLP with first-order FLP features.


In this report we propose a new extension of CLP that supports qualification values and proximity relations.
More precisely, we define a generic scheme SQCLP whose instances $\sqclp{\simrel}{\qdom}{\cdom}$ are
parameterized by a proximity relation $\simrel$, a qualification domain $\qdom$ and a constraint domain $\cdom$. 
We will show that several previous proposals 
can be viewed as particular cases of SQCLP, obtained by partial instantiation.
Moreover, we will present a declarative semantics for SQCLP that is inspired in the observable CLP semantics by \cite{GL91,GDL95}
and provides fixpoint and proof-theoretical characterizations of least program models
as well as an implementation-independent notion of goal solution that can be used to specify the expected behavior of goal solving systems.

The reader is assumed to be familiar with the semantic foundations of LP \cite{Llo87,Apt90} and CLP \cite{JL87,JMM+98}.
The rest of the report is structured as follows:
Section \ref{sec:domains} introduces constraint domains, qualification domains and proximity relations.
Section \ref{sec:sqclp} presents the SQCLP scheme and the main results on its declarative semantics.
Finally, Section \ref{sec:conclusions} concludes by giving an overview of related approaches (many of which can be viewed as particular cases of SQCLP) and pointing to some lines open for future work. 

\section{Constraints, Qualification \& Proximity}
\label{sec:domains} 
\subsection{Constraint Domains}
\label{sec:cdoms}


The Constraint Logic Programming paradigm (CLP) was introduced in \cite{JL87}
with the aim of generalizing the Herbrand Universe which underlies classical Logic Programming (LP) 
to other domains tailored to specific application areas.
In this seminal paper, CLP was introduced as a generic scheme with instances $\clp{\cdom}$
parameterized by {\em constraint domains} $\cdom$, each of which supplies several items: 
a {\em constraint language} providing a class of domain specific formulae, 
called {\em constraints} and serving as logical conditions in $\clp{\cdom}$ programs and computations;
a {\em constraint structure} serving as interpretation of the constraint language; 
a {\em constraint  theory} serving as a basis for proof-theoretical deduction with constraints; 
and a {\em constraint solver} for checking constraint satisfiability. 
Certain assumptions were made to ensure the proper relationship
between the constraint language, structure, theory and solver, so that the
classical results on the operational and declarative semantics of LP \cite{Llo87,Apt90} 
could be extended to all the $\clp{\cdom}$ languages. 
A revised and updated presentation of the main results from \cite{JL87}  can be found in \cite{JMM+98},
while a survey of CLP as a programming paradigm is given in \cite{JM94}.


The notion of constraint domain is a key ingredient of the CLP scheme. 
In addition to the classical formulation in \cite{JL87,JMM+98}, other formalizations have been used for different purposes. 
Some significative examples are: 
the CLP scheme proposed in \cite{HS88}, motivated by applications to computational linguistics
and allowing more than one constraint structure to come along with a given constraint language;
the proof-theoretical notion of constraint system given in \cite{Sar92}, intended for application to concurrent constraint languages;
and the constraint systems proposed in \cite{LOPP08} as the basis of a functorial semantics for CLP with negation, where a single constraint structure is replaced by a class of elementary equivalent structures.

 
In this paper we will use a simple notion of constraint domain, motivated by three main considerations: 
firstly, to focus on declarative semantics, rather than proof-theoretic or operational issues;
secondly, to provide a purely relational framework;
and thirdly, to clarify the interplay between domain-specific programming resources such as  basic values  and primitive predicates,
and general-purpose programming resources such as data constructors and defined predicates.
 
\subsubsection{Preliminary notions}
\label{sec:cdoms:syntax}

Before presenting constraint domains in a formal way, 
let us introduce some mainly syntactic notions that will be used all along the paper. 


\begin{defn}[Signatures]
\label{def:sig}
We assume a {\em universal programming signature} $\ \Gamma = \langle DC, DP \rangle$  where $DC = \bigcup_{n \in \NAT}DC^n$ and $DP \!= \bigcup_{n \in \NAT}  DP^n$ are infinite and mutually disjoint sets of free function symbols (called {\em data constructors} in the sequel) and {\em defined predicate} symbols, respectively, ranked by arities.
We will use {\em domain specific signatures} $\Sigma = \langle DC, DP, PP \rangle$
extending $\Gamma$ with a disjoint set $PP = \bigcup_{n \in \NAT}  PP^n$ of {\em primitive predicate} symbols, also ranked by arities.
The idea is that primitive predicates come along with constraint domains, while defined predicates are specified in user programs.
Each $PP^n$ maybe any countable set of $n$-ary predicate symbols.
In practice, $PP$ is expected to be a finite set. \mathproofbox
\end{defn}


In the sequel, we assume that any signature $\Sigma$ includes two nullary constructors 
{\sf true}, {\sf false} $\in DC^0$ to represent the boolean values, 
a binary constructor {\sf pair} $\in DC^2$ to represent ordered pairs, 
as well as constructors to represent lists and other common data structures. 
Given a signature $\Sigma$, a set $B$ of {\em basic values} $u$ and a countably infinite set $\Var$ of variables $X$, 
{\em terms} and {\em atoms} are built as defined below, where $\ntup{o}{n}$ abbreviates the $n$-tuple of syntactic objects $o_1, \ldots, o_n$ and $\varset{o}$ denotes the set of all variables occurring in the syntactic object $o$.

\begin{defn}[Terms and atoms]
\label{def:ta}
\begin{itemize}
\item
{\em Constructor Terms}  $t \in \TTerm$ have the syntax $t ::= X | u | c(\ntup{t}{n})$, where $c \in DC^n$. 
They will be called just terms in the sequel. 
In concrete examples, we will use Prolog syntax for terms built with list constructors,
and we will write $(t_1,t_2)$ rather than {\sf pair}$(t_1,t_2)$ for terms representing ordered pairs.
\item 
The set of all the variables occurring in $t$ is noted as $\varset{t}$.
A term $t$  is called {\em ground} iff $\varset{t} = \emptyset$. 
$\TGTerm$ stands for the set of all ground terms.
\item
{\em Atoms} $A \in \At$ can be 
{\em defined atoms} $r(\ntup{t}{n})$, where $r \in DP^n$ and $t_i \in \TTerm$ ($1 \leq i \leq n$);
{\em primitive atoms} $p(\ntup{t}{n})$, where $p \in PP^n$ and $t_i \in \TTerm$ ($1 \leq i \leq n$);
and {\em equations} $t_1$ {\sf ==} $t_2$, where $t_1, t_2 \in \TTerm$
and `==' is the {\em equality symbol}, which does not belong to the signature $\Sigma$.
Primitive atoms are noted as $\kappa$ and the set of all primitive atoms is noted $\PAt$.
Equations and primitive atoms are collectivelly called {\em $\cdom$-based atoms}.
\item
The set of all the variables occurring in $A$ is noted as $\varset{A}$.
An atom $A$ is called {\em ground} iff $\varset{A} = \emptyset$. 
The set of all ground atoms (resp. ground primitive atoms) is noted as $\GAt$ (resp. $\GPAt$). \mathproofbox
\end{itemize}
\end{defn}


Note that the equality symbol `==' used as part of the syntax of equational atoms is not the same as the symbol `=' generally used for mathematical equality.  In particular, metalevel equations $o = o'$ can be used to assert the identity of two syntactical objects $o$ and $o'$.


Following well-known ideas, the syntactical structure of terms and atoms can be represented by means of trees with nodes
labeled by signature symbols, basic values and variables. 
In the sequel  we will use the notation $\Vert t \Vert$ to denote the {\em syntactical size} of $t$ 
measured as the number of nodes in the tree representation of $t$.
The {\em positions} of nodes in this tree can be noted as finite  sequences $p$ of natural numbers. 
In particular, the empty sequence $\varepsilon$ represents the root position.
The next definition presents essential notions concerning positions in terms.
Positions in atoms can be treated similarly.

\begin{defn}[Positions]
\label{def:pos}
\begin{enumerate}
\item
The set $\mbox{pos}(t)$ of positions of the term $t$ is defined by recursion on the structure of $t$:
	\begin{itemize}
	\item
	$\mbox{pos}(X) = \{\varepsilon\}$ for each variable $X \in \Var$.
	\item
	$\mbox{pos}(u) = \{\varepsilon\}$ for each basic value $u \in B$.
	\item
	$\mbox{pos}(c(\ntup{t}{n})) = \{\varepsilon\} \cup \bigcup_{i = 1}^n \{iq \mid q \in \mbox{pos}(t_i)\}$ for each $c \in DC^n$.
	\end{itemize}
\item
Given $p \in \mbox{pos}(t)$, the symbol $t \circ p$ of  $t$ at position $p$ is defined recursively:
	\begin{itemize}
	\item
	$X \circ \varepsilon = X$ for each variable $X \in \Var$.
	\item
	$u \circ \varepsilon  = u$ for each basic value $u \in B$.
	\item
	$c(t_1, \ldots, t_n) \circ \varepsilon  = c$ if $c \in DC^n$.
	\item
	$c(t_1, \ldots, t_n) \circ iq = t_i \circ q$ if $c \in DC^n$, $1 \leq i \leq n$ and $q \in \mbox{vpos}(t_i)$.
	\end{itemize}
\item
Given $p \in \mbox{pos}(t)$, the subterm $t |_p$ of $t$ at position $p$ is defined as follows:
	\begin{itemize}
	\item
	$t |_\varepsilon = t$ for any $t$.
	\item
	$c(t_1, \ldots, t_n) | _{iq} = t_i | _{q}$ if $c \in DC^n$, $1 \leq i \leq n$ and $q \in \mbox{pos}(t_i)$.
	\end{itemize}
\item
$p \in \mbox{pos}(t)$ is called a {\em variable position} of $t$  iff $t | p$ is a variable, and a {\em rigid position} of $t$ otherwise.
We define $\mbox{vpos}(t) = \{p \in \mbox{pos}(t) \mid p \mbox{ is a variable position}\}$ and  	
$\mbox{rpos}(t) = \{p \in \mbox{pos}(t) \mid p \mbox{ is a rigid position}\}$.
\item
Given $p \in \mbox{vpos}(t)$ and another term $s$, the result of replacing $s$ for the subterm of $t$ at position $p$ is 
noted as $t[s]_p$. See e.g. \cite{BN98} for a recursive definition. \mathproofbox
\end{enumerate}
\end{defn}


As usual, {\em substitutions}  are defined as mappings $\sigma : \Var \to \TTerm$ assigning terms to variables.
 The set of all substitutions is noted as $\Sust$. 
 Substitutions are extended to act over terms and other syntactic objects $o$ in the natural way. 
 By convention, the result of replacing each variable $X$ occurring in $o$ by $\sigma(X)$ is noted as $o\sigma$.
 Other common notions concerning substitutions are defined as follows:

\begin{defn}[Notions concerning Substitutions]
\label{def:sub} 
 \begin{itemize}
 \item 
 The {\em composition} $\sigma\sigma'$ of two substitutions is such that $o(\sigma\sigma')$ equals $(o\sigma)\sigma'$.
  \item
  For a given $\sigma \in \Sust$, the {\em domain} $\domset{\sigma}$ is defined as $\{X \in \Var \mid X\sigma \neq X\}$,
  and the {\em variable range} $\ranset{\sigma}$ is defined as $\bigcup_{X \in \domset{\sigma}} \varset{X\sigma}$.
  \item
  A substitution $\sigma$ is called {\em ground} iff $X \sigma$ is a ground term for all $X \in \domset{\sigma}$.
  The set of all ground substitutions is noted $\GSust$.
  \item
  A substitution $\sigma$ is called {\em finite} iff $\domset{\sigma}$ is a finite set, say $\{X_1, \ldots, X_k\}$.
  In this case, $\sigma$ can be represented as the {\em set of bindings} $\{X_1 \mapsto t_1, \ldots, X_k \mapsto t_k\}$,
  where $t_i = X_i\sigma$ for all $1 \leq i \leq k$. 
  \item
  Assume two substitutions $\sigma$, $\sigma'$, a set of variables $\X$ and a variable $Y$. 
  The notation $\sigma =_{\X} \sigma'$ means that $X\sigma = X\sigma'$ holds for all variables $X \in \X$.
  We also write $\sigma =_{\setminus \X} \sigma'$ and $\sigma =_{\setminus Y} \sigma'$ to abbreviate
  $\sigma =_{\Var \setminus \X} \sigma'$ and $\sigma =_{\Var \setminus \{Y\}} \sigma'$, respectively. \mathproofbox
  \end{itemize}
\end{defn}

\subsubsection{Constraint domains, constraints and their solutions}
\label{sec:cdoms:domains}

We are now prepared to present constraint  domains as mathematical structures providing a set of basic values along with an terms and an interpretation of primitive predicates\footnote{As we will see in Section \ref{sec:sqclp}, the interpretation of defined predicate symbols is program dependent.}.
The formal  definition is as follows:

\begin{defn}[Constraint Domains]
\label{def:cd}
A {\em Constraint Domain} of signature $\Sigma$ is any relational structure of the form 
$\cdom = \langle C, \{p^\cdom \mid p \in PP\}\rangle$ such that:
\begin{enumerate}
\item 
The carrier set $C$ is $\TGTerm$ for a certain set $B$ of {\em basic values}. 
When convenient, we note $B$ and $C$ as $B_\cdom$ and $C_\cdom$, respectively.
\item 
$p^{\cdom} : C^n \to \{0,1\}$, written simply as $p^{\cdom} \in \{0,1\}$ in the case $n = 0$, 
is called the {\em interpretation} of $p$ in $\cdom$. 
A ground primitive atom $p(\ntup{t}{n})$ is {\em true} in $\cdom$ iff $p^{\cdom}(\ntup{t}{n})=1$;
otherwise $p(\ntup{t}{n})$ is {\em false} in $\cdom$.  \mathproofbox
\end{enumerate}
\end{defn}
 

For the examples in this paper we will use a constraint domain $\rdom$ which allows to work with arithmetic constraints over the real numbers, as formalized in Definition \ref{def:rdom} below.

\begin{defn}[The Real Constraint Domain $\rdom$]
\label{def:rdom}
The constraint  domain $\rdom$ is defined to include:
\begin{itemize}
\item
The set of basic values $B_\rdom = \REAL$. 
Note that $C_\rdom$ includes ground terms built from real values and data constructors, in addition to real numbers.
\item
Primitive predicates for encoding the usual arithmetic operations over $\REAL$.
For instance, the addition operation $+$ over $\REAL$  is encoded by a ternary primitive predicate $op_+$ 
such that, for any $t_1, t_2 \in C_\rdom$, $op_{+}(t_1,t_2,t)$  is true in $\rdom$ iff $t_1, t_2, t \in \REAL$ and $t_1 + t_2 = t$.
In particular, $op_{+}(t_1,t_2,t)$ is false in $\rdom$ if either $t_1$ or $t_2$ includes data constructors.
The primitive predicates encoding other arithmetic operations such as $\times$ and $-$ are defined analogously.
\item
Primitive predicates for encoding the usual inequality  relations over $\REAL$.
For instance, the ordering $\leq$ over $\REAL$  is encoded by a binary primitive predicate $cp_{\leq}$
such that, for any $t_1, t_2 \in C_\rdom$,  $cp_{\leq}(t_1,t_2)$ is true in $\rdom$ iff $t_1, t_2, t \in \REAL$ and $t_1 \leq t_2$.
In particular,  $cp_{\leq}(t_1,t_2)$ is false in $\rdom$ if either $t_1$ or $t_2$ includes data constructors.
The primitive predicates encoding the other inequality relations, namely $>$, $\geq$ and $>$, are defined analogously. \mathproofbox
\end{itemize} 
\end{defn}

The domain $\rdom$ is well known as the basis of the $\clp{\rdom}$ language and system \cite{JMSY92}. 
Some presentations of $\rdom$ known in the literature represent the arithmetical operations by using  primitive functions instead of primitive predicates. In this paper we have chosen to work in a purely relational framework in order to simplify some technicalities without loss of real expressivity. 

Other useful instances of constraint domains are known in the Constraint Programming  literature; see e.g. \cite{JM94,LRV07}.
In particular, the {\em Herbrand} domain $\mathcal{H}$ is intended to work just with equality constraints, while $\mathcal{FD}$ allows to work with constraints involving {\em finite domain variables}.
The set of basic values of $\mathcal{FD}$ is $\mathbb{Z}$.
There are also known techniques for combining several given constraint domains into a more expressive one; see e.g. the {\em coordination domains} defined in \cite{EHR+09}.

The following definition introduces constraints over a given domain:


\begin{defn}[Constraints and Their Solutions]
\label{def:consol}
Given a constraint domain $\cdom$ of signature $\Sigma$:
\begin{enumerate}
\item
{\em Atomic constraints} over $\cdom$ are of two kinds: 
primitive atoms $p(\ntup{t}{n})$ and equations $t_1$ {\sf ==} $t_2$.
\item
 {\em Compound constraints} are built from atomic constraints using logical conjunction $\land$, 
existential quantification $\exists$, and sometimes other logical operations.
Constraints of the form $\exists X_1 \ldots \exists X_n(B_1 \land \ldots \land B_m)$ --where $B_j ~ (1 \leq j \leq m)$ are atomic--
are called {\em existential}.
The set of all constraints over $\cdom$ is noted $\Con{\cdom}$.
 \item
 Substitutions $\sigma : \Var \to \TTerm$ where $\TTerm$ is built using the set $B_\cdom$ of basic values
 are called $\cdom$-{\em substitutions}. 
 Ground substitutions $\eta \in \GSust$ are called {\em variable valuations}.
 The set of all possible variable  valuations  is noted $\mbox{Val}_\cdom$.
\item
{\em The solution set} $\Solc{\pi}$ of a constraint $\pi \in \Con{\cdom}$ is defined by recursion on $\pi$'s syntactic structure as follows:
    \begin{itemize}
    \item
    If $\pi$ is a primitive atom $p(\ntup{t}{n})$, 
    then $\Solc{\pi}$ is the set of all $\eta \in \mbox{Val}_\cdom$ such that 
    $p(\ntup{t}{n})\eta$ is ground and true in $\cdom$.
    \item
    If $\pi$ is an equation $t_1$ {\sf ==} $t_2$,
    then $\Solc{\pi}$ is the set of all $\eta \in \mbox{Val}_\cdom$ such that 
    $t_1\eta$ and $t_2\eta$ are ground and syntactically identical terms.
    \item
    If $\pi$ is $\pi_1 \land \pi_2$ then $\Solc{\pi} = \Solc{\pi_1} \cap \Solc{\pi_2}$.
    \item
    If $\pi$ is $\exists X \pi'$ then $\Solc{\pi}$ is the set of all $\eta \in \mbox{Val}_\cdom$ such that
    $\eta' \in \Solc{\pi'}$ holds for some $\eta' \in \mbox{Val}_\cdom$ verifying $\eta =_{\setminus X} \eta'$.
    \end{itemize}
$\pi$ is called {\em satisfiable} over $\cdom$ iff $\Solc{\pi} \neq \emptyset$,
and $\pi$ is called {\em unsatisfiable} over $\cdom$ iff $\Solc{\pi} = \emptyset$.
\item
{\em The solution set} $\Solc{\Pi}$ of a set $\Pi$ of constraints is defined as $\bigcap_{\pi \in \Pi} \Solc{\pi}$.
In this way, finite sets of constraints are interpreted as the conjunction of their members.
$\Pi$ is called {\em satisfiable} over $\cdom$ iff $\Solc{\Pi} \neq \emptyset$,
and $\Pi$ is called {\em unsatisfiable} over $\cdom$ iff $\Solc{\Pi} = \emptyset$.
\item
A constraint $\pi$ {\em is entailed} by a set of constraints $\Pi$ 
(in symbols,  $\Pi \model{\cdom} \pi$) iff $\Solc{\Pi} \subseteq \Solc{\pi}$.  \mathproofbox
\end{enumerate}
\end{defn}


The following example illustrates the previous definition:

\begin{exmp}[Constraint solutions and constraint entailment over $\rdom$]
\label{exmp:rconsol}
Consider the set of constraints 
$\Pi = \{cp_{\geq}(A,3.0),\, op_{+}(A,A,X), op_{\times}(2.0,A,Y)\} \subseteq \Con{\rdom}$. Then:
\begin{enumerate}
\item
For any valuation $\eta \in \mbox{Val}_\rdom$:
$\eta \in \Sol{\rdom}{\Pi}$ holds iff $\eta(A)$, $\eta(X)$ and $\eta(Y)$ are real numbers $a, x, y \in \REAL$ such that
$a \geq 3.0$, $a+a = x$ and $2.0\times a = y$.
\item
Due to the previous item, the following $\rdom$-entailments are valid:
  \begin{enumerate}
  \item
  $\Pi \model{\rdom} cp_{>}(X,5.5)$, because $\Sol{\rdom}{\Pi} \subseteq \Sol{\rdom}{cp_{>}(X,5.5)}$.
  \item
  $\Pi \model{\rdom} X == Y$, because $\Sol{\rdom}{\Pi} \subseteq \Sol{\rdom}{$X == Y$}$.
  \item
  $\Pi \model{\rdom} c(X) == c(Y)$, because $\Sol{\rdom}{\Pi} \subseteq \Sol{\rdom}{$c(X) == c(Y)$}$.
  Here we assume $c \in DC^1$.
  \item
  $\Pi \model{\rdom} [X,Y] == [Y,X]$, because $\Sol{\rdom}{\Pi} \subseteq \Sol{\rdom}{$[X,Y] == [Y,X]$}$.
  Here, the terms $[X,Y]$ and $[Y,X]$ are built from variables and list constructors.  \mathproofbox
  \end{enumerate} 
\end{enumerate}
\end{exmp}


The next technical result will be useful later on:

\begin{lem} [Substitution Lemma]
\label{lem:sl}
Assume a set of constraints $\Pi \subseteq \Con{\cdom}$ and a $\cdom$-substitution $\sigma$. Then:
\begin{enumerate}
\item
For any valuation $\eta \in \mbox{Val}_\cdom$: $\eta \in \Solc{\Pi\sigma} \iff \sigma\eta \in \Solc{\Pi}$.
\item
For any constraint $\pi \in \Con{\cdom}$: $\Pi \model{\cdom} \pi \Longrightarrow \Pi\sigma \model{\cdom} \pi\sigma$.
\end{enumerate}
\end{lem}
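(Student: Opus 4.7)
The plan is to establish part (1) first by structural induction on the constraint $\pi$, reducing the case of a set $\Pi$ to the case of a single constraint via the definition $\Solc{\Pi} = \bigcap_{\pi \in \Pi} \Solc{\pi}$. The backbone of the argument is the identity $t(\sigma\eta) = (t\sigma)\eta$, which holds for any term $t$ by the definition of composition, and its straightforward extension to primitive atoms and equations.

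For the base cases, if $\pi$ is a primitive atom $p(\ntup{t}{n})$, then $\eta \in \Solc{\pi\sigma}$ means that $p(\ntup{(t\sigma)\eta}{n})$ is ground and true in $\cdom$; by the term identity above this equals $p(\ntup{t(\sigma\eta)}{n})$, which is exactly the condition for $\sigma\eta \in \Solc{\pi}$. The equational case is identical in spirit. The conjunctive case $\pi = \pi_1 \wedge \pi_2$ is immediate from the two induction hypotheses together with the definition $\Solc{\pi_1 \wedge \pi_2} = \Solc{\pi_1} \cap \Solc{\pi_2}$.

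The main obstacle is the existential case $\pi = \exists X\, \pi'$, where one must handle variable capture. Following the standard convention, I would first rename $X$ so that $X \notin \domset{\sigma} \cup \ranset{\sigma}$, which makes $(\exists X\, \pi')\sigma$ equal to $\exists X\, (\pi'\sigma)$. For the forward direction, given $\eta' =_{\setminus X} \eta$ with $\eta' \in \Solc{\pi'\sigma}$, the induction hypothesis yields $\sigma\eta' \in \Solc{\pi'}$, and one checks that $\sigma\eta' =_{\setminus X} \sigma\eta$: for $Y \in \domset{\sigma}$ the value $(Y\sigma)\mu$ is unchanged between $\mu = \eta$ and $\mu = \eta'$ because $Y\sigma$ has no variable $X$, while for $Y \notin \domset{\sigma}$ with $Y \neq X$ one reduces to $\eta(Y) = \eta'(Y)$. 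For the backward direction, given $\eta'' =_{\setminus X} \sigma\eta$ with $\eta'' \in \Solc{\pi'}$, define $\eta'$ to agree with $\eta$ off $X$ and set $\eta'(X) = \eta''(X)$; a case analysis on whether $Y \in \domset{\sigma}$ shows $\sigma\eta'$ agrees with $\eta''$ on $\varset{\pi'}$, so the induction hypothesis gives $\eta' \in \Solc{\pi'\sigma}$ and hence $\eta \in \Solc{(\exists X \pi')\sigma}$. A small auxiliary observation that membership in $\Solc{\pi'}$ depends only on the restriction of a valuation to $\varset{\pi'}$ (easily shown by structural induction) legitimizes this last step.

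Part (2) then drops out almost for free from part (1): assume $\Solc{\Pi} \subseteq \Solc{\pi}$ and take any $\eta \in \Solc{\Pi\sigma}$. By part (1) applied to $\Pi$, $\sigma\eta \in \Solc{\Pi}$; by the hypothesis $\sigma\eta \in \Solc{\pi}$; and by part (1) applied to $\pi$ we conclude $\eta \in \Solc{\pi\sigma}$, proving $\Pi\sigma \model{\cdom} \pi\sigma$. The only delicate point throughout is the bound-variable renaming in the existential case; once that convention is made explicit, the rest is bookkeeping.
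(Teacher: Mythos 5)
Your proof follows essentially the same route as the paper: the single-constraint equivalence $\eta \in \Solc{\pi\sigma} \iff \sigma\eta \in \Solc{\pi}$ proved by structural induction on $\pi$, extension to sets via the intersection definition, and part (2) derived from part (1) by exactly the same three-step argument. The only difference is that you spell out the induction (in particular the capture-avoiding renaming in the existential case) that the paper dismisses as ``easily proved,'' and your details are correct.
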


\begin{proof*}
Let us give a separate reasoning for each item.
\begin{enumerate}
\item
The following statement holds for any constraint $\pi  \in \Con{\cdom}$: 
$$(\star)\,\,\, \eta \in \Solc{\pi\sigma} \iff \sigma\eta \in \Solc{\pi}$$
In fact, $(\star)$ can can be easily proved reasoning by induction on the syntactic structure of $\pi$. 
Now, using $(\star)$ we can reason as follows:
$$
\begin{array}{c}
\eta \in \Solc{\Pi\sigma} \iff \eta \in \Solc{\pi\sigma} \mbox{ for all } \pi \in \Pi \iff_{\hspace{-2mm}(\star)} \\
\sigma\eta \in \Solc{\pi} \mbox{ for all } \pi \in \Pi \iff \sigma\eta \in \Solc{\Pi} \\
\end{array}
$$
\item
Assume $\Pi \model{\cdom} \pi$. 
For the sake of proving $\Pi\sigma \model{\cdom} \pi\sigma$, also assume an arbitrary $\eta \in \Solc{\Pi\sigma}$.
Then we get $\sigma\eta \in \Solc{\Pi}$ because of item 1 
and $\sigma\eta \in \Solc{\pi}$ due to the assumption $\Pi \model{\cdom} \pi$,
which implies $\eta \in \Solc{\pi\sigma}$ again because of item 1.
Since $\eta$ is arbitrary, we have proved $\Solc{\Pi\sigma} \subseteq \Solc{\pi\sigma}$,
i.e. $\Pi\sigma \model{\cdom} \pi\sigma$. \mathproofbox
\end{enumerate} 
\end{proof*} 

\subsubsection{Term equivalence w.r.t. a given constraint set}
\label{sec:cdoms:Pi-equiv}

Given two terms $t$, $s$ we will use the notation $t  \approx_{\Pi} s$ (read as $t$ and $s$ are $\Pi$-{\em equivalent})
as an abbreviation of $\Pi \model{\cdom} t == s$, assuming that the constraint domain $\cdom$ and the constraint set $\Pi \subseteq \Con{\cdom}$ are known. For the sake of simplicity, $\cdom$ is not made explicit in the $\approx_{\Pi}$ notation.
In this subsection we present some properties related to $\approx_{\Pi}$ which will be needed later.
First, we prove that $\approx_{\Pi}$ is an equivalence relation with a natural characterization.

\begin{lem} [$\Pi$-Equivalence Lemma]
\label{lem:Pi-equiv}
\begin{enumerate}
\item
$\approx_{\Pi}$ is an equivalence relation over $\TTerm$. 
\item
For any given terms $t$ and $s$ the following two statements are equivalent:
      \begin{enumerate}
      \item
      $t  \approx_{\Pi} s$.
      \item
     For any common position $p \in \mbox{pos}(t) \cap \mbox{pos}(s)$ some of the cases below holds:
	\begin{enumerate}
	\item
	$t \circ p$ or $s \circ p$ is a variable, and moreover $t|_p \approx_{\Pi} s|_p$.
	\item
	$t \circ p = s \circ p = u$ for some $u \in B_{\cdom}$.
	\item
	$t \circ p = s \circ p = c$ for some $n \in \mathbb{N}$ and some $c \in DC^n$.
	\end{enumerate}
     \end{enumerate}	
\item
$\approx_{\Pi}$ boils down to the syntactic equality relation $=$ when $\Pi$ is the empty set.
\end{enumerate}
\end{lem}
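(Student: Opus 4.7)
Part 1 is a routine check of the three equivalence axioms directly from the definition. Reflexivity follows because $\Solc{t == t} = \mbox{Val}_\cdom$ (every valuation $\eta$ renders $t\eta$ ground and syntactically identical to itself); symmetry follows because the defining condition for $\eta \in \Solc{t == s}$ is plainly symmetric in the two sides; and transitivity is the point-wise concatenation of the identities $t\eta = s\eta$ and $s\eta = r\eta$ for every $\eta \in \Solc{\Pi}$.

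For Part 2, my technical backbone is the pair of identities $(t\eta)|_p = (t|_p)\eta$ for every $p \in \mbox{pos}(t)$, and $(t\eta) \circ p = t \circ p$ whenever additionally $p \in \mbox{rpos}(t)$; both follow by a routine induction on $p$. For the forward direction (a)~$\Rightarrow$~(b), fix a common position $p$. If one of $t \circ p$, $s \circ p$ is a variable, then $t\eta = s\eta$ for $\eta \in \Solc{\Pi}$ combined with the backbone identity yields $(t|_p)\eta = (s|_p)\eta$, which is exactly $t|_p \approx_\Pi s|_p$; this gives case~(i). Otherwise both symbols are rigid, and via the rigid case of the backbone identity they can be read off as the symbol at position $p$ of the common ground term $t\eta = s\eta$, so they must coincide; not being variables, they must then be either a shared basic value (case~(ii)) or a shared constructor (case~(iii)). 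The backward direction (b)~$\Rightarrow$~(a) proceeds by structural induction on $t$: if $t$ is a variable, condition~(i) at $\varepsilon$ already states $t \approx_\Pi s$; if $t = u$ is a basic value, condition~(ii) forces $s = u$ (or condition~(i) forces $s$ to be a variable, in which case the hypothesis delivers the equivalence directly); and if $t = c(\ntup{t}{n})$, condition~(iii) forces $s = c(\ntup{s}{n})$, after which the assumed position-wise property restricts to each argument pair $(t_i, s_i)$, and induction delivers $t_i \approx_\Pi s_i$, which combine to $t\eta = s\eta$ on every $\eta \in \Solc{\Pi}$.

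For Part 3, observe that $\Solc{\emptyset} = \mbox{Val}_\cdom$, so $t \approx_\emptyset s$ says every valuation identifies $t\eta$ with $s\eta$. The direction $t = s \Rightarrow t \approx_\emptyset s$ is reflexivity from Part~1. For the converse I would build a single separating valuation $\eta_0$ that sends the finitely many variables of $\varset{t} \cup \varset{s}$ to pairwise distinct ground terms, each chosen fresh with respect to the constructors already appearing in $t$ and $s$; the infinite supply of constructors in $DC$ makes such a choice possible. Under $\eta_0$ the positions of each variable inside $t$ can be unambiguously located in $t\eta_0$, so $t\eta_0 = s\eta_0$ forces $t = s$. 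The step I would watch most carefully is the forward direction of Part~2 in the degenerate case $\Solc{\Pi} = \emptyset$: there $\approx_\Pi$ collapses to the universal relation while the position-wise characterization can fail (for instance, for two distinct basic values), so I expect the lemma to tacitly rely on satisfiability of $\Pi$ or to dispatch this boundary separately.
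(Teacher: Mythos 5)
Your proof is correct and follows essentially the same route as the paper's, which only sketches it: item 1 is declared obvious, item 2 is the equivalence between the valuation characterization ($t\eta = s\eta$ for all $\eta \in \Solc{\Pi}$) and condition 2(b), proved by induction on $\Vert t \Vert + \Vert s \Vert$, and item 3 uses $\Solc{\emptyset} = \mbox{Val}_\cdom$; you merely fill in the details (the position identities, the structural induction, the separating valuation). The boundary case you flag is genuine: when $\Pi$ is unsatisfiable the direction (a) $\Rightarrow$ (b) of item 2 fails (take two distinct basic values), and both the paper's statement and its proof tacitly assume satisfiability of $\Pi$, which holds in the lemma's later uses.
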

\begin{proof*}
We give a separate reasoning for each item.
\begin{enumerate}
\item
Checking that $\approx_{\Pi}$ satisfies the axioms of an equivalence relation
(i.e. {\em reflexivity}, {\em symmetry} and {\em transitivity}) is quite obvious.
\item
Due to Definition \ref{def:consol},
$t  \approx_{\Pi} s$ holds iff $t\eta$ and $s\eta$ are identical ground terms for each $\eta \in \Solc{\Pi}$.
This statement can be proved equivalent to condition 2.(b) reasoning by induction on $\Vert t \Vert + \Vert s \Vert$.
\item
Note that $t  \approx_{\emptyset} s$ holds iff $t\eta$ and $s\eta$ 
are identical ground terms for each $\eta \in \Solc{\emptyset} = \mbox{Val}_\cdom$.
This can happen iff $t$ and $s$ are syntactically identical. \mathproofbox
\end{enumerate}
\end{proof*} 

Since the set $\Var$ of all variables is countably infinite, we can assume an arbitrarily fixed bijective mapping $\mbox{ord} : \Var \to \mathbb{N}$. By convention, $\mbox{ord}(X)$ is called the ordinal number of $X$.
The notions defined below rely on this convention.

\begin{defn}[$\Pi$-Canonical Variables and Terms]
\label{def:cterms}
\begin{enumerate}
\item
A variable $X$ is called {\em $\Pi$-canonical} iff there is no other variable $X'$ such that $X \approx_{\Pi} X'$ and $\mbox{ord}(X') < \mbox{ord}(X)$.
\item
For each variable $X$ its {\em $\Pi$-canonical form} $\mbox{cf}_\Pi(X)$ is defined as the member of the set 
$\{X' \in \Var \mid X \approx_{\Pi} X'\}$ with the least ordinal number.
\item 
A term $t$ is called  {\em $\Pi$-canonical} iff all the variables occurring in $t$ are $\Pi$-canonical.
\item
For each term $t$ its {\em $\Pi$-canonical form} $\mbox{cf}_\Pi(t)$ is defined as the result of replacing $\mbox{cf}_\Pi(X)$
for each variable $X$ occurring in $t$. \mathproofbox
\end{enumerate}
\end{defn}

The following lemma states some obvious properties of terms in canonical form:

\begin{lem} [$\Pi$-Canonicity Lemma]
\label{lem:canon}
For each term $t$, $\mbox{cf}_\Pi(t)$ is $\Pi$-canonical and such that $t \approx{_\Pi} \mbox{ cf}_\Pi(t)$. 
Moreover, $t$ and $\mbox{cf}_\Pi(t)$ have the same positions and structure, except that each variable $X$ occurring 
at some position $p \in \mbox{vpos}(t)$ is replaced by an occurrence of $\mbox{cf}_\Pi(X)$ at the same position p in $\mbox{cf}_\Pi(t)$.
\end{lem}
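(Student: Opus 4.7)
The plan is to treat the three claims in the order they are made, but in practice statements (1) and (3) are immediate from Definition 2.7 while statement (2) is the only one that requires real work, and even that only amounts to plugging into the characterisation supplied by Lemma 2.2.

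I would first dispatch statement (3) by unfolding the recursive definition of $\mbox{cf}_\Pi(t)$: since by construction $\mbox{cf}_\Pi$ replaces each variable $X$ occurring in $t$ by $\mbox{cf}_\Pi(X)$ and does nothing else, the tree representation of $\mbox{cf}_\Pi(t)$ differs from that of $t$ exactly at the positions in $\mbox{vpos}(t)$, and a trivial induction on $\Vert t \Vert$ shows $\mbox{pos}(\mbox{cf}_\Pi(t)) = \mbox{pos}(t)$, $\mbox{vpos}(\mbox{cf}_\Pi(t)) = \mbox{vpos}(t)$ and $\mbox{rpos}(\mbox{cf}_\Pi(t)) = \mbox{rpos}(t)$, with identical symbols at rigid positions. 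Statement (1) then follows at once: every variable occurring in $\mbox{cf}_\Pi(t)$ is of the form $\mbox{cf}_\Pi(X)$, and $\mbox{cf}_\Pi(X)$ is $\Pi$-canonical because it is defined to be the variable of least ordinal in the equivalence class $\{X' \in \Var \mid X \approx_\Pi X'\}$, so no $\Pi$-equivalent variable can have smaller ordinal.

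For statement (2), I would reduce $t \approx_\Pi \mbox{cf}_\Pi(t)$ to the pointwise characterisation (2.b) of Lemma 2.2. By (3) the two terms have exactly the same set of positions, so every $p \in \mbox{pos}(t) \cap \mbox{pos}(\mbox{cf}_\Pi(t))$ is either rigid in both (in which case $t \circ p = \mbox{cf}_\Pi(t) \circ p$ is the same basic value or data constructor, giving case (ii) or (iii)), or belongs to $\mbox{vpos}(t) = \mbox{vpos}(\mbox{cf}_\Pi(t))$. In the latter case $t|_p$ is some variable $X$ and $\mbox{cf}_\Pi(t)|_p = \mbox{cf}_\Pi(X)$ is also a variable, so case (i) applies provided we can show $X \approx_\Pi \mbox{cf}_\Pi(X)$; but this is immediate from Definition 2.7, since $\mbox{cf}_\Pi(X)$ was picked from the $\approx_\Pi$-class of $X$.

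The only point that deserves care, and which I would call out as the main (mild) obstacle, is ensuring that Lemma 2.2(2) is applicable in the form needed: its condition is symmetric in $t$ and $s$ and ``some of the cases'' includes the case where one side is a variable, so the situation where both sides are variables is covered and gives $t|_p \approx_\Pi s|_p$ as a subgoal, which is exactly $X \approx_\Pi \mbox{cf}_\Pi(X)$ here. Once this matching is observed, the whole lemma follows without any further induction.
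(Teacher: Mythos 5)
Your proof is correct and follows essentially the same route as the paper, which simply states that the lemma is a straightforward consequence of the construction of $\mbox{cf}_\Pi(t)$ together with the $\Pi$-Equivalence Lemma \ref{lem:Pi-equiv}; you merely spell out this argument in detail (structural preservation, canonicity of $\mbox{cf}_\Pi(X)$, and the pointwise characterisation from Lemma \ref{lem:Pi-equiv}(2)).
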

\begin{proof}
Straightforward consequence of the construction of $\mbox{cf}_\Pi(t)$ from $t$ and the $\Pi$-Equiva\-lence Lemma \ref{lem:Pi-equiv}.
\end{proof} 

Given two terms $t$ and $s$, the term built from $t$ by replacing within $t$ each variable $X$ occurring at some position $p \in \mbox{vpos}(t) \cap \mbox{pos}(s)$ by the subterm $s|_p$ is called the {\em extension of $t$ w.r.t. to $s$} and noted as $t \ll s$ (or equivalently, $s \gg t$). A more precise definition of this notion and some related properties are given below.

\begin{defn}[Term extension]\label{defn:term-extension}
Given any two terms $t$ and $s$, the {\em extension of $t$ w.r.t. $s$} is defined by recursion on the syntactical structure of $t$:
\begin{itemize}
  \item 
   $X \ll s = s$ for each variable $X \in \Var$.
  \item
  $u \ll s  = u$ for each basic value $u \in B$.
  \item
 $c(t_1, \ldots, t_n) \ll s  = c(t_1 \ll s_1, \ldots, t_n \ll s_n)$ if $c \in DC^n$ 
 and there is some $c' \in DC^n$ such that $s = c'(s_1, \ldots, s_n)$.
 \item 
 $c(t_1, \ldots, t_n) \ll s = c(t_1, \ldots, t_n)$ if $c \in DC^n$ 
 and there is no $c' \in DC^n$ such that $s = c'(s_1, \ldots, s_n)$.
  \mathproofbox
\end{itemize}
\end{defn}

\begin{lem} [Extension Lemma]
\label{lem:ext}
The term extension operation $\ll$ enjoys the two following properties:
\begin{enumerate}
\item
{\em Symmetrical Extension Property:} \\
Let  $t'$, $t''$   be $\Pi$-canonical terms such that $t' \approx_{\Pi} t''$. 
Under this assumption $(t' \ll t'') = (t'' \ll t')$.
\item
{\em $\Pi$-Equivalence Extension Property:} \\
Let the terms $t$, $s$ be such that for any $p \in \mbox{pos}(t)$ with $t|_p = X \in \Var$ one has 
$p \in \mbox{pos}(s)$ and $X \approx_\Pi s|_p$.
Under this assumption $t \approx_{\Pi} (t \ll s)$.
\end{enumerate}
\end{lem}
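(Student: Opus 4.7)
The plan is to prove both items by structural induction, using Lemma \ref{lem:Pi-equiv} to perform a case analysis on the symbols at position $\varepsilon$ in the terms involved, and directly reading off Definition \ref{defn:term-extension} to compute extensions. A preliminary observation I use throughout is that subterms of a $\Pi$-canonical term are themselves $\Pi$-canonical, since the defining property of canonicity is stated per individual variable (least ordinal in its $\approx_\Pi$-class). Note also that item 2 does not mention canonicity, which is a cue that the proof of item 2 does not need it.

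For item 1, I induct on $\Vert t'\Vert+\Vert t''\Vert$ and branch on whether the symbols at position $\varepsilon$ in $t'$ and $t''$ are variables, basic values, or constructors. When either is a variable --- say $t'=X$ --- the first clause of Definition \ref{defn:term-extension} gives $t' \ll t'' = t''$; dually, $t'' \ll X$ also reduces to $t''$ in every subcase except when $t''=Y$ is itself a variable, and in that sole subcase canonicity of both $X$ and $Y$ together with $X \approx_\Pi Y$ forces $X=Y$, so both sides coincide. When neither root symbol is a variable, Lemma \ref{lem:Pi-equiv} rules out the mixed basic-versus-constructor configuration, leaving either two identical basic values (immediate) or two constructor-headed terms $c(t_1,\ldots,t_n)$ and $c(s_1,\ldots,s_n)$ with matching head and arity. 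Applying Lemma \ref{lem:Pi-equiv} to the deeper positions yields $t_i \approx_\Pi s_i$ for each $i$; the inductive hypothesis then delivers $t_i \ll s_i = s_i \ll t_i$, and the third clause of Definition \ref{defn:term-extension} combines these componentwise to give $t' \ll t'' = t'' \ll t'$.

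For item 2, I induct on $\Vert t\Vert$. If $t=X$, the hypothesis at $p=\varepsilon$ provides $X \approx_\Pi s$, which matches $t \ll s = s$. If $t=u$ is a basic value, $t$ has no variable positions, so the hypothesis is vacuous and $t \ll s = u = t$ trivially. For $t = c(t_1,\ldots,t_n)$, I split on whether $s$ has an $n$-ary constructor at its root. If not, Definition \ref{defn:term-extension} gives $t \ll s = t$ and reflexivity of $\approx_\Pi$ closes the case. If $s = c'(s_1,\ldots,s_n)$, the hypothesis on $(t,s)$ decomposes via $p = iq$ into a corresponding hypothesis on each pair $(t_i,s_i)$, the inductive hypothesis yields $t_i \approx_\Pi t_i \ll s_i$ for each $i$, and finally Lemma \ref{lem:Pi-equiv}(2) (case (iii) applied at position $\varepsilon$ together with the componentwise equivalences at deeper positions) lifts these to the desired $t \approx_\Pi c(t_1 \ll s_1,\ldots,t_n \ll s_n) = t \ll s$.

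The principal obstacle I expect is the asymmetric subcase in item 1 where $t'$ is a variable and $t''$ is not, since it forces one to verify that the fourth clause of Definition \ref{defn:term-extension} correctly yields $t'' \ll X = t''$ precisely because $X$ has no constructor at its root, so that both sides collapse to $t''$. Once that is handled, the remaining cases are routine applications of the $\Pi$-Equivalence Lemma to pin down the allowed root configurations and of the inductive hypothesis on strictly smaller subterms.
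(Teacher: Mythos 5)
Your proof is correct and follows essentially the same route as the paper's: complete induction on term size, case analysis on the root symbols driven by the $\Pi$-Equivalence Lemma \ref{lem:Pi-equiv}, and direct computation of $\ll$ from Definition \ref{defn:term-extension}, with canonicity used only in the variable--variable subcase of item 1 (and the "asymmetric" variable-versus-nonvariable case handled exactly as in the paper). Your explicit remark that subterms of $\Pi$-canonical terms are $\Pi$-canonical is a point the paper uses implicitly, but it does not change the argument.
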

\begin{proof*}[Proof of Symmetrical Extension Property]
Recall that the hypothesis $t' \approx_{\Pi} t''$ means that $\Pi \model{\cdom} t' == t''$.
We reason by complete induction on $\Vert t' \Vert + \Vert t'' \Vert$.
There are five possible cases:
	\begin{enumerate}
	\item $t' == t''$ is $c'(\ntup{t'}{n}) == c''(\ntup{t''}{n})$ for some $n \in \mathbb{N}$, $c', c'' \in DC^n$.
	In this case, the $\Pi$-Equivalence Lemma \ref{lem:Pi-equiv} ensures that $c' = c'' = c \in DC^n$ and 
	$t'_i \approx_{\Pi} t''_i$ holds for all $1 \leq i \leq n$.
	Clearly, the terms $t'_i$, $t''_i$ are $\Pi$-canonical. 
	Therefore, by induction hypothesis we can assume $(t'_i \ll t''_i) = (t''_i \ll t'_i)$ for all $1 \leq i \leq n$.
	Then, by definition of $\ll$ we get $t' \ll t'' = c(t'_1 \ll t''_1, \ldots, t'_n \ll t''_n) = c(t''_1 \ll t'_1, \ldots, t''_n \ll t'_n) = t'' \ll t'$ .
	\item $t' == t''$ is $u' == u''$ for some $u',  u'' \in B$.
	In this case, $u' \approx_{\Pi} u''$ implies that $u' = u'' = u \in B$,
	and by definition of $\ll$ we get $t' \ll t'' = t'' \ll t' = u \ll u = u$.
	\item $t' == t''$ is $X == Y$ for some $X, Y \in \Var$.
	In this case, $X \approx_{\Pi} Y$ and $X$, $Y$ $\Pi$-canonical implies that $X$, $Y$ must be identical variables.
	By definition of  $\ll$ we get $t' \ll t'' = t'' \ll t' = X \ll X = X$.
	\item $t' == t''$ is $X == t''$ with $X \in \Var$, $t'' \notin \Var$.
	In this case, by definition of $\ll$ we get $t' \ll t'' = X \ll t'' = t'' \mbox{ and } t'' \ll t' = t'' \ll X = t''$ .
	\item $t' == t''$ is $t' == Y$ with $Y \in \Var$, $t' \notin \Var$.
	In this case, by definition of $\ll$ we get $t' \ll t'' = t' \ll Y = t' = t'' \mbox{ and } t'' \ll t' = Y \ll t' = t'$ . \mathproofbox
	\end{enumerate}
\end{proof*}
\begin{proof*}[Proof of $\Pi$-Equivalence Extension Property]
Recall that the thesis $t \approx_{\Pi} (t \ll s)$ means that $\Pi \model{\cdom} t == (t \ll s)$.
We reason by complete induction on $\Vert t \Vert$.
There are four possible cases:
\begin{enumerate}
	\item $t$ is a variable $X \in \Var$.
	In this case, $X \ll s = s$ by definition of $\ll$, and $X \approx_{\Pi} s$ holds by hypothesis.
	\item $t$ is a basic value $u \in B$.
	In this case, $u \ll s = u$ by definition of $\ll$, and $u \approx_{\Pi} u$ holds trivially.
	\item $t$ is $c(\ntup{t}{n})$ for some $c \in DC^n$ and there is no $c' \in DC^n$ such that $s$ has the form $c'(\ntup{s}{n})$.
	In this case, $c(\ntup{t}{n}) \ll s = c(\ntup{t}{n})$ by definition of $\ll$, 
	and $c(\ntup{t}{n}) \approx_{\Pi}\, c(\ntup{t}{n})$ holds trivially.
	\item $t$ is $c(\ntup{t}{n})$ for some $c \in DC^n$ and $s$ is $c'(\ntup{s}{n})$ for some $c' \in DC^n$.
	In this case $c(\ntup{t}{n}) \ll c'(\ntup{s}{n}) = c(t_1 \ll s_1, \ldots, t_n \ll s_n)$ by definition of $\ll$.
	Moreover, the assumptions of the $\Pi$-Equivalent Extension Property hold for the smaller terms $t_i$, $s_i$ $(1 \leq i \leq n)$.
	By induction hypothesis we can assume $t_i \approx_{\Pi}\, (t_i \ll s_i)$ for all $1 \leq i \leq n$.
	Therefore, $c(\ntup{t}{n}) \approx_\Pi c(t_1 \ll s_1, \ldots, t_n \ll s_n)$ due to the $\Pi$-Equivalence Lemma \ref{lem:Pi-equiv}.
	\mathproofbox
	\end{enumerate}
\end{proof*} 

\subsection{Qualification Domains}
\label{sec:domains:qdoms}

The intended role of {\em Qualification Domains}  in an extended logic programming scheme SQCLP have been already explained in the Introduction.
They were originally introduced  in \cite{RR08} and their axiomatic definition was extended with axioms for an additional operation $\oslash$ in \cite{RR08b} in order to enable a particular implementation technique for program clauses with threshold conditions in their bodies.
The definition given below is again closer to the original one: $\oslash$ is omitted and the axioms of the operator $\circ$ are slightly refined.
 
\begin{defn}[Qualification Domains]
\label{def:qd}
A {\em Qualification Domain} is  any structure $\qdom = \langle D, \dleq, \bt, \tp, \circ \rangle$ verifying the following requirements:
\begin{enumerate}
  \item $D$, noted as $D_\qdom$ when convenient, is a set of elements called \emph{qualification values}.
    \item $\langle D, \dleq, \bt, \tp \rangle$ is a lattice with extreme points $\bt$ (called {\em infimum} or {\em bottom} element) and $\tp$ (called {\em maximum} or {\em top} element) w.r.t. the partial ordering $\dleq$, called {\em qualification ordering}. For given elements  $d, e \in D$, we  write $d \sqcap e$ for the {\em greatest lower bound} ($glb$) of $d$ and $e$, and $d \sqcup e$ for the {\em least upper bound} ($lub$) of $d$ and $e$. We also write $d \dlt e$ as abbreviation for $d \dleq e \land d \neq e$.
    \item $\circ : D \times D \rightarrow D$, called {\em attenuation operation}, verifies the following axioms:
        \begin{enumerate}
            \item $\circ$ is associative, commutative and monotonic w.r.t. $\dleq$.
            \item $\forall d \in D : d \circ \tp = d$.
            \item $\forall d \in D : d \circ \bt = \bt$.        
            \item $\forall d, e \in D  : d \circ e \dleq e$. 
            \item $\forall d, e_1, e_2 \in D : d \circ (e_1 \sqcap e_2) = (d \circ e_1) \sqcap (d \circ e_2)$. \mathproofbox
        \end{enumerate} 
\end{enumerate}
\end{defn}


Actually, some of the properties of $\circ$ postulated as axioms in the previous definition are redundant.\footnote{The authors are thankful to G. Gerla for pointing out this fact.} More precisely:
 
\begin{prop}[Redundant postulates of Qualification Domains]
\label{prop:qd-properties}
The properties (3)(c) and (3)(d) are redundant and can be derived from the other axioms in Definition \ref{def:qd}.
\end{prop}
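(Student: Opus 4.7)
The plan is to derive (3)(d) first from the remaining axioms, and then obtain (3)(c) as an immediate consequence of (3)(d) together with the fact that $\bt$ is the bottom element of the lattice.

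For (3)(d), the key observation is that since $\tp$ is the top of the lattice, $e \dleq \tp$ for every $e \in D$, so $e = e \sqcap \tp$. Applying the distributivity axiom (3)(e) and the neutrality axiom (3)(b) then yields
\[
d \circ e \;=\; d \circ (e \sqcap \tp) \;=\; (d \circ e) \sqcap (d \circ \tp) \;=\; (d \circ e) \sqcap d,
\]
which by the characterization of $\sqcap$ gives $d \circ e \dleq d$. Invoking commutativity from (3)(a) on the roles of $d$ and $e$, I likewise get $d \circ e = e \circ d \dleq e$, which is exactly (3)(d).

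For (3)(c), specialize the just-derived (3)(d) to $e = \bt$: this gives $d \circ \bt \dleq \bt$. Combined with the lattice fact $\bt \dleq d \circ \bt$ (since $\bt$ is the infimum), antisymmetry of $\dleq$ forces $d \circ \bt = \bt$.

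I do not anticipate any real obstacle; the only nontrivial step is recognizing that $e$ can be rewritten as $e \sqcap \tp$ in order to trigger axiom (3)(e). Once that trick is spotted, the rest is a direct calculation followed by a one-line specialization. No appeal to the associativity or monotonicity parts of (3)(a) is required.
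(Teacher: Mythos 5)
Your proof is correct, but it reaches (3)(d) by a genuinely different route than the paper. The paper's argument is the more direct one: from $d \dleq \tp$ and monotonicity of $\circ$ (part of axiom (3)(a)) it gets $d \circ e \dleq \tp \circ e$, and then commutativity plus (3)(b) turn the right-hand side into $e$; the distributivity axiom (3)(e) is never touched. You instead avoid monotonicity altogether: rewriting $e$ as $e \sqcap \tp$, applying (3)(e) and (3)(b) gives $d \circ e = (d \circ e) \sqcap d$, hence $d \circ e \dleq d$, and commutativity then yields $d \circ e \dleq e$. Both derivations of (3)(c) from (3)(d) coincide (take $e = \bt$ and use that $\bt$ is the bottom). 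What your version buys is a slightly sharper independence statement — (3)(d) already follows from commutativity, (3)(b), (3)(e) and the lattice structure, with no use of monotonicity or associativity — together with the bonus inequality $d \circ e \dleq d$ (so in fact $d \circ e \dleq d \sqcap e$). What the paper's version buys is brevity and the fact that it works even in a hypothetical weakening of the axioms where the distributivity law (3)(e) is dropped but monotonicity is kept. Both are valid proofs of the stated redundancy.
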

\begin{proof}
Note that $\circ$ is commutative and monotonic w.r.t. $\dleq$ because of axiom (3)(a).
Since $\tp$ is the top element of the lattice, $d \dleq \tp$ holds for any $d \in D$. 
By monotonicity of $\circ$, $d \circ e \dleq \tp \circ e$ also holds for any $e \in D$.
By commutativity of $\circ$ and axiom (3)(b), 
$d \circ e \dleq \tp \circ e$ is the same as $d \circ e \dleq e$.
Therefore (3)(d) is a consequence of the other axioms postulated for $\circ$. 
In particular, taking $e = \bt$ we get $d \circ \bt \dleq \bt$,
which implies $d \circ \bt = \bt$ because $\bt$ is the bottom element of the lattice. 
Hence, (3)(c) also follows form the other axioms.
\end{proof}


In the rest of the report, $\qdom$ will generally denote an arbitrary qualification domain. 
For any finite $S = \{e_1, e_2, \ldots, e_n\} \subseteq D$, the {\em greatest lower bound} 
(also called {\em infimum} of $S$ and noted as $\infi S$) 
exists and can be computed as $e_1 \sqcap e_2 \sqcap \cdots \sqcap e_n$ (which reduces to $\top$ in the case $n = 0$). 
The dual claim concerning {\em least upper bounds} is also true.
As an easy consequence of the axioms, one gets the identity $d \circ \infi S =  \infi \{d \circ e \mid e \in S\}$.


Many useful qualification domains are such that $\forall d, e \in D \setminus \{\bt\} : d \circ e \neq \bt$.
In the sequel, any qualification domain $\qdom$ that verifies this property will be called {\em stable}. 
Below we present some basic qualification domains which are clearly stable, along with brief explanations of their role for building extended CLP languages as instances of the SQCLP scheme proposed in this report. 
Checking that these domains satisfy the axioms given in Def. \ref{def:qd} is left as an easy exercise.
In fact, the axioms have been chosen as a natural  generalization of some basic properties satisfied by the ordering $\leq$ and the operation $\times$ over the real interval $[0,1]$.

\subsubsection{The Domain $\B$ of Classical Boolean Values} \label{ssec:bd}

This domain is $\B \eqdef \langle \{0,1\}, \leq, 0, 1, \land \rangle$, 
where $0$ and $1$ stand for the two classical truth values {\em false} and {\em true}, 
$\leq$ is the usual numerical ordering over $\{0,1\}$, and $\land$ stands for the classical conjunction operation over $\{0,1\}$. 

\subsubsection{The Domain $\U$ of Uncertainty Values and its variant $\U'$} \label{ssec:ud}

This domain is $\U \eqdef \langle \mbox{U}, \leq, 0, 1,\times \rangle$, 
where $\mbox{U} = [0,1] = \{d \in \REAL \mid 0 \le d \le 1\}$, $\le$ is the usual numerical ordering,
and $\times$ is the multiplication operation. 
The top element $\tp$ is $1$ and the greatest lower bound $\infi S$ of a finite $S \subseteq \mbox{U}$ 
is the minimum value $\mbox{min}(S)$, which is $1$ if $S = \emptyset$.
Elements of $\U$ are intended to represent certainty degrees as used in \cite{VE86}.

A slightly different domain $\U'$ can be defined as  $\langle \mbox{U}, \leq, 0, 1, \mbox{min} \rangle$ where the only difference with respect to $\U$ is that in the case of $\U'$, $\circ = \mbox{min}$.

\subsubsection{The Domain $\W$ of Weight Values and related variants} \label{ssec:wd}

This domain is $\W \eqdef \langle \mbox{P}, \ge, \infty, 0, + \rangle$, 
where $\mbox{P} = [0,\infty] = \{d \in \REAL \cup \{\infty\} \mid d \ge 0\}$, 
$\geq$ is the reverse of the usual numerical ordering (with $\infty \ge d$ for any $d \in \mbox{P}$),
and $+$ is the addition operation (with $\infty + d = d + \infty = \infty$ for any $d \in \mbox{P}$). 
The top element $\tp$ is $0$ and the greatest lower bound $\infi S$ of a finite $S \subseteq \mbox{P}$ 
is the maximum value $\mbox{max}(S)$, which is $0$ if $S = \emptyset$.
Elements of $\W$ are intended to represent proof costs, measured as the weighted depth of proof trees.

In analogy to the definition of $\U'$ as a variant of $\U$, we can define a qualification domain $\W'$ as $\langle \mbox{P}, \ge, \infty, 0, \mbox{max} \rangle$ with $\circ = \mbox{max}$. 
Also, as a discrete variant of $\W$, we define the qualification domain $\W_d \eqdef \langle \mbox{P}, \ge, \infty, 0, + \rangle$ with the only difference w.r.t. $\W$ being that $\mbox{P} = \NAT \cup \{\infty\}$.
Elements of $\W_d$ are also intended to represent proof costs (represented by natural numbers in this case).
Finally, a variant $\W'_d$ of $\W_d$ can be defined by replacing the attenuation operation in $\W_d$ by max.

\subsubsection{Two product constructions} \label{ssec:pc}

To close this section, we present two product constructions that can be used to build compound qualification domains.
The mathematical definition is as follows:


\begin{defn}[Products of Qualification Domains]
\label{def:qdprod}
Let two qualification domains $\qdom_i = \langle D_i, \dleq_i, \bt_i, \tp_i, \circ_i \rangle$ ($i \in \{1, 2\}$) be given.
\begin{enumerate}
  \item  
  The {\em cartesian product} $\qdom_1\!\times\!\qdom_2$ 
  is defined as $\qdom \eqdef \langle D, \dleq, \bt, \tp, \circ \rangle$ where 
  $D \eqdef D_1 \!\times D_2$, 
  the partial ordering $\dleq$ is defined as $(d_1,d_2) \dleq (e_1,e_2) \diff$ $d_1 \dleq_1 e_1$ and $d_2 \dleq_2 e_2$, 
  $\bt \eqdef (\bt_1, \bt_2)$, $\tp \eqdef (\tp_1, \tp_2)$ 
  and the attenuation operator $\circ$ is defined as $(d_1,d_2) \circ (e_1,e_2) \eqdef (d_1 \circ_1 e_1, d_2 \circ_2 e_2)$.
  \item
  Given two elements $d_1 \in D_1$ and $d_2 \in D_2$, the {\em strict pair}  $\spair{d_1}{d_2}$ 
  is defined by case distinction as follows: 
  if $d_1 \neq \bt_1$ and $d_2 \neq \bt_2$, then $\spair{d_1}{d_2} = (d_1,d_2)$;
  if $d_1 = \bt_1$ or $d_2 = \bt_2$, then $\spair{d_1}{d_2} = (\bt_1,\bt_2)$.
  In both cases, $\spair{d_1}{d_2} \in D_1 \!\times D_2$.
  \item
  The {\em strict cartesian product} $\qdom_1 \!\otimes \qdom_2$ 
  is defined as $\qdom \eqdef \langle D, \dleq, \bt, \tp, \circ \rangle$ where
  $D  = D_1 \otimes D_2 \eqdef \{\spair{d_1}{d_2} \mid d_1 \in D_1, d_2 \in D_2\}$ 
  (or equivalently, $D = ((D_1\setminus \{\bt_1\}) \times (D_2 \setminus \{\bt_2\})) \cup \{(\bt_1, \bt_2)\}$), 
  the partial ordering $\dleq$ is defined as $(d_1,d_2) \dleq (e_1,e_2) \diff d_1 \dleq_1 e_1$ and $d_2 \dleq_2 e_2$, 
  $\bt \eqdef \spair{\bt_1}{\bt_2} = (\bt_1, \bt_2)$,
  $\tp \eqdef \spair{\tp_1}{\tp_2}$,
  and the attenuation operator $\circ$ is defined as $(d_1,d_2) \circ (e_1,e_2) \eqdef (d_1 \circ_1 e_1, d_2 \circ_2 e_2)$.
  Note the special case when $D_1$ or $D_2$  is  a singleton set.
  Then, $D$ is the singleton set $\{(\bt_1,\bt_2)\}$,
  $\spair{\tp_1}{\tp_2} = (\bt_1,\bt_2)$,
  and $(\tp_1, \tp_2) \in D$ happens to be false if one of the two sets $D_1$, $D_2$ is not a singleton. \mathproofbox
\end{enumerate}
\end{defn}

Intuitively, each value $(d_1,d_2)$ belonging to a product domain $\qdom_1 \!\times \qdom_2$ or $\qdom_1\otimes \qdom_2$ imposes the qualification $d_1$ {\em and also} the qualification $d_2$. In particular, values $(c,d)$ belonging to the product domains $\UW$ and $\U \otimes \W$ impose two qualifications, namely: a certainty value greater or equal than $c$ and a proof tree with weighted depth less or equal than $d$. This intuition indeed corresponds to the declarative  semantics formally defined in Section \ref{sec:sqclp}. 

The next theorem shows that the class of the qualification domains is closed under ordinary cartesian products, while the subclass of stable qualification domains is closed under strict cartesian products.
We are particularly interested in stable qualification domains built from basic domains by reiterated strict products,
because they can be encoded into into constraint domains in the sense explained in Subsection \ref{sec:cdoms:encoding} below.

\begin{thm}\label{thm:pd}
Assume two given qualification domains $\qdom_1$ and $\qdom_2$. 
Then the ordinary cartesian product $\qdom_1 \!\times \qdom_2$ is always a qualification domain.
Moreover, if $\qdom_1$ and $\qdom_2$ are stable, then the strict cartesian product $\qdom_1 \!\otimes \qdom_2$ is a stable qualification domain.
\end{thm}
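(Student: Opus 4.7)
The proof splits naturally into two parts, one for each product construction, and I will organize it that way.

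For the ordinary product $\qdom_1 \times \qdom_2$, everything reduces to coordinatewise verification. The carrier $D_1 \times D_2$ with the coordinatewise ordering $\dleq$ is a lattice whose meets and joins are formed componentwise, with extremes $(\bt_1, \bt_2)$ and $(\tp_1, \tp_2)$; this is the standard product of two lattices. The attenuation axioms each lift coordinatewise from the factor domains: associativity, commutativity and monotonicity transfer immediately; axiom (b) follows from $d_i \circ_i \tp_i = d_i$; axiom (e) follows from distributivity of $\circ_i$ over $\sqcap_i$ in each coordinate. Axioms (c) and (d) need not be checked, by Proposition \ref{prop:qd-properties}.

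The delicate point for $\qdom_1 \otimes \qdom_2$ is that the carrier $D$ is not a sublattice of $D_1 \times D_2$ under componentwise operations, because the componentwise glb of two pairs with non-bottom components may well acquire a bottom component. The first step is therefore to characterize the lattice operations in $D$ itself. The join remains componentwise, because whenever at least one operand has both components non-bottom, so does the join; but the meet must be taken as $\spair{d_1 \sqcap_1 e_1}{d_2 \sqcap_2 e_2}$, collapsing to $(\bt_1, \bt_2)$ whenever either coordinate glb hits $\bt_i$. A routine check confirms that this gives a bona fide lattice with bottom $(\bt_1, \bt_2)$ and top $\spair{\tp_1}{\tp_2}$, including the degenerate case where some $D_i$ is a singleton.

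Next one must verify that the attenuation $(d_1, d_2) \circ (e_1, e_2) = (d_1 \circ_1 e_1, d_2 \circ_2 e_2)$ stays inside $D$. This is exactly where stability of the factors is used: if both operands have non-bottom coordinates, stability of $\circ_1$ and $\circ_2$ gives non-bottom outputs in each coordinate, so the result lies in $D$; if one operand is $(\bt_1, \bt_2)$, commutativity together with axiom (3)(c) of the factor domains yields $(\bt_1, \bt_2)$, which also lies in $D$. Once closure is established, axioms (a) and (b) for $\circ$ carry over coordinatewise as before. The main obstacle is axiom (3)(e), because the meet in $D$ is no longer purely componentwise; I plan to argue by case split on whether the meet $\spair{e_1 \sqcap_1 f_1}{e_2 \sqcap_2 f_2}$ collapses to $(\bt_1, \bt_2)$ or not. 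In the collapsing case, both sides of (3)(e) reduce to $(\bt_1, \bt_2)$ by repeated use of axiom (c) for the factors; in the non-collapsing case, stability guarantees that attenuating non-bottom values yields non-bottom values, so the strict-pair brackets on both sides can be dropped and the identity reduces to coordinatewise distributivity in each factor.

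Finally, stability of $\qdom_1 \otimes \qdom_2$ is immediate: if both $(d_1, d_2)$ and $(e_1, e_2)$ differ from $(\bt_1, \bt_2)$, then all four coordinates are non-bottom; stability of $\circ_1$ and $\circ_2$ forces $d_1 \circ_1 e_1 \neq \bt_1$ and $d_2 \circ_2 e_2 \neq \bt_2$, so their strict pair differs from $(\bt_1, \bt_2)$.
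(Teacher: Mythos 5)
Your proposal is correct and follows essentially the same route as the paper: the ordinary product is a routine coordinatewise check (the paper merely remarks it is "similar and even simpler"), and for the strict product you identify exactly the same ingredients — componentwise lubs, glbs given by strict pairs with the degenerate singleton case, closure of $\circ$ in $D$ via stability, axiom (3)(e) by a collapse/non-collapse case analysis, and stability of the product — matching the paper's items (1)--(4). The only small imprecision is in your non-collapsing branch of axiom (3)(e): the justification "stability lets the strict-pair brackets be dropped" tacitly assumes $d_1 \neq \bt_1$ and $d_2 \neq \bt_2$, so the case $(d_1,d_2) = (\bt_1,\bt_2)$ (which the paper disposes of first, as its case 3.1) must be added with the trivial observation that both sides are then $(\bt_1,\bt_2)$.
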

\begin{proof*}
Here we reason only for the case of  the strict cartesian product
since the reasonings needed for the ordinary cartesian product are very similar and even simpler.
Assume that $\qdom_1$ and $\qdom_2$ are stable qualification domains, and let 
$\qdom = \qdom_1 \otimes \qdom_2$ be constructed as in Definition \ref{def:qdprod}.
In order to show that $\qdom$ is a stable qualification domain, we prove the four items below.
The assumption that $\qdom_1$ and $\qdom_2$ satisfy all the axioms from Definition \ref{def:qd}
is used in all the reasonings, often implicitly.

\begin{enumerate}
\item
The attenuation operator $\circ$ of $\qdom$ is well defined.
Assume $(d_1,d_2), (e_1,e_2) \in D$.  
According to Definition \ref{def:qdprod}, $(d_1,d_2) \circ (e_1,e_2)$ is defined as $(d_1 \circ_1 e_1, d_2 \circ_2 e_2)$. 
Since $D  = D_1 \otimes D_2$ is a strict subset of $D_1 \times D_2$, we must prove that $(d_1 \circ_1 e_1, d_2 \circ_2 e_2) \in D$.
We reason by distinction of cases:
    \begin{enumerate}
    \item[1.1.] $(d_1,d_2) = (\bt_1,\bt_2)$ or $(e_1,e_2) = (\bt_1,\bt_2)$.
    In this case, $(d_1 \circ_1 e_1, d_2 \circ_2 e_2) = (\bt_1,\bt_2) \in D$.
    \item[1.2.]  $(d_1,d_2) \neq (\bt_1,\bt_2)$ and $(e_1,e_2) \neq (\bt_1,\bt_2)$.
    In this case, $d_1, e_1 \in D_1 \setminus \{\bt_1\}$ and $d_2, e_2 \in D_2 \setminus \{\bt_2\}$.
    The assumption that $\qdom_1$ and $\qdom_2$ are stable ensures $d_1 \circ_1 e_1 \neq \bt_1$ and $d_2 \circ_2 e_2 \neq \bt_2$,
    and therefore $(d_1 \circ_1 e_1, d_2 \circ_2 e_2) \in D$.
    \end{enumerate}
\item 
$\langle D, \dleq, \bt, \tp \rangle$ is a lattice with extreme points 
$\bt = \spair{\bt_1}{\bt_2} = (\bt_1, \bt_2)$ and $\tp \eqdef \spair{\tp_1}{$ $\tp_2}$
w.r.t. the partial ordering $\dleq$.
By definition,  $(d_1,d_2) \dleq (e_1,e_2) \iff d_1 \dleq_1 e_1 \land d_2 \dleq_2 e_2$.
The fact that $\dleq$ is a partial ordering with minimum (bottom) element $\bt$ is an obvious connsequence.
To prove that $\tp$ is the maximum (top) element, we reason by case distinction. 
If $D_1$ is a singleton set, then $D_1 = \{\bt_1\}$, $\tp_1 = \bt_1$, $D = \{(\bt_1,\bt_2)\}$, 
and $\spair{\tp_1}{\tp_2} = (\bt_1,\bt_2)$ is obviously the top element.
The case that $D_2$ is a singleton set is argued similarly.
Finally, if neither $D_1$ nor $D_2$ are singleton, we have $\tp_1 \neq \bt_1$, $\tp_2 \neq \bt_2$,
and $\tp = \spair{\tp_1}{\tp_2} = (\tp_1,\tp_2)$ is clearly the top element.

To show that $\langle D, \dleq, \bt, \tp \rangle$ is a lattice, we assume two arbitrary elements $(d_1,d_2)$, $(e_1,e_2) \in D$,
and we prove:
       \begin{enumerate}
       \item[2.1.] There is a  $lub$ $(d_1,d_2) \sqcup (e_1,e_2) \in D$.
       The $lubs$ $d_1 \sqcup_1 e_1 \in D_1$ and $d_2 \sqcup_2 e_2 \in D_2$ are known to exist.
       We claim that $(d_1,d_2) \sqcup (e_1,e_2) = (d_1 \sqcup_1 e_1,d_2 \sqcup_2 e_2)$.
       Due to the component-wise definition of $\dleq$, it suffices to show that $(d_1 \sqcup_1 e_1,d_2 \sqcup_2 e_2) \in D$.
       We prove this by case distinction:
             \begin{enumerate}
             \item[2.1.1.] If $(d_1,d_2) = (\bt_1,\bt_2)$ then $(d_1 \sqcup_1 e_1,d_2 \sqcup_2 e_2) = (e_1,e_2) \in D$.
             \item[2.1.2.] If $(e_1,e_2) = (\bt_1,\bt_2)$ then $(d_1 \sqcup_1 e_1,d_2 \sqcup_2 e_2) = (d_1,d_2) \in D$.
             \item[2.1.3.] If $(d_1,d_2) \neq (\bt_1,\bt_2)$ and $(e_1,e_2) \neq (\bt_1,\bt_2)$ then the construction of $D$ ensures
             that $d_1, e_1 \in D_1 \setminus \{\bt_1\}$ and $d_2, e_2 \in D_2 \setminus \{\bt_2\}$. This implies
             $d_1 \sqcup_1 e_1 \neq \bt_1$ and $d_2 \sqcup_2 e_2 \neq \bt_2$, which guarantees $(d_1 \sqcup_1 e_1,d_2 \sqcup_2 e_2) 
             \in D$.
             \end{enumerate}
       \item[2.2.] There is a  $glb$ $(d_1,d_2) \sqcap (e_1,e_2) \in D$.
       The $glbs$ $d_1 \sqcap_1 e_1 \in D_1$ and $d_2 \sqcap_2 e_2 \in D_2$ are known to exist.
       We claim that $(d_1,d_2) \sqcap (e_1,e_2) = \spair{d_1 \sqcap_1 e_1}{d_2 \sqcap_2 e_2}$.
       We prove the claim by case distinction:
             \begin{enumerate}
              \item[2.2.1.] If $d_1 \sqcap_1 e_1 \neq \bt_1$ and $d_2 \sqcap_2 e_2 \neq \bt_2$, then 
              $\spair{d_1 \sqcap_1 e_1}{d_2 \sqcap_2 e_2}$ is the same as $(d_1 \sqcap_1 e_1,d_2 \sqcap e_2) \in D$,
              and this pair is the $glb$ of $(d_1,d_2)$ and $(e_1,e_2)$ due to the component-wise definition of $\dleq$.
              \item[2.2.2.] If $d_1 \sqcap_1 e_1 = \bt_1$ or $d_2 \sqcap_2 e_2 = \bt_2$, then
              $\spair{d_1 \sqcap_1 e_1}{d_2 \sqcap_2 e_2} = (\bt_1,\bt_2)$ is obviously a common lower bound of $(d_1,d_2)$ 
              and $(e_1,e_2)$.
              In order to conclude that $(\bt_1,\bt_2)$ is the $glb$ of $(d_1,d_2)$ and $(e_1,e_2)$,
              we show that $(\bt_1,\bt_2)$ is the only common lower bound of $(d_1,d_2)$ and $(e_1,e_2)$
              by the following reasoning: assume an arbitrary $(x,y) \in D$ such that
              $(x,y) \dleq  (d_1,d_2)$ and  $(x,y) \dleq  (e_1,e_2)$. Then $x \dleq d_1 \sqcap_1 e_1$ and $y \dleq  d_2 \sqcap_2 e_2$.
              Since $d_1 \sqcap_1 e_1 = \bt_1$ or $d_2 \sqcap_2 e_2 = \bt_2$, it follows that $x = \bt_1$ or  $y = \bt_2$. 
              By construction of $D$, it must be the case that $x = \bt_1$ {\em and}  $y = \bt_2$, 
              because otherwise $(x,y)$ would not belong to $D$. Therefore $(x,y) = (\bt_1,\bt_2)$, as desired.      
              \end{enumerate}
       \end{enumerate}
\item 
$\circ$ satisfies  axioms required for attenuation operators in Definition \ref{def:qd}.
By definition of $\circ$ we know
$$(\star) ~ (d_1,d_2) \circ (e_1,e_2) = (d_1 \circ_1 e_1, d_2 \circ_2 e_2)$$
which always belongs to $D$ as already proved in item $(1)$ above. 
All the axioms listed under item (3) of Definition \ref{def:qd} except (3)(e) follow easily from the equation $(\star)$
and the corresponding axioms for $\circ_1$ and $\circ_2$. In order to verify axiom (3)(e) for $\circ$, we 
assume three pairs $(d_1,d_2), (e_1,e_2), (e_1',e_2') \in D$. We must prove the equation
$$(\dag) ~ (d_1,d_2) \circ ((e_1,e_2) \sqcap (e_1',e_2')) = (d_1,d_2) \circ (e_1,e_2) \sqcap (d_1,d_2) \circ (e_1',e_2') \enspace .$$
We reason by case distinction:
      \begin{enumerate}
      \item[3.1.] If $d_1 = \bt_1$ and $d_2 = \bt_2$ then both sides of $(\dag)$ are equal to $(\bt_1,\bt_2)$,
      as shown by the following calculations:
      $$
      \begin{array}{l}
        (d_1,d_2) \circ ((e_1,e_2) \sqcap (e_1',e_2')) = (\bt_1,\bt_2) \circ ((e_1,e_2) \sqcap (e_1',e_2')) = (\bt_1,\bt_2) \\[2mm]
        (d_1,d_2) \circ (e_1,e_2) \sqcap (d_1,d_2) \circ (e_1',e_2') =  \\
        \qquad (\bt_1,\bt_2) \circ (e_1,e_2) \sqcap (\bt_1,\bt_2) \circ (e_1',e_2') = (\bt_1,\bt_2) \sqcap (\bt_1,\bt_2) = (\bt_1,\bt_2)
      \end{array}
      $$
      \item[3.2.] If the previous case does not apply, the construction of $D$ ensures that $d_1 \neq \bt_1$ and $d_2 \neq \bt_2$.
      We distiguish two subcases:
           \begin{enumerate}
           \item[3.2.1.] If $e_1 \sqcap_1 e_1' = \bt_1$ or $e_2 \sqcap_2 e_2' = \bt_2$ we get also
           $d_1 \circ_1 (e_1 \sqcap_1 e_1') = \bt_1$ or $d_2 \circ_2 (e_2 \sqcap_2 e_2') = \bt_2$,
           and we can assume the following:
           $$
           \begin{array}{ll}
             (\clubsuit) & \spair{e_1 \sqcap_1 e_1' }{e_2 \sqcap_2 e_2'} = (\bt_1,\bt_2) \\
             (\spadesuit) & \spair{d_1 \circ_1 (e_1 \sqcap_1 e_1')}{d_2 \circ_2 (e_2 \sqcap_2 e_2')} = (\bt_1,\bt_2) \\
           \end{array}
           $$
           We can now prove that both sides of $(\dag)$ are equal to $(\bt_1,\bt_2)$ as follows:
           $$
           \begin{array}{l}
           (d_1,d_2) \circ ((e_1,e_2) \sqcap (e_1',e_2')) = \\
           \qquad (d_1,d_2) \circ \spair{e_1 \sqcap_1 e_1' }{e_2 \sqcap_2 e_2'} =_{\clubsuit} (d_1,d_2) \circ (\bt_1,\bt_2) = (\bt_1,\bt_2) \\[2mm]
           (d_1,d_2) \circ (e_1,e_2) \sqcap (d_1,d_2) \circ (e_1',e_2') = \\
           \qquad (d_1 \circ_1 e_1, d_2 \circ_2 e_2) \sqcap (d_1 \circ_1 e_1', d_2 \circ_2 e_2') = \\
           \qquad \spair{d_1 \circ_1 e_1 \sqcap_1 d_1 \circ_1 e_1'}{d_2 \circ_2 e_2 \sqcap_2 d_2 \circ_2 e_2'} = \\
           \qquad \spair{d_1 \circ_1 (e_1 \sqcap_1 e_1')}{d_2 \circ_2 (e_2 \sqcap_2 e_2')} =_{\spadesuit} (\bt_1,\bt_2) \\
           \end{array}
           $$
           \item[3.2.2.] If $e_1 \sqcap_1 e_1' \neq \bt_1$ and $e_2 \sqcap_2 e_2' \neq \bt_2$ then the stability assumption made for $\qdom_1$ and $\qdom_2$
           ensures $d_1 \circ_1 (e_1 \sqcap_1 e_1') \neq \bt_1$ and $d_2 \circ_2 (e_2 \sqcap_2 e_2') \neq \bt_2$,
           and we can assume the following:
           $$
           \begin{array}{ll}
           (\diamondsuit) & \spair{e_1 \sqcap_1 e_1'}{e_2 \sqcap_2 e_2'} = (e_1 \sqcap_1 e_1',e_2 \sqcap_2 e_2') \\
           (\heartsuit) & \spair{d_1 \circ_1 (e_1 {\sqcap_1} e_1')}{d_2 \circ_2 (e_2 {\sqcap_2} e_2')} = (d_1 \circ_1 (e_1 {\sqcap_1} e_1'),d_2 \circ_2 (e_2 {\sqcap_2} e_2')) \\
           \end{array}
           $$
          Then,  $(\dag)$ is proved by the following calculations:
          $$
          \begin{array}{l}
          (d_1,d_2) \circ ((e_1,e_2) \sqcap (e_1',e_2')) = \\
          \qquad (d_1,d_2) \circ \spair{e_1 \sqcap_1 e_1'}{e_2 \sqcap_2 e_2'} =_{\diamondsuit} (d_1,d_2) \circ (e_1 \sqcap_1 e_1', e_2 \sqcap_2 e_2') = \\
          \qquad (d_1 \circ_1 (e_1 \sqcap_1 e_1'), d_2 \circ_2 (e_2 \sqcap_2 e_2')) \\[2mm]
          (d_1,d_2) \circ (e_1,e_2) \sqcap (d_1,d_2) \circ (e_1',e_2') = \\
          \qquad (d_1 \circ_1 e_1, d_2 \circ_2 e_2) \sqcap (d_1 \circ_1 e_1', d_2 \circ_2 e_2') = \\
          \qquad \spair{d_1 \circ_1 e_1 \sqcap_1 d_1 \circ_1 e_1'}{d_2 \circ_2 e_2 \sqcap_2 d_2 \circ_2 e_2'} = \\
          \qquad \spair{d_1 \circ_1 (e_1 \sqcap_1 e_1')}{d_2 \circ_2 (e_2 \sqcap_2 e_2')} =_\heartsuit \\
          \qquad (d_1 \circ_1 (e_1 \sqcap_1 e_1'), d_2 \circ_2 (e_2 \sqcap_2 e_2')) 
          \end{array}
          $$
           \end{enumerate}
      \end{enumerate}     
\item
$\qdom_1 \!\otimes \qdom_2$ is stable. To prove this let us assume  
$(d_1,d_2), (e_1,e_2) \in D_1 \otimes D_2 \setminus \{(\bt_1,\bt_2)\}$. 
Then $d_1,e_1 \in D_1 \setminus \{\bt_1\}$ and $d_2, e_2 \in D_2 \setminus \{\bt_2\}$.
Since $\qdom_1$ and $\qdom_2$ are stable qualification domains, we can infer that
$d_1 \circ_1 e_1 \neq \bt_1$ and $d_2 \circ_2 e_2 \neq \bt_2$,
which implies 
$(d_1,d_2) \circ (e_1,e_2) = (d_1 \circ_1 e_1, d_2 \circ_2 e_2) \neq (\bt_1,\bt_2).$ \mathproofbox
\end{enumerate}
\end{proof*}

 
\subsubsection{Encoding Qualification Domains into Constraint Domains}
\label{sec:cdoms:encoding}

In this subsection we investigate a technical relationship between qualification domains and constraint  domains which will play a key role in the rest of the report.

\begin{defn} [Expressing $\qdom$ in $\cdom$]
\label{dfn:expressible}
A qualification domain $\qdom$ with carrier set $D_\qdom$ is expressible in a constraint domain $\cdom$ with carrier set $C_\cdom$  if 
there is an injective mapping $\imath : \aqdomd{\qdom} \to C_\cdom$ embedding $\aqdomd{\qdom}$ into  $C_\cdom$,
and the two following requirements are satisfied:
  \begin{enumerate}
     \item 
     There is a $\cdom$-constraint $\qval{X}$ such that $\Solc{\qval{X}}$ is the set of all
     $\eta \in \mbox{Val}_\cdom$ such that  $\eta(X)$ belongs to the range of $\imath$.
    \item 
    There is a $\cdom$-constraint $\qbound{X,Y,Z}$ encoding ``$x \dleq y \circ z$'' in the following sense:
    any $\eta \in \mbox{Val}_\cdom$ satisfying 
    $\eta(X) = \imath(x)$, $\eta(Y) = \imath(y)$  and $\eta(Z) = \imath(z)$ for some $x, y, z \in \aqdom$
    verifies $\eta \in \Solc{\qbound{X,Y,Z}} \Longleftrightarrow x \dleq y \circ z$.
  \end{enumerate}
Moreover, if $\qval{X}$ and $\qbound{X,Y,Z}$ can be chosen as existential constraints, we say that $\qdom$ is 
{\em existentially expressable} in $\cdom$. \mathproofbox
\end{defn}

In the sequel, $\cdom$-constraints built as instances of $\qval{X}$ and $\qbound{X,Y,Z}$ are called {\em qualification constraints}, and $\Omega$ is used as notation for sets of qualification constraints.
The following result ensures that several qualification domains built with the techniques presented in Subsection \ref{sec:domains:qdoms} are existentially expressible in $\mathcal{H}$, $\rdom$ or $\mathcal{FD}$, according to the case.

\begin{prop}[Expressing Qualification Domains in Constraint Domains]
\label{prop:inrdom}
\begin{enumerate}
\item
The domain $\B$ is existentially expressible in any given constraint domain $\cdom$.
\item
The domains $\U$, $\U'$\!, $\W$ and $\W'$\! are existentially expressible in $\rdom$
(or any other constraint domain that supports the expressivity of $\rdom$).
\item
The domains $\W_d$ and $\W'_d$ are existentially expressible in $\fdom$
(or any other constraint domain that supports the expressivity of $\fdom$).
%
\item
Assume that the two qualification domains $\qdom_1$ and $\qdom_2$ are stable and (existentially) expressible in a constraint domain $\cdom$.
Then, $\qdom_1\! \otimes \qdom_2$ is also (existentially) expressible in $\cdom$. 
\end{enumerate}
\end{prop}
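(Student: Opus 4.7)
The plan is to address the four parts separately, since they are largely independent, reserving most of the effort for the closure-under-strict-product clause (4), which is where the real work lies.

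For part (1), the set $\aqdomd{\B}$ reduces to the singleton $\{1\}$, so I would pick any ground term in $C_\cdom$ (e.g.\ the constructor $\mathsf{true}$) as the image of $1$ under $\imath$, take $\qval{X}$ to be the equation $X == \mathsf{true}$, and take $\qbound{X,Y,Z}$ to be the conjunction $X == \mathsf{true} \wedge Y == \mathsf{true} \wedge Z == \mathsf{true}$. The inequality $x \dleq y \circ z$ is vacuous here since $x=y=z=1$ and $1\wedge 1 = 1$, so requirement (2) of Definition \ref{dfn:expressible} is trivially satisfied. Both constraints are atomic, hence a fortiori existential.

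For parts (2) and (3), I would take $\imath$ to be the identity embedding of $\aqdomd{\qdom}$ into $\REAL$ (respectively $\mathbb{Z}$), and translate the defining relations directly in terms of the primitive predicates supplied by Definition \ref{def:rdom} (respectively their $\fdom$-counterparts). For $\U$, $\qval{X}$ becomes $cp_{<}(0.0,X) \wedge cp_{\leq}(X,1.0)$ and $\qbound{X,Y,Z}$ becomes $\exists U\,(op_{\times}(Y,Z,U) \wedge cp_{\leq}(X,U))$. For $\U'$ the attenuation is $\min$, so $\qbound{X,Y,Z}$ simplifies to $cp_{\leq}(X,Y) \wedge cp_{\leq}(X,Z)$. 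For $\W$ and $\W'$ the non-bottom carrier $[0,\infty)$ embeds as the identity into $\REAL$, the reversed ordering $\geq$ is directly available via $cp_{\geq}$, and the attenuation is addition (respectively $\max$); the qualification constraints follow the same template with $op_{+}$ or a pair of $cp_{\geq}$ inequalities. The discrete variants $\W_d$ and $\W'_d$ are handled identically using the arithmetic primitives of $\fdom$ over $\mathbb{Z}$. In every case the resulting constraints are existential.

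For part (4), the key idea is to package pairs using the binary constructor $\mathsf{pair}$ and to lift the given embeddings componentwise. Given $\imath_i : \aqdomd{\qdom_i} \to C_\cdom$ for $i = 1,2$, I define $\imath : \aqdomd{\qdom_1 \!\otimes \qdom_2} \to C_\cdom$ by $\imath(d_1,d_2) = \mathsf{pair}(\imath_1(d_1), \imath_2(d_2))$; this is well defined and injective because the non-bottom elements of $\qdom_1 \!\otimes \qdom_2$ are precisely the pairs in which both components are non-bottom. I would then take $\qval{X}$ for the product to be $\exists Y \exists Z\,(X == \mathsf{pair}(Y,Z) \wedge \qvali{1}{Y} \wedge \qvali{2}{Z})$, and $\qbound{X,Y,Z}$ to be the existential closure over fresh variables $X_1,X_2,Y_1,Y_2,Z_1,Z_2$ of $X == \mathsf{pair}(X_1,X_2) \wedge Y == \mathsf{pair}(Y_1,Y_2) \wedge Z == \mathsf{pair}(Z_1,Z_2) \wedge \qboundi{1}{X_1,Y_1,Z_1} \wedge \qboundi{2}{X_2,Y_2,Z_2}$. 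The hardest step is verifying correctness: for non-bottom $x,y,z$ in $\qdom_1 \!\otimes \qdom_2$ I must check that the componentwise conjunction $x_1 \dleq_1 y_1 \circ_1 z_1 \wedge x_2 \dleq_2 y_2 \circ_2 z_2$ is equivalent to $x \dleq y \circ z$ in the strict product. Here the stability of $\qdom_1$ and $\qdom_2$ is essential: it guarantees that $y \circ z$ remains non-bottom, so the strict pair $\spair{y_1 \circ_1 z_1}{y_2 \circ_2 z_2}$ does not collapse to $(\bt_1, \bt_2)$, and the product ordering reduces to the componentwise ordering as prescribed by Definition \ref{def:qdprod}. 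Since existential quantification is preserved by all the constructions, the "(existentially)" qualifier of the conclusion transfers from the factors to the product.
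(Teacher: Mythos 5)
Your proposal is correct and follows essentially the same route as the paper: identity (or componentwise pair) embeddings, the same $\qval{}$/$\qbound{}$ constraints for $\B$, $\U$, $\U'$, $\W$, $\W'$, $\W_d$, $\W'_d$, and for part (4) the same pairing construction with stability ensuring that the strict product's carrier, ordering and attenuation behave componentwise on non-bottom elements. The only cosmetic difference is that the paper forms prenex (existential) normal forms explicitly in part (4), which you gloss as "existential quantification is preserved"; this is an equivalent observation.
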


\begin{proof*}
\begin{enumerate}
\item
Straightforward, due to the fact  that $D_\B \setminus \{\bt\}$ is the singleton set $\{\tp\} = \{true\}$.
\item 
We prove that $\U$ can be existentially expressed in $\rdom$ as follows: 
$D_\U \setminus \{\bt\} = D_\U \setminus \{0\} = (0,1] \subseteq \REAL \subseteq C_\rdom$;
therefore $\imath$ can be taken as the identity embedding mapping from $(0,1]$ into $\REAL$.
Moreover,  $\qval{X}$ can be built as the existential $\rdom$-constraint $cp_<(0,X) \land cp_\leq(X,1)$
and $\qbound{X,Y,Z}$ can be built as the existential $\rdom$-constraint $\exists X_1(op_\times(Y,Z,X_1) \land cp_\leq(X,X_1))$.
By very similar reasonings it is easy to check that $\U'$, $\W$ and $\W'$ can also be existentially expressed in $\rdom$.
Note that in the cases of $\W$ and $\W'$ there is no reasonable way to define $\imath(\infty)$.
This is the reason why the domain of $\imath$ is required to be $\aqdom$ in Definition \ref{dfn:expressible}.
\item
Note that $D_{\W_d} \setminus \{\bt\} = D_{\W_d} \setminus \{\infty\} = \NAT$.
Moreover, $\dleq$ is $\geq$  and $\circ$ is $+$ in $\W_d$.
Therefore, $\W_d$ can be expressed in $\fdom$ by taking $\imath$ as the identity embedding mapping, building  $\qval{X}$ as an existential $\fdom$ constraint that requires the value of $X$ to be an integer $x \geq 0$,
and building $\qbound{X,Y,Z}$ as an existential $\fdom$ constraint that requires the values of $X$, $Y$ and $Z$ to be integers $x$, $y$ and $z$ such that $x \geq y + z$. 
A similar reasoning proves that $\W'_d$ is existentially expressible in $\fdom$ also.
\item
For $j = 1,2$ assume the existence of injective embedding mappings $\imath_j$ and $\cdom$-constraints 
$\qvali{j}{X}$, $\qboundi{j}{X,Y,Z}$ that can be used to (existentially) express $\qdom_j$ in $\cdom$.
Due to Theorem \ref{thm:pd} we know that $\qdom_1\! \otimes \qdom_2$ is a stable qualification domain.
Moreover, because of the construction of $\qdom = \qdom_1 \otimes \qdom_2$ given in 
Definition \ref{def:qdprod}, we know that $\aqdom = (D_1\setminus \{\bt_1\}) \times 
(D_2 \setminus \{\bt_2\})$. We also know that $\dleq$ is defined component-wise from $\dleq_1$, 
and $\dleq_2$, and analogously for $\circ$. Therefore, $\qdom$ can be (existentially) expressed in $\cdom$ by taking:
  \begin{itemize}
  \item
  $\imath$ defined by $\imath(d_1,d_2) \eqdef (\imath_1(d_1),\imath_2(d_2))$.
  \item
  $\qval{X}$ built as the prenex form of the constraint 
  $$\exists X_1 \exists X_2 (X == (X_1,X_2) \land \qvali{1}{X_1} \land \qvali{2}{X_2})$$
  which is existential if $\qvali{1}{X_1}$ and $\qvali{2}{X_2}$ are both existential.
  \item
  $\qbound{X,Y,Z}$ built as the prenex form of the constraint
  $$
  \begin{array}{c}
  \exists X_1 \exists X_2 \exists Y_1 \exists Y_2 \exists Z_1 \exists Z_2 
  (X \!== (X_1,X_2) \land Y \!== (Y_1,Y_2) \land Z \!== (Z_1,Z_2) \land \\
  \qboundi{1}{X_1,Y_1,Z_1} \land \qboundi{2}{X_2,Y_2,Z_2})
  \end{array}$$
  which is existential if $\qboundi{1}{X_1,Y_1,Z_1}$ and $\qboundi{2}{X_2,Y_2,$ $Z_2}$ are both existential.
  \end{itemize}  
Note that this reasoning does not work for the non-strict cartesian product $\qdom = \qdom_1 \times \qdom_2$,
because in this case $\aqdom = (D_1 \times D_2) \setminus \{(\bt_1,\bt_2)\}$ includes some pairs $(d_1,d_2)$ 
such that either $d_1 = \bt_1$ or $d_2 = \bt_2$ (but not both), and the given mappings
$\imath_1$, $\imath_2$ cannot be used to embed such pairs into $C_\cdom$. \mathproofbox
\end{enumerate}
\end{proof*}


\subsection{Similarity and Proximity Relations}
\label{sec:domains:simrels}


{\em Similarity relations} over a given set $S$ have been defined in \cite{Zad71,Ses02} and related literature as mappings $\simrel : S \times S \to [0,1]$ that satisfy reflexivity, symmetry and transitivity  axioms analogous to those required for classical equivalence relations. A more general notion called {\em proximity relation} has been defined in \cite{DP80} by omitting the transitivity axiom. Each value $\simrel(x,y)$ computed by a similarity (resp. proximity) relation $\simrel$ is called the {\em similarity degree} (resp. {\em proximity degree}) between $x$ and $y$. In our previous paper \cite{CRR08},  we proposed to generalize similarity relations by allowing  elements of an arbitrary qualification domain $\qdom$ to serve as proximity degrees. The definition below further generalizes this approach by considering proximity relations.

\begin{defn}[Proximity and similarity relations]
\label{defn:simrel}
Let a qualification domain $\qdom$ with carrier set $D$ and a set $S$ be given.
\begin{enumerate}
\item
A {\em $\qdom$-valued relation} over $S$ is any mapping $\simrel : S \times S \to D$.
\item
A  $\qdom$-valued relation $\simrel$ over $S$ is called    
   \begin{enumerate}
        \item {\em Reflexive}  iff $\forall x \in S : \simrel(x,x) = \tp$.
        \item {\em Symmetrical}  iff $\forall x,y \in S: \simrel(x,y) = \simrel(y,x)$.
        \item \emph{Transitive} iff $\forall x,y,z  \in S: \simrel(x,z) \dgeq \simrel(x,y) \sqcap \simrel(y,z)$.
   \end{enumerate}
\item
$\simrel$ is called a {\em $\qdom$-valued proximity relation} iff $\simrel$ is reflexive and symmetrical.
\item
If $\simrel$ is also transitive, then it is called a {\em $\qdom$-valued similarity relation}. 
\item
$\simrel$ is called {\em finitary} iff there are only finitely many choices of elements 
$x, y \in S$ such that $x \neq y$ and $\simrel(x,y) \neq \bt$.
From a practical viewpoint, this is a very natural requirement.  \enspace  \mathproofbox
\end{enumerate}    
\end{defn}


Obviously, $\qdom$-valued similarity relations are a particular case of $\qdom$-valued proximity relations.
Moreover, when $\qdom$ is chosen as  the qualification domain $\U$,  the previous definition provides proximity and similarity relations in the sense of \cite{Zad71,DP80}. In this case, a proximity degree $\simrel(x,y) = d \in [0,1]$ can be naturally interpreted as a {\em certainty degree} for the assertion that $x$ and $y$ are interchangeable. 
On the other hand,  if $\simrel$ is $\W$-valued, then $\simrel(x,y) = d \in [0,\infty]$ can be interpreted as a {\em cost} to be paid for $y$ to play the role of $x$.
More generally, the proximity degrees computed by a $\qdom$-valued proximity relation may have different interpretations according to the intended role of $\qdom$-elements as qualification values.


In contrast to previous works such as \cite{Ses02,CRR08}, in the rest of this report we will work with $\qdom$-valued proximity rather than similarity relations. Formally, this leads to  more general results. Moreover, as already noted by  \cite{SM99} and other authors, the transitivity property required for similarity relations may be counterintuitive in some cases. For instance, assume nullary constructors \texttt{colt}, \texttt{cold} and \texttt{gold} intended to represent words composed of four letters. Then, measuring the proximity between such words might reasonably lead to a $\U$-valued proximity relation $\simrel$ such that  $\simrel(\texttt{colt},\texttt{cold}) = 0.9$, $\simrel(\texttt{cold},\texttt{gold}) = 0.9$ and $\simrel(\texttt{colt},\texttt{gold}) = 0.4$.
On the other hand, insisting on  $\simrel$ to  be transitive would enforce the unreasonable condition $\simrel(\texttt{colt},\texttt{gold}) \geq 0.9$. 
Therefore, a similarity relation would be not appropriate in this case.


The special mapping $\sid : S \times S \to D$ defined as $\sid(x,x) = \tp$ for all $x \in S$ and $\sid(x,y) = \bt$ for all $x,y \in S$, $x \neq y$ is trivially a $\qdom$-valued similarity (and therefore, also proximity) relation called the \emph{identity}. 

\subsubsection{Admissible triples and proximity relations}
\label{sec:domains:proximity}

From now on, we will focus on proximity relations that are related to constraint domains in the following  sense:

\begin{defn}[Admissible triples]
\label{defn:simrel:admissible}
$\langle \simrel,\qdom,\cdom \rangle$ is called an {\em admissible triple} iff the following requirements are fulfilled:
  \begin{enumerate}
  \item
  $\cdom$ is a constraint domain  with signature $\Sigma = \langle DC, DP, PP \rangle$ and set of  basic values $B_\cdom$,
  and $\qdom$ is a qualification domain expressible in $\cdom$ in the sense of Definition \ref{dfn:expressible}.
  \item 
  $\simrel$ is a $\qdom$-valued proximity relation over  $S = \Var \uplus B_\cdom \uplus DC \uplus DP \uplus PP$.
   \item 
   $\simrel$ restricted to $\Var$ behaves as the identity, 
   i.e. $\simrel(X,X) = \tp$ for all $X \in \Var$ and $\simrel(X,Y) = \bt$ for all $X, Y \in \Var$ such that $X \neq Y$.
    \item 
    For any  $x, y \in S$, $\simrel(x,y) \neq \bt$ can happen only if some of the following cases holds:
      \begin{enumerate}
        \item $x = y$ are identical.
        \item $x,y \in B_\cdom$ are basic values.
        \item $x,y \in DC$ are data constructor symbols with the same arity.
        \item $x,y \in DP$ are defined predicate symbols with the same arity.
      \end{enumerate}
      In particular, $\simrel(p,p') \neq \bt$ cannot happen if $p, p'  \in PP$ are syntactically different primitive predicate symbols. 
      \enspace \mathproofbox
  \end{enumerate}
\end{defn}


In the rest of the report, our notions and results are valid for any choice of an admissible triple  $\langle \simrel,\qdom,\cdom \rangle$.  Proposition \ref{prop:inrdom} provides useful information for building 
admissible triples. For any given admissible triple, $\simrel$ can be naturally extended to act over terms and atoms over $\cdom$. The extension, also noted $\simrel$, works as specified in the recursive definition below. An analogous definition  for the case of $\U$-valued similarity relations can be found in \cite{Ses02}.  

\begin{defn}[$\simrel$ acting over terms and atoms]
\label{def:ER}
For any given admissible triple,  $\simrel$ is extended to work over $\cdom$-terms and $\cdom$-atoms as follows:
\begin{enumerate}
\item For any $t \in \TTerm$: \\
$\simrel(t,t) = \tp$.
\item For $X \in \Var$ and for any term $t$ different from $X$: \\
$\simrel(X,t) = \simrel(t,X) = \bt$.
\item For $c, c' \in DC$ with different arities $n$, $m$: \\
$\simrel(c(\ntup{t}{n}), c'(\ntup{t'}{m})) = \bt$.
\item For $c, c' \in DC$ with the same arity $n$:\\
$\simrel(c(\ntup{t}{n}), c'(\ntup{t'}{n})) = \simrel(c,c') \sqcap \simrel(t_1,t_1') \sqcap \ldots \sqcap \simrel(t_n,t_n')$.
\item For $r, r' \in DP \cup PP$ with different arities $n$, $m$: \\
$\simrel(r(\ntup{t}{n}), r'(\ntup{t'}{m})) = \bt$.
\item For $r, r' \in DP \cup PP$ with the same arity $n$:\\
$\simrel(r(\ntup{t}{n}), r'(\ntup{t'}{n}) = \simrel(r,r') \sqcap \simrel(t_1,t_1') \sqcap \ldots \sqcap \simrel(t_n,t_n')$. \mathproofbox
\end{enumerate}
\end{defn}

Given two terms $t, t'$ and some fixed qualification value $\lambda \in \aqdom$ 
we will use the notation $t  \approx_{\lambda} t'$ (read as $t$ and $t'$ are {\em $\simrel$-close at level $\lambda$})
as an abbreviation of $\lambda \dleq \simrel(t,t')$.
For the sake of simplicity, $\simrel$ is not made explicit in the $\approx_{\lambda}$ notation.
The following lemma provides a natural characterization of $\approx_{\lambda}$.
A similar result was given in \cite{Ses02} for the case of case of $\U$-valued similarity relations.

\begin{lem} [Proximity Lemma]
\label{lem:prox}
\begin{enumerate}
\item
$\approx_{\lambda}$ is a reflexive and symmetric equivalence relation over terms,
which is also transitive (and hence an equivalence relation) in the case that $\simrel$ is a similarity relation. 
\item
For any given terms $t$ and $t'$ the following two statements are equivalent:
      \begin{enumerate}
      \item
      $t \approx_{\lambda} t'$.
      \item
     $\mbox{pos}(t) = \mbox{pos}(t')$, and for each $p \in \mbox{pos}(t) \cap \mbox{pos}(t')$ some of the cases below holds:
	\begin{enumerate}
	\item
	$t \circ p = t' \circ p = X$ for some $X \in \Var$.
	\item
	$t \circ p = s \circ p = u$ for some $u \in B_{\cdom}$.
	\item
	$t \circ p = c$ and $t' \circ p = c'$ for some $n \in \mathbb{N}$ and some $c, c'  \in DC^n$ such that $\lambda \dleq \simrel(c,c')$.
	\end{enumerate}
     \end{enumerate}	
\item
Any given terms $t$ and $t'$ such that $t \approx_{\lambda} t'$ are {\em quasi-identical} in the following sense: 
$\mbox{pos}(t) = \mbox{pos}(t')$, and for each $p \in \mbox{pos}(t) = \mbox{pos}(t')$ either $t \circ p = t' \circ p$ or else 
$t \circ p$ and $t' \circ p$ are two data constructors of the same arity.
\item
$\approx_{\lambda}$ boils down to the syntactic equality relation `=' when $\simrel$ is the identity proximity relation $\sid$.
\end{enumerate}
\end{lem}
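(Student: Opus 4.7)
The plan is to establish the four items in order, using as a key preliminary step the fact that symmetry and (conditional) transitivity of the primitive $\simrel$ on $S$ lift to the term-level extension defined in Definition \ref{def:ER} by structural induction. More precisely, I would first prove two auxiliary facts: (i) $\simrel(t,t') = \simrel(t',t)$ for all terms $t,t'$, by induction on $\Vert t \Vert + \Vert t' \Vert$ and inspection of the six clauses of Definition \ref{def:ER} (using symmetry of $\simrel$ on $S$ and commutativity of $\sqcap$); and (ii) if $\simrel$ on $S$ is transitive, then $\simrel(t,t'') \dgeq \simrel(t,t') \sqcap \simrel(t',t'')$ for all terms, again by structural induction and using the distributivity of $\sqcap$ over itself together with the transitivity axiom of $\simrel$ on $S$.

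With these in hand, item (1) is immediate: reflexivity of $\approx_\lambda$ is clause (1) of Definition \ref{def:ER} and $\lambda \dleq \tp$; symmetry of $\approx_\lambda$ is (i); and if $\simrel$ is a similarity relation then $\lambda \dleq \simrel(t,t') \sqcap \simrel(t',t'') \dleq \simrel(t,t'')$ by (ii), so $\approx_\lambda$ is transitive.

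The heart of the proof is item (2), which I would prove by structural induction on $t$. For the direction (b) $\Rightarrow$ (a), I would observe that the structural condition guarantees $\mbox{pos}(t) = \mbox{pos}(t')$ and that at every rigid position the root symbols either coincide (contributing $\tp$) or are constructors $c,c'$ of equal arity with $\lambda \dleq \simrel(c,c')$; Definition \ref{def:ER} then computes $\simrel(t,t')$ as an iterated $\sqcap$ of such values (together with $\tp$'s from variable positions and identical basic values), each of which is $\dgeq \lambda$, so their infimum is $\dgeq \lambda$. For the direction (a) $\Rightarrow$ (b), the crucial observation is that $\lambda \neq \bt$, so any clause of Definition \ref{def:ER} (or Definition \ref{defn:simrel:admissible}(4)) that forces $\simrel(t,t') = \bt$ is ruled out. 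A case split on the top symbol of $t$ handles everything: if $t \in \Var$, clause (2) forces $t' = t$; if $t$ is a basic value, admissibility (Definition \ref{defn:simrel:admissible}(4)) together with a missing-clause argument for $\simrel$ on basic values forces $t' = t$ identical; and if $t = c(\overline{t}_n)$, clauses (3)-(4) force $t' = c'(\overline{t'}_n)$ with the same arity, $\lambda \dleq \simrel(c,c')$, and $\lambda \dleq \simrel(t_i,t_i')$ for each $i$, so the induction hypothesis applied componentwise yields the structural condition for $t,t'$.

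Items (3) and (4) then fall out as easy corollaries of item (2): quasi-identity is nothing more than reading off the three cases in condition (b)(ii), noting that case (iii) always yields two constructors of the same arity; and when $\simrel = \sid$, case (iii) collapses to $c = c'$ because $\simrel(c,c') = \bt$ whenever $c \neq c'$ and $\lambda \neq \bt$, so $t$ and $t'$ must agree symbol-by-symbol, i.e.\ $t = t'$. I expect the main obstacle to be the case analysis in (a) $\Rightarrow$ (b), specifically the bookkeeping needed to rule out mixed cases (variable versus constructor, basic value versus constructor, constructors of different arities) using the $\lambda \neq \bt$ assumption and the admissibility clause (4) of Definition \ref{defn:simrel:admissible}, together with giving a clean treatment of the basic-value subcase, which is not made fully explicit in Definition \ref{def:ER}.
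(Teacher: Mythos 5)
Your proposal is correct and follows essentially the same route as the paper's (very terse) proof: item 1 from reflexivity, symmetry and transitivity of $\simrel$ lifted to terms together with $\lambda \sqcap \lambda = \lambda$, item 2 by induction on the sizes of $t$ and $t'$, item 3 as an immediate corollary of item 2, and item 4 from the collapse of $\simrel$ to $\sid$ with $\lambda \neq \bt$. The additional detail you supply (the explicit structural-induction lemmas lifting symmetry and transitivity from symbols to terms, the case bookkeeping in the direction (a) $\Rightarrow$ (b), and the remark about the basic-value case left implicit in Definition \ref{def:ER}) merely fills in what the paper's proof sketch omits.
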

\begin{proof*}
We give a separate reasoning for each item.
\begin{enumerate}
\item
Note that the reflexivity and symmetry of $\approx_{\lambda}$ are a trivial consequence
of the reflexivity and symmetry of $\simrel$, as formulated in Definition \ref{defn:simrel}.
In the case that $\simrel$ is a similarity relation, transitivity of $\approx_{\lambda}$ follows from transitivity  of
$\simrel$ and the obvious fact that $\lambda \sqcap \lambda = \lambda$. 
\item
The claimed equivalence between conditions 2(a) and 2(b) can be proved reasoning by induction on $\Vert t \Vert + \Vert t' \Vert$.
\item
This item is an obvious consequence of the previous one.  
\item
Assume $\simrel = \sid$. 
Then, as a trivial consequence of Definition \ref{def:ER}, the value of $\simrel(t,t')$ is $\tp$ if $t = t'$ and $\bt$ otherwise. 
Since $\lambda \neq \bt$, it follows that $t \approx_{\lambda} t'$ iff  $t = t'$, as desired. \mathproofbox
\end{enumerate}
\end{proof*} 

The following result shows that $\approx_{\lambda}$ is compatible with the term extension operation in a natural way:

\begin{lem} [Proximity Preservation Lemma]
\label{lem:prox-preserve}
Assume terms $t$, $t'$ and $\lambda \in \aqdom$ such that $t \approx_{\lambda} t'$.
Then $(t \ll s)  \approx_{\lambda} (t' \ll s)$ holds also for any term $s$.
\end{lem}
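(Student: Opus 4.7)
\begin{proof*}
The plan is to proceed by induction on $\Vert t \Vert$, exploiting the Proximity Lemma \ref{lem:prox} to control the joint structure of $t$ and $t'$. Since $t \approx_\lambda t'$, item 2 of that lemma gives $\mbox{pos}(t) = \mbox{pos}(t')$ with a symbol-wise case analysis at every common position, so in particular $\Vert t \Vert = \Vert t' \Vert$ and the induction on $\Vert t \Vert$ is well-founded. The argument will consist of a case split at the root of $t$ into the three cases admitted by item 2(b): a common variable, a common basic value, or constructors of the same arity.

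In the first two cases, item 2(b) forces $t' = t$ to be literally the same variable or the same basic value, so by the definition of $\ll$ (cases one and two of Definition \ref{defn:term-extension}) we get $t \ll s = t' \ll s$ and the thesis follows by reflexivity of $\approx_\lambda$ (Lemma \ref{lem:prox}, item 1). In the constructor case, $t = c(t_1,\ldots,t_n)$ and $t' = c'(t'_1,\ldots,t'_n)$ with $\lambda \dleq \simrel(c,c')$, and from the recursive clause of Definition \ref{def:ER} together with the fact that $a \sqcap b \dgeq \lambda$ implies both $a \dgeq \lambda$ and $b \dgeq \lambda$, we deduce $t_i \approx_\lambda t'_i$ for every $i$. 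I now subdivide on the shape of $s$: if $s$ has the form $c''(s_1,\ldots,s_n)$ for some $n$-ary constructor $c''$, then by Definition \ref{defn:term-extension} $t \ll s = c(t_1 \ll s_1,\ldots,t_n \ll s_n)$ and $t' \ll s = c'(t'_1 \ll s_1,\ldots,t'_n \ll s_n)$; the induction hypothesis applied to each pair $t_i, t'_i$ (which are smaller than $t, t'$) with the same $s_i$ yields $(t_i \ll s_i) \approx_\lambda (t'_i \ll s_i)$, and combining with $\lambda \dleq \simrel(c,c')$ via clause~4 of Definition \ref{def:ER} gives $\simrel(t \ll s, t' \ll s) \dgeq \lambda \sqcap \lambda \sqcap \cdots \sqcap \lambda = \lambda$, as required. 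In the complementary subcase, where $s$ is either a variable, a basic value, or a constructor term of a different arity, the fourth clause of Definition \ref{defn:term-extension} forces $t \ll s = t$ and, by exactly the same clause applied to $t'$ (whose root is also an $n$-ary constructor), $t' \ll s = t'$, so the conclusion reduces to the hypothesis $t \approx_\lambda t'$.

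I do not foresee any substantive obstacle: the only subtle point is to recognize that the Proximity Lemma guarantees a rigid, position-by-position correspondence between $t$ and $t'$, so that the same subcase of Definition \ref{defn:term-extension} fires on both sides simultaneously, and that the infimum characterization of $\simrel$ on constructor terms lets the inductive proximity bounds on the children be recombined into the desired bound on the extended terms. Note that the argument does not use transitivity of $\simrel$ at any point, so the result holds for arbitrary proximity relations, not only similarities. \mathproofbox
\end{proof*}
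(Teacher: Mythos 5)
Your proof is correct and follows essentially the same route as the paper's: the key observation in both is that, since $t \approx_\lambda t'$ forces quasi-identity (same positions, and in particular constructors of the same arity at corresponding rigid positions), the same clause of Definition \ref{defn:term-extension} fires on both sides when extending by $s$, so the positional proximity condition of Lemma \ref{lem:prox}(2) is preserved. The paper states this as "quite clear" from the characterization, while you carry out the implicit structural induction explicitly (using the glb characterization of $\simrel$ on constructor terms), which is a faithful elaboration rather than a different argument.
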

\begin{proof}
Due to the assumption, $t \approx_{\lambda} t'$ are quasi-identical and satisfy condition 2(b)
as stated in the Proximity Lemma \ref{lem:prox}.
From this fact and Definition \ref{defn:term-extension}  it is quite clear that the same condition
2(b) holds also for $t \ll s$, $t' \ll s$ and $\lambda$.
Therefore, the Proximity Lemma allows to conclude $t \approx_{\lambda} t'$ as desired.
\end{proof} 

\subsubsection{Term proximity w.r.t. a given constraint set}
\label{sec:domains:Pproximity}

Reasoning with equations between $\cdom$-terms will require to infer information both from $\simrel$ and for some fixed  constraint set $\Pi \subseteq \Con{\cdom}$. 
This leads to a generalization of $\approx_{\lambda}$  formally defined as follows:

\begin{defn}[Term proximity w.r.t. a given constraint set]
\label{defn:Pi-prox}
Let $\langle \simrel,\qdom,\cdom \rangle$ be any admissible triple.
Assume  $\lambda \in \aqdom$ and $\Pi \subseteq \Con{\cdom}$.
We will say that $t$ and $s$ are {\em $\simrel$-close at level $\lambda$ w.r.t. $\Pi$}
(in symbols, $t  \approx_{\lambda, \Pi} s$)
iff  there are two terms $\hat{t}$, $\hat{s}$ such that 
$t \approx_{\Pi} \hat{t}$, $s \approx_{\Pi} \hat{s}$ and $\hat{t} \approx_{\lambda} \hat{s}$.
For the sake of simplicity neither $\simrel$ nor $\cdom$ are made explicit in the notation. 
\mathproofbox
\end{defn}


As illustration, let us present an  example using the constraint domain $\rdom$
and the qualification domain $\U$:
 
\begin{exmp}[Term proximity w.r.t.  $\rdom$ constraints]
\label{exmp:eqp}
Consider  $\Pi = \{op_{+}(A,A,X), op_{\times}(2.0,A,Y), Z == c(X,Y)\} \subseteq \Con{\rdom}$.
Note that this choice of $\Pi$ ensures $X \approx_{\Pi} Y$.
Assume $c, c' , c'' \in DC^2$ and an $\U$-valued proximity relation $\simrel$ 
such that $\simrel(c',c) = \simrel (c,c'') = 0.8$ and $\simrel(c',c'') =  0.6$. 
Then:
\begin{enumerate}
\item
$c(Y,X) \approx_{\Pi} Z$ holds, but $c'(Y,X) \approx_{\Pi} Z$ is false.
\item
$c'(Y,X) \approx_{0.7, \Pi} Z$ holds, because  
$c'(Y,X) \approx_{\Pi} c'(X,X)$, $Z \approx_{\Pi} c(X,X)$ and $c'(X,X) \approx_{0.7} c(X,X)$.
\item
$Z \approx_{0.7, \Pi} c''(X,Y)$  is also true, for similar reasons.
\item
$c'(Y,X) \approx_{0.7, \Pi} c''(X,Y)$ is false, 
because there is no possible choice of terms $\hat{t}$ and $\hat{s}$ such that 
$c'(Y,X) \approx_{\Pi} \hat{t}$, $c''(X,Y) \approx_{\Pi} \hat{s}$ and $\hat{t} \approx_{0.7} \hat{s}$. \enspace  \mathproofbox
\end{enumerate}
\end{exmp}


The next result states some basic properties of relations $\approx_{\lambda, \Pi}$.

\begin{lem} [$\Pi$-Proximity Lemma]
\label{lem:Pi-prox}
\begin{enumerate}
\item
$\approx_{\lambda, \Pi}$ is invariant w.r.t. $\approx_\Pi$ in the following sense:
$t \approx_{\lambda, \Pi} s$ implies $t' \approx_{\lambda, \Pi} s'$ for all terms
$t'\!,s'$ such that $t' \approx_\Pi t$ and $s' \approx_\Pi s$.
\item
$\approx_{\lambda, \Pi}$ is a reflexive and symmetric  relation over terms,
which is also transitive (and hence an equivalence relation) in the case that $\simrel$ is a similarity relation. 
\item
For any given terms $t$ and $t'$ the following two statements are equivalent:
  \begin{enumerate}
  \item
  $t \approx_{\lambda, \Pi} t'$.
  \item
  For any common position $p \in \mbox{pos}(t) \cap \mbox{pos}(t')$ some of the cases below holds:
    \begin{enumerate}
    \item
    $t \circ p$ or $t' \circ p$ is a variable, and moreover $t|_p \approx_{\lambda, \Pi} t'|_p$.
    \item
    $t \circ p = s \circ p = u$ for some $u \in B_{\cdom}$.
    \item
    $t \circ p = c$ and $t' \circ p = c'$ for some $n \in \mathbb{N}$ and some $c, c'  \in DC^n$ such that $\lambda \dleq \simrel(c,c')$.
    \end{enumerate}
  \end{enumerate}	
\item
$\approx_{\lambda, \Pi}$ boils down to $\approx_{\lambda}$ when $\Pi$ is the empty set, and
$\approx_{\lambda, \Pi}$ boils down to $\approx_{\Pi}$ when $\simrel$ is the identity proximity relation $\sid$.
\end{enumerate}
\end{lem}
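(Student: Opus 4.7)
I would prove the four items in turn. For item~1, unfolding Definition~\ref{defn:Pi-prox} gives witnesses $(\hat t,\hat s)$ for $t \approx_{\lambda,\Pi} s$, and transitivity of $\approx_\Pi$ from Lemma~\ref{lem:Pi-equiv}(1) yields $t' \approx_\Pi \hat t$ and $s' \approx_\Pi \hat s$, so the same pair also witnesses $t' \approx_{\lambda,\Pi} s'$. Item~4 follows by collapse: when $\Pi = \emptyset$, Lemma~\ref{lem:Pi-equiv}(3) identifies $\approx_\Pi$ with syntactic equality and Definition~\ref{defn:Pi-prox} reduces to $\approx_\lambda$; when $\simrel = \sid$, Lemma~\ref{lem:prox}(4) identifies $\approx_\lambda$ on terms with syntactic equality, so Definition~\ref{defn:Pi-prox} becomes $t \approx_\Pi \hat t = \hat s \approx_\Pi s$, equivalent by transitivity of $\approx_\Pi$ to $t \approx_\Pi s$.

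For item~2, reflexivity is immediate by choosing $\hat t = \hat s = t$ and invoking reflexivity of both $\approx_\Pi$ and $\approx_\lambda$; symmetry follows by swapping witnesses and using symmetry of each. The delicate step is transitivity under the similarity assumption. Given $t \approx_{\lambda,\Pi} s$ witnessed by $(a,b)$ and $s \approx_{\lambda,\Pi} u$ witnessed by $(c,d)$, transitivity of $\approx_\Pi$ first yields $b \approx_\Pi c$. The plan is to replace $a,b,c,d$ by their $\Pi$-canonical forms (Definition~\ref{def:cterms}); this is legitimate because the Canonicity Lemma~\ref{lem:canon} preserves $\approx_\Pi$, and $\approx_\lambda$ is also preserved because Lemma~\ref{lem:prox}(2) forces $\approx_\lambda$-related terms to carry identical variables at every variable position, which the canonicalizer sends to the same representative on both sides. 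Now set $\hat t = a \ll c$ and $\hat u = d \ll b$. To see $t \approx_\Pi \hat t$, I would prove by structural induction the auxiliary claim that $a \approx_\lambda b$ together with $b \approx_\Pi c$ implies $a \approx_\Pi a \ll c$, using the position-wise characterizations of Lemma~\ref{lem:Pi-equiv}(2) and Lemma~\ref{lem:prox}(2); a symmetric argument yields $u \approx_\Pi \hat u$. To see $\hat t \approx_\lambda \hat u$, the Proximity Preservation Lemma~\ref{lem:prox-preserve} delivers $a \ll c \approx_\lambda b \ll c$ and $c \ll b \approx_\lambda d \ll b$; the Symmetrical Extension Property of Lemma~\ref{lem:ext}(1), applicable because $b,c$ are $\Pi$-canonical and $b \approx_\Pi c$, gives $b \ll c = c \ll b$; and transitivity of $\approx_\lambda$ from similarity (Lemma~\ref{lem:prox}(1)) chains the three relations into $\hat t \approx_\lambda \hat u$.

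For item~3, I would prove both directions of the biconditional. In the forward direction, given witnesses $\hat t,\hat t'$ for $t \approx_{\lambda,\Pi} t'$, the position-wise characterization of Lemma~\ref{lem:Pi-equiv}(2) applied to $(t,\hat t)$ and $(t',\hat t')$, combined with the position-wise characterization of Lemma~\ref{lem:prox}(2) applied to $(\hat t,\hat t')$, can be composed at every common position of $t$ and $t'$ to yield one of the three stated cases. In the backward direction the position-wise condition is exploited constructively: traverse the common-position tree of $t$ and $t'$, retain a common constructor structure (the one inherited from $t$, say) at positions of case~(iii), copy the basic value at positions of case~(ii), and at positions of case~(i) invoke the posited subterm relation $t|_p \approx_{\lambda,\Pi} t'|_p$ to obtain local witnesses and splice them in. The hardest part of the whole lemma is the transitivity argument in item~2: the subtlety is that $b \approx_\Pi c$ does not in general force $\mbox{pos}(b) = \mbox{pos}(c)$ (a variable on one side may be $\Pi$-equivalent to a deeper compound subterm on the other), so the hypothesis of the $\Pi$-Equivalence Extension Property is not automatic; the remedy is to prove the specialized auxiliary claim about $a \ll c$ directly and to work in $\Pi$-canonical forms precisely to unlock the Symmetrical Extension Property.
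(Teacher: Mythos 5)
Your proposal is correct and follows essentially the same route as the paper: items 1 and 4 by unfolding Definition \ref{defn:Pi-prox} together with Lemmas \ref{lem:Pi-equiv}(3) and \ref{lem:prox}(4); item 3 by induction on term sizes via the position-wise characterizations; and, for transitivity in item 2, the same construction of witnesses by passing to $\Pi$-canonical forms and taking term extensions ($a \ll c$, $d \ll b$), chained through $b \ll c = c \ll b$ via the Symmetrical Extension Property, the Proximity Preservation Lemma, and transitivity of $\approx_{\lambda}$. The one genuine divergence is how the conditions $t \approx_{\Pi} (a \ll c)$ and $u \approx_{\Pi} (d \ll b)$ are discharged: the paper appeals to the $\Pi$-Equivalence Extension Property of Lemma \ref{lem:ext}(2) after asserting that every variable position of $a$ lies in $\mbox{pos}(c)$ as a consequence of $b \approx_{\Pi} c$, whereas you prove the tailored claim ($a \approx_{\lambda} b$ and $b \approx_{\Pi} c$ imply $a \approx_{\Pi} (a \ll c)$) directly by structural induction. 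Your variant is slightly more careful at exactly this point: as you observe, $b \approx_{\Pi} c$ does not force $\mbox{pos}(b) \subseteq \mbox{pos}(c)$ (a canonical variable may be $\Pi$-equivalent to a compound term), so the hypothesis of Lemma \ref{lem:ext}(2) is not automatic; your direct induction avoids the gap because $\ll$ only splices subterms of $c$ at variable positions that actually exist in $c$ with matching constructor spines, where the needed $\Pi$-equivalences are available. So: same architecture and same supporting lemmas, with a locally more robust justification of the extension step.
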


\begin{proof*}
We give a separate reasoning for each item. 
Definition \ref{defn:Pi-prox} and Lemmata \ref{lem:Pi-equiv} and \ref{lem:prox} are implicitely used at some points.
\begin{enumerate}
\item
By definition,  $t \approx_{\lambda, \Pi} s$ means the existence of terms $\hat{t},\hat{s}$ such that 
$t \approx_{\Pi} \hat{t}$, $s \approx_{\Pi} \hat{s}$ and $\hat{t} \approx_{\lambda} \hat{s}$.
In case that $t' \approx_\Pi t$ and $s' \approx_\Pi s$, the same terms $\hat{t},\hat{s}$ 
verify $t' \approx_{\Pi} \hat{t}$, $s' \approx_{\Pi} \hat{s}$ (since $\approx_\Pi$ is an equivalence relation)
and $\hat{t} \approx_{\lambda} \hat{s}$. Therefore $t' \approx_{\lambda, \Pi} s'$.
\item 
Let us consider the three properties in turn: \\
{\em Reflexivity:} $t \approx_{\lambda, \Pi} t$ holds because $\hat{t} = t$  trivially verifies 
$t \approx_{\Pi} \hat{t}$ and $\hat{t} \approx_{\lambda} \hat{t}$. \\
{\em Symmetry:}  Assume $t \approx_{\lambda, \Pi} s$. 
Then there are terms  $\hat{t},\hat{s}$ such that $t \approx_{\Pi} \hat{t}$, $s \approx_{\Pi} \hat{s}$
and $\hat{t} \approx_{\lambda} \hat{s}$. Due to the symmetry of $\approx_{\lambda}$ we get 
$\hat{s} \approx_{\lambda} \hat{t}$ and hence $s \approx_{\lambda, \Pi} t$. \\
{\em Transitivity:} Example \ref{exmp:eqp} above shows that $\approx_{\lambda, \Pi}$ is not transitive in general.
Here we prove transitivity of  $\approx_{\lambda, \Pi}$ under the assumption that $\simrel$ is a similarity relation fulfilling the transitive property stated in Definition \ref{defn:simrel}. \\
Assume terms $t_1$, $t_2$ and $t_3$ such that $t_1 \approx_{\lambda, \Pi} t_2$ and $t_2 \approx_{\lambda, \Pi} t_3$.
Then there are terms $t'_1$, $t'_2$, $t''_2$ and $t''_3$ such that
$$(a) ~ t_1 \approx_{\Pi}  t'_1, ~ t_2 \approx_{\Pi} t'_2, ~ t'_1 \approx_{\lambda} t'_2
\quad \mbox{and} \quad
(b) ~ t_2 \approx_{\Pi}  t''_2, ~ t_3 \approx_{\Pi} t''_3, ~ t''_2 \approx_{\lambda} t''_3 \enspace .$$
Without loss of generality,  $t'_1$, $t'_2$, $t''_2$ and $t''_3$ can be assumed to be $\Pi$-canonical terms.
If they were not, it would suffice to to replace each of them by its $\Pi$-canonical form, built as explained in Definition \ref{def:cterms}.  
This replacement would preserve properties $(a)$ and $(b)$ thanks to the  Canonicity Lemma \ref{lem:canon}.

We claim that there are three terms $\hat{t_1}$, $\hat{t_2}$, and $\hat{t_3}$ such that
$$(c) ~ t_1 \approx_{\Pi} \hat{t_1}, ~ t_2 \approx_{\Pi} \hat{t_2}, ~  t_3 \approx_{\Pi}\hat{t_3}
\quad \mbox{and} \quad
(d) ~ \hat{t_1} \approx_{\lambda} \hat{t_2}, ~ \hat{t_2} \approx_{\lambda} \hat{t_3} \enspace .$$
Conditions $(c)$ and $(d)$ imply $t_1 \approx_{\lambda, \Pi} t_3$ due to Definition \ref{defn:Pi-prox} 
and the transivity property of $\approx_{\lambda}$, which is ensured by the transitivity of $\simrel$ and the Proximity Lemma \ref{lem:prox}.
In the rest of  this item we prove the claim by assuming $(a)$ and $(b)$ 
and showing how to build $\hat{t_1}$, $\hat{t_2}$, and $\hat{t_3}$ fulfilling $(c)$ and $(d)$.

Note that the assumption $t'_1 \approx_{\lambda} t'_2$ implies that $t'_1$ and $t'_2$ are quasi-identical terms, 
due Proximity Lemma \ref{lem:prox}(3).
Analogously, terms $t''_2$ and $t''_3$ must be also quasi-identical due to the assumption $t''_2 \approx_{\lambda} t''_3$,
and the target condition $(d)$ requires that $\hat{t_1}$, $\hat{t_2}$, $\hat{t_3}$ are constructed as quasi-identical terms.
Since our assumptions do not guarantee  quasi-identity of terms $t'_2$ and $t''_2$,
we resort to the term extension construction from Definition \ref{defn:term-extension} for building the terms $\hat{t_i}$.
More precisely, we build:
$$
\hat{t_1} =_{\mbox{def}} (t'_1 \ll t''_2); \quad 
\hat{t_2} =_{\mbox{def}} (t'_2  \ll t''_2) = (t''_2 \ll t'_2); \quad \mbox{and }
\hat{t_3} =_{\mbox{def}} (t''_3  \ll t'_2)
$$
where the identity $(t'_2  \ll t''_2) = (t''_2 \ll t'_2)$ is a consequence of the Symmetrical Extension Property 
from Lemma \ref{lem:ext}, which can be applied because $t'_2$ and $t''_2$ are $\Pi$-canonical 
and assumptions $(a)$, $(b)$ imply $t'_2 \sim_\Pi t''_2$.
We argue that conditions $(c)$ and $(d)$ are satisfied as follows:

--- Condition $(c)$, $t_1 \approx_{\Pi}  \hat{t_1}$:
By assumptions $(a)$, $(b)$  we know $t_1 \approx_{\Pi} t'_1$ and $t'_2 \approx_{\Pi} t''_2$. 
It suffices to prove $t'_1 \approx_{\Pi}  \hat{t_1}$.
For each $p \in \mbox{pos}(t'_1)$ with $t'_1|_p = X \in \Var$ we have $t'_2|_p = X$  because $t'_1$ and $t'_2$ are quasi-identical.
Moreover, $t'_2 \approx_{\Pi} t''_2$ implies that $p \in \mbox{pos}(t''_2)$ and $X \sim_\Pi t''_2|_p$,
due to the $\Pi$-Equivalence Lemma \ref{lem:Pi-equiv}. 
In these conditions, $t'_1 \approx_{\Pi}  \hat{t_1}$ follows from $\hat{t_1} = (t'_1 \ll t''_2)$
and the Equivalent Extension Property from Lemma  \ref{lem:ext}.

--- Condition $(c)$, $t_3 \approx_{\Pi}  \hat{t_3}$:
The proof for this is analogous to the previous one.
Since $t_3 \approx_{\Pi} t''_3$ and $\hat{t_3} = (t''_3  \ll t'_2)$ it suffices to prove $t''_3 \approx_{\Pi} (t''_3  \ll t'_2)$,
which can be done with the help of the Equivalent Extension Property.

--- Condition $(c)$, $t_2 \approx_{\Pi}  \hat{t_2}$:
By assumptions $(a)$, $(b)$ we know $t_2 \approx_{\Pi} t'_2$ and $t'_2 \approx_{\Pi} t''_2$. 
It suffices to prove $t'_2 \approx_{\Pi}  \hat{t_2}$.
For each $p \in \mbox{pos}(t'_2)$ with $t'_2|_p = X \in \Var$ we have $p \in \mbox{pos}(t''_2)$ and $X \sim_\Pi t''_2|_p$,
due to $t'_2 \approx_{\Pi} t''_2$ and the $\Pi$-Equivalence Lemma \ref{lem:Pi-equiv}. 
In these conditions, $t'_2 \approx_{\Pi}  \hat{t_2}$ follows from $\hat{t_2} = (t'_2 \ll t''_2)$
and the Equivalent Extension Property.

--- Condition $(d)$, $\hat{t_1} \approx_{\lambda} \hat{t_2}$:
By assumption $(a)$ we have $t'_1 \approx_{\lambda} t'_2$.
By the Proximity Preservation Lemma \ref{lem:prox-preserve} this implies $(t'_1 \ll t''_2)  \approx_{\lambda} (t'_2 \ll t''_2)$.
By construction of the terms $\hat{t_i}$,  this is the same as $\hat{t_1}  \approx_{\lambda} \hat{t_2}$.

--- Condition $(d)$, $\hat{t_2} \approx_{\lambda} \hat{t_3}$:
The proof for this is analogous to the previous one.
Assumption $(b)$ provides $t''_2 \approx_{\lambda} t''_3$.
Then, the Proximity Preservation Lemma guarantees $(t''_2 \ll t'_2)  \approx_{\lambda} (t''_3 \ll t'_2)$,
which is the same as $\hat{t_2}  \approx_{\lambda} \hat{t_3}$ by construction of the terms $\hat{t_i}$
(this time viewing $\hat{t_2}$ as $(t''_2 \ll t'_2)$ rather than $(t'_2  \ll t''_2)$ as in the previous argumentation).
\item
The claimed equivalence between conditions 3(a) and 3(b) can proved reasoning by induction on $\Vert t \Vert + \Vert t' \Vert$.
\item
According to Definition  \ref{defn:Pi-prox},  
$t \approx_{\lambda, \Pi} s$ is true  iff  $(\star)$ holds, where:
$$(\star) ~ \mbox{there are terms } \hat{t}, \hat{s} \mbox{ such that }  
t \approx_{\Pi} \hat{t}, ~ s \approx_{\Pi} \hat{s} \mbox{ and } \hat{t} \approx_{\lambda} \hat{s}.$$
Let us argue for the two cases $\Pi = \emptyset$ and $\simrel = \sid$ separately:
    \begin{itemize}
    \item
    Assume that $\Pi = \emptyset$.  
   Then,  due to $\Pi$-Equivalence Lemma \ref{lem:Pi-equiv}(3), 
   $(\star)$ can be rewritten as
   $$\mbox{there are terms } \hat{t}, \hat{s} \mbox{ such that }  
   t = \hat{t}, ~ s = \hat{s} \mbox{ and } \hat{t} \approx_{\lambda} \hat{s}$$
   which is equivalent  to $t \approx_{\lambda} s$.
    \item
   Assume now that $\simrel = \sid$. 
   Then, due to Proximity Lemma  \ref{lem:prox}(4),
   $(\star)$ can be rewritten as
   $$\mbox{there are terms } \hat{t}, \hat{s} \mbox{ such that }  
   t \approx_{\Pi} \hat{t}, ~ s \approx_{\Pi} \hat{s} \mbox{ and } \hat{t} =  \hat{s}$$
   which is equivalent to $t \approx_{\Pi} s$. \mathproofbox
    \end{itemize}
\end{enumerate}
\end{proof*} 


The following technical lemma will be needed later on. 
Although it is closely related to Lemma \ref{lem:sl}(2), it needs a separate proof because 
statements of the form $t \approx_{\lambda, \Pi} s$ are not $\cdom$-constraints.

\begin{lem} [Substitution Lemma for $\approx_{\lambda, \Pi}$]
\label{lem:slpeq}
Let $\langle \simrel,\qdom,\cdom \rangle$ be any admissible triple.
Assume $\lambda \in \aqdom$, $\Pi \subseteq \Con{\cdom}$, and two terms $t,s$ such that $t \approx_{\lambda, \Pi} s$.
Then $t\sigma \approx_{\lambda, \Pi\sigma} s\sigma$ holds for every $\cdom$-substitution $\sigma$.
\end{lem}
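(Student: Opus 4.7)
The plan is to unfold the definition of $t \approx_{\lambda, \Pi} s$, lift the witnessing triple through the substitution $\sigma$, and verify that each of the three conditions is preserved. By Definition~\ref{defn:Pi-prox}, the hypothesis yields terms $\hat{t}, \hat{s}$ such that $t \approx_\Pi \hat{t}$, $s \approx_\Pi \hat{s}$ and $\hat{t} \approx_\lambda \hat{s}$. The natural candidates for the required witnesses on the substituted side are $\hat{t}\sigma$ and $\hat{s}\sigma$.

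For the two $\Pi\sigma$-equivalence conditions, recall that $t \approx_\Pi \hat{t}$ unfolds to $\Pi \models_\cdom t \texttt{ == } \hat{t}$. Part (2) of the Substitution Lemma~\ref{lem:sl} then gives $\Pi\sigma \models_\cdom (t \texttt{ == } \hat{t})\sigma$, i.e.\ $t\sigma \approx_{\Pi\sigma} \hat{t}\sigma$, and symmetrically $s\sigma \approx_{\Pi\sigma} \hat{s}\sigma$. These two steps are immediate.

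The core of the proof lies in establishing an auxiliary claim: $\hat{t} \approx_\lambda \hat{s}$ implies $\hat{t}\sigma \approx_\lambda \hat{s}\sigma$ for every $\sigma$. This is the step that cannot be imported directly from Lemma~\ref{lem:sl}, because statements of the form $u \approx_\lambda v$ are not $\cdom$-constraints, and so it is where I would put the real work. I would prove it by induction on $\Vert\hat{t}\Vert + \Vert\hat{s}\Vert$, using the case analysis provided by Lemma~\ref{lem:prox}(2). At a common variable position both terms have the same variable $X$, which after applying $\sigma$ becomes the syntactically identical term $X\sigma$ on both sides, giving $\simrel$-value $\tp \dgeq \lambda$ by reflexivity. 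At a basic-value position both terms already agree and are unaffected by $\sigma$. At a matching constructor position, $\hat{t} = c(\bar{t}_n)$ and $\hat{s} = c'(\bar{s}_n)$ with $\lambda \dleq \simrel(c,c')$ and (by the recursive clause of Definition~\ref{def:ER}) $\lambda \dleq \simrel(t_i,s_i)$ for every $i$; the induction hypothesis then yields $\lambda \dleq \simrel(t_i\sigma,s_i\sigma)$, and the recursive clause applied to $c(\bar{t}_n)\sigma$ and $c'(\bar{s}_n)\sigma$ gives the desired inequality.

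Combining the three verified conditions with Definition~\ref{defn:Pi-prox} yields $t\sigma \approx_{\lambda, \Pi\sigma} s\sigma$, completing the proof. The only non-routine point is the auxiliary claim about $\approx_\lambda$, and even that reduces to a structural induction driven by the characterisation in Lemma~\ref{lem:prox}(2) together with the recursive definition of $\simrel$ on compound terms.
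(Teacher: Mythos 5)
Your proof is correct and follows essentially the same route as the paper: unfold Definition~\ref{defn:Pi-prox}, lift the $\Pi$-equivalences with Lemma~\ref{lem:sl}(2), transfer $\hat{t} \approx_{\lambda} \hat{s}$ to $\hat{t}\sigma \approx_{\lambda} \hat{s}\sigma$ via the characterization in the Proximity Lemma~\ref{lem:prox}(2), and reassemble. The only difference is that the paper dismisses the $\approx_{\lambda}$-stability step as an easy consequence of Lemma~\ref{lem:prox}(2), whereas you carry out the structural induction explicitly — which is fine and fills in exactly the detail the paper leaves implicit.
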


\begin{proof}
Because of the assumptions and Definition \ref{defn:Pi-prox}, there are terms $\hat{t},\hat{s}$ such that 
$t \approx_{\Pi} \hat{t}$ (i.e. $\Pi \model{\cdom} t == \hat{t}$), 
$s \approx_{\Pi} \hat{s}$ (i.e. $\Pi \model{\cdom} s == \hat{s}$)
and $\hat{t} \approx_{\lambda} \hat{s}$.
Consider now any substitution $\sigma$. Due to Lemma \ref{lem:sl}(2), we get
$\Pi\sigma \model{\cdom} t\sigma == \hat{t}\sigma$ 
(i.e. $t\sigma \approx_{\Pi\sigma} \hat{t}\sigma$)
and $\Pi\sigma \model{\cdom} s\sigma == \hat{s}\sigma$
(i.e. $s\sigma \approx_{\Pi\sigma} \hat{s}\sigma$). 
Moreover, $\hat{t}\sigma \approx_{\lambda} \hat{s}\sigma$
is an easy consequence of $\hat{t} \approx_{\lambda} \hat{s}$ and Proximity Lemma \ref{lem:prox}(2).
Then, Definition \ref{defn:Pi-prox} allows to conclude  $t\sigma \approx_{\lambda, \Pi\sigma} s\sigma$
simply by taking $\hat{t}\sigma$ as $\hat{t\sigma}$ and  $\hat{s}\sigma$ as $\hat{s\sigma}$. \mathproofbox
\end{proof}

\section{The SQCLP Programming Scheme}
\label{sec:sqclp}

In this section we develop the SQCLP scheme with instances $\sqclp{\simrel}{\qdom}{\cdom}$ 
announced in the introduction. 
The parameters $\simrel$, $\qdom$ and $\cdom$ stand for an admissible proximity relation, 
a qualification domain and a constraint domain with a certain signature $\Sigma$, respectively. 
By convention, we consider only those instances of the scheme whose parameters 
are chosen to constitute an {\em admissible triple} in the sense of Definition \ref{defn:simrel:admissible}.
We focus on declarative semantics, using  an interpretation transformer and a logical inference system
to provide alternative characterizations of least program models.
We also discuss declarative semantics of goals  and related approaches.  

A brief remark regarding notation is in place here. For the sake of notational consistency with previous works (either by us or other authors) where similarity rather than proximity relations were used, we keep the symbol $\simrel$ for proximity relations and the uppercase letter  \texttt{S} in the names of programming schemes. Our results, however, do not rely on the transitivity property from Definition \ref{defn:simrel}.

\subsection{Programs and their Declarative Semantics}
\label{sec:sqclp:programs}


A $\sqclp{\simrel}{\qdom}{\cdom}$-program is a set $\Prog$ of  \emph{qualified program rules} (also called \emph{qualified clauses}) of the form $C : A \qgets{\alpha} \qat{B_1}{w_1}, \ldots, \qat{B_m}{w_m}$, where $A$ is a defined atom, $\alpha \in \aqdomd{\qdom}$ is called the {\em attenuation factor} of the clause and each $\qat{B_j}{w_j} ~ (1 \le j \le m)$ is an atom $B_j$ annotated with a so-called {\em threshold value} $w_j \in \bqdomd{\qdom}$.
The intended meaning of $C$ is as follows:  
if for all $1 \leq j \leq m$ one has $\qat{B_j}{e_j}$ (meaning that $B_j$ holds with qualification value $e_j$)
for some $e_j \dgeq^? w_j$,
then $\qat{A}{d}$ (meaning that $A$ holds with qualification value $d$)
can be inferred for any $d \in \aqdom$ such that $d \dleq \alpha \circ \infi_{j = 1}^m e_j$. 
By convention, $e_j \dgeq^? w_j$ means $e_j \dgeq w_j$ if $w_j ~{\neq}~?$ and is identically true otherwise.
In practice threshold values equal to `?' and attenuation values equal to $\tp$ can be omitted.


As motivating example, consider a $\sqclp{\simrel}{\U\!\!\otimes\!\!\W}{\rdom}$-program $\Prog$ 
including the clauses and equations for $\simrel$ displayed in Figure \ref{fig:sample}. 
From Subsection \ref{sec:domains:qdoms} recall that qualification values in $\U\!\otimes\!\W$ are pairs
$(d,e)$ (where $d$ represents a certainty degree and $e$ represents a proof cost),
as well as the behavior of $\dleq$ and $\circ$ in $\U\!\otimes\!\W$. 
Consider the problem of proving $\qat{\texttt{goodWork(king\_liar)}}{(d,e)}$ from $\Prog$.
This can be achieved for $d = 0.75 \times \mbox{min}\{d_1,d_2\}$, $e = 3 + \mbox{max}\{e_1,e_2\}$
by using $R_1$ instantiated by $\{{\tt X} \mapsto {\tt king\_liar}, {\tt Y} \mapsto {\tt shakespeare}\}$,
and going on to prove  $\qat{\texttt{famousAuthor(shakespeare)}}{(d_1,e_1)}$
for some  $d_1 \geq 0.5$,  $e_1 \leq 100$ 
and $\qat{\texttt{wrote(shakespeare,king\_liar)}}{(d_2,e_2)}$ for some $d_2$, $e_2$.
Thanks to $R_2$, $R_3$ and $\simrel$, these proofs succeed with $(d_1,e_1) = (0.9,1)$ and $(d_2,e_2) = (0.8,2)$.
Therefore, the  desired proof succeeds with certainty degree  $d = 0.75 \times \mbox{min}\{0.9,0.8\} = 0.6$, 
and proof cost $e = 3 + \mbox{max}\{1,2\} = 5$.

\begin{figure}[h]
\figrule
\small
\flushleft
\hspace{3mm} $R_1$ : \verb+goodWork(X) <-(0.75,3)- famousAuthor(Y)#(0.5,100), wrote(Y,X)#?+\\
\hspace{3mm} $R_2$ : \verb+famousAuthor(shakespeare) <-(0.9,1)-+\\
\hspace{3mm} $R_3$ : \verb+wrote(shakespeare,king_lear) <-(1,1)-+\\[2mm]
\hspace{1cm} $\simrel$\verb+(king_lear,king_liar) = (0.8,2)+
\caption{$\sqclp{\simrel}{\,\U\!\otimes\!\W}{\rdom}$ Program Fragment}
\label{fig:sample}
\figrule
\vspace{-3mm}
\end{figure}


It is useful to define some special types of program clauses and programs, as follows:

\begin{itemize}
\item
A clause is called {\em attenuation-free} iff $\alpha = \tp$. The name is justified because $\tp$
is an identity element for the attenuation operator $\circ$, as explained in Subsection
\ref{sec:domains:qdoms}. By convention, attenuation-free clauses may be written with the simplified
syntax $A \gets \qat{B_1}{w_1}, \ldots, \qat{B_m}{w_m}$.
\item
A clause is called {\em threshold-free} iff $w_j =\,\, ?$ for all $j = 1 \ldots m$.
The name is justified because the threshold value $w_j =\,\, ?$ occurring as annotation of a body atom $B_j$
does not impose any particular requirement to the qualification value of $B_j$.
Threshold-free clauses may be written with the simplified syntax $A \qgets{\alpha}  B_1, \ldots, B_m$.
\item
A clause is called {\em qualification-free} iff it is both attenuation-free and threshold-free.
These clauses may be written with the simplified syntax $A \!\gets\! B_1, \ldots, B_m$.
They behave just like those used in the classical CLP scheme.
\item
A clause is called {\em constraint-free} iff all its body atoms are defined.
\item
A program is called  attenuation-free iff all its clauses are of this type.
Thresh\-old-free, qualification-free and constraint-free programs are defined similarly.
\end{itemize}


The more technical $\sqclp{\simrel}{\U}{\rdom}$-program $\Prog$ presented below will serve as a
\emph{running example} to illustrate various points in the rest of the report.

\begin{exmp}[Running example]
\label{exmp:running}
Assume unary constructors $c, c' \in DC^1$, binary predicate symbols $p, p', q \in DP^2$ and a ternary predicate symbol $r \in DP^3$.
Consider the admissible triple $\langle \simrel,\U,\rdom \rangle$,
where $\simrel$ is  an $\U$-valued proximity relation such that
$\simrel(c,c') = 0.9$ and $\simrel(p,p') = 0.8$.
Let $\Prog$ be the $\sqclp{\simrel}{\U}{\rdom}$-program consisting of the qualified clauses $R_1$, $R_2$ and $R_3$ listed below:
\begin{itemize}
  \item[] \small $R_1 : q(X,c(X)) \qgets{1.0}$
  \item[] \small $R_2 : p(c(X),Y) \qgets{0.9} \qat{q(X,Y)}{0.8}$
  \item[] \small $R_3 : r(c(X),Y,Z) \qgets{0.9} \qat{q(X,Y)}{0.8}, \qat{cp_{\geq}(X,0.0)}{?}$ \mathproofbox
 \end{itemize}
\end{exmp}


As we will see in the Conclusions, the classical $\mbox{CLP}$ scheme for Constraint Logic Programming originally introduced in \cite{JL87} can be seen as a particular case of the $\mbox{SQCLP}$ scheme.
In the rest of this subsection we present a declarative semantics for $\sqclp{\simrel}{\qdom}{\cdom}$-programs inspired by \cite{GL91,GDL95}. These papers provided three different program semantics $\mathcal{S}_{i}$ ($i = 1, 2, 3$)
characterizing {\em valid ground solutions for goals}, {\em valid open solutions for goals} and {\em computed answers for goals} in $\mbox{CLP}$, respectively.
In fact, the $\mathcal{S}_{i}$ semantics in \cite{GL91,GDL95} were conceived as the $\mbox{CLP}$ counterpart
of previously known semantics for logic programming,
namely the least ground Herbrand model semantics \cite{Apt90,Llo87},
the open Herbrand model semantics, also known as $\mathcal{C}$-semantics \cite{Cla79,FLMP93},
and the $\mathcal{S}$-semantics \cite{FLMP89,BGLM94};
see \cite{AG94} for a very  concise and readable overview.

In this report we restrict ourselves to develop a $\mathcal{S}_{2}$-like semantics which can be used
to characterize valid open solutions for SQCLP goals as we will see in Subsection \ref{sec:sqclp:goals}.
As a basis for our semantics we use so-called {\em qc-atoms} of the form $\cqat{A}{d}{\Pi}$,
intended to assert that the atom $A$ is entailed by the constraint set $\Pi$ with qualification degree $d$.
We also use a special entailment relation $\entail{\qdom,\cdom}$ intended to capture some implications between qc-atoms
whose validity depends neither on the proximity relation $\simrel$ nor on the semantics of defined predicates.
A formal definition of these notions is as follows:


\begin{defn}[qc-atoms, observables and $(\qdom, \cdom)$-entailment]
\label{defn:atoms-entail}
\begin{enumerate}
  \item \label{defn:atoms-entail:atoms}
  \emph{Qualified constrained atoms} (or simply \emph{qc-atoms}) are statements of the form $\cqat{A}{d}{\Pi}$,
  where $A \in \At$ is an atom, $d \in D$ is a qualification value, and $\Pi \subseteq \Con{\cdom}$ is a finite set of  constraints.
  \item \label{defn:atoms-entail:kinds}
  A qc-atom $\cqat{A}{d}{\Pi}$ is called {\em defined}, {\em primitive} or {\em equational} according to the syntactic form of $A$.
  \item \label{defn:atoms-entail:observable}
  A qc-atom $\cqat{A}{d}{\Pi}$ is called {\em observable} iff $d \in \aqdom$ and $\Pi$ is satisfiable.
  \item \label{defn:atoms-entail:entail}
  Given two qc-atoms  $\varphi : \cqat{A}{d}{\Pi}$ and $\varphi' : \cqat{A'}{d'}{\Pi'}$,
  we say that $\varphi$ $(\qdom, \cdom)$-\emph{entails} $\varphi'$ (in symbols, $\varphi \entail{\qdom,\cdom} \varphi'$)
  iff there is some $\cdom$-substitution $\theta$ satisfying $A' = A\theta$, $d' \dleq d$ and $\Pi' \model{\cdom} \Pi\theta$.
  \mathproofbox
\end{enumerate}
\end{defn}


We will focus our attention on observable qc-atoms
because they can be interpreted as observations of valid open solutions for atomic goals in $\sqclp{\simrel}{\qdom}{\cdom}$
as we will see in Subsection \ref{sec:sqclp:goals}.
The example below illustrates the main technical ideas from Definition \ref{defn:atoms-entail}.

\begin{exmp}[Observable qc-atoms and $(\qdom, \cdom)$-entailment]
\label{exmp:qc-atoms}
Consider the admissible triple underlying Example \ref{exmp:running} and the sets of $\rdom$-constraints:
$$
\begin{array}{c@{\hspace{1mm}}c@{\hspace{1mm}}l}
\Pi & = & \{cp_{>}(X,1.0),\, op_{+}(A,A,X),\, op_{\times}(2.0,A,Y)\} \\
\Pi' & = & \{cp_{\geq}(A,3.0),\, op_{\times}(2.0,A,X),\, op_{+}(A,A,Y)\}\\
\end{array}
$$
Then, the following are observable qc-atoms:
$$
\begin{array}{l@{\hspace{1cm}}l}
\varphi_1 = \cqat{q(X,c'(Y))}{0.9}{\Pi} & \varphi_3 = \cqat{r(c'(Y),c(X),Z)}{0.8}{\Pi} \\
\varphi_2 = \cqat{p'(c'(Y),c(X))}{0.8}{\Pi} & \varphi'_3 = \cqat{r(c'(Y),c(X),c(Z'))}{0.7}{\Pi'} \\
\end{array}
$$
and the $(\U, \rdom)$-entailment $\varphi_3 \entail{\U, \rdom} \varphi'_3$ is valid thanks to $\theta = \{Z \mapsto c(Z')\}$,
which satisfies $r(c'(Y),c(X),c(Z')) = r(c'(Y),c(X),Z)\theta$, $0.7 \leq 0.8$ and $\Pi' \model{\rdom} \Pi\theta$. \mathproofbox
\end{exmp}


The intended meaning of $\entail{\qdom,\cdom}$ as an entailment relation not depending on
the meanings of  defined predicates motivates the first item in the next definition.

\begin{defn}[Interpretations]
\label{defn:interpretations}
Let $\langle \simrel,\qdom,\cdom \rangle$ be any given admissible triple. Then:
\begin{enumerate}
\item
A {\em qualified constrained interpretation} (or {\em qc-interpretation}) is a set $\I$ of
observable defined qc-atoms closed under $(\qdom, \cdom)$-entailment.
In other words, a set $\I$ of qc-atoms which satisfies the following two conditions:
  \begin{enumerate}
  \item
  Each $\varphi \in \I$ is an observable defined qc-atom.
  \item
  If $\varphi \in \I$ and $\varphi'$ is another defined observable qc-atom such that $\varphi \entail{\qdom,\cdom} \varphi'$,
  then also $\varphi' \in \I$.
  \end{enumerate}
\item
Assume any given qc-interpretation $\I$.
For any observable qc-atom $\varphi$, we say that  $\varphi$ is valid in $\I$ modulo $\simrel$
(in symbols, $\I \isqchlrdc \varphi$) iff some of the three cases below holds:
\begin{enumerate}
  \item $\varphi$ is defined and $\varphi \in \I$.
  \item $\varphi : \cqat{(t == s)}{d}{\Pi}$ is equational and $t \approx_{d, \Pi} s$.
  \item $\varphi : \cqat{\kappa}{d}{\Pi}$ is primitive and $\Pi \model{\cdom} \kappa$.
  \enspace  \mathproofbox
\end{enumerate}
\end{enumerate}
\end{defn}

Note that a given interpretation $\I$ can include several observables $\cqat{A}{d_i}{\Pi}$ for the same (possibly not  ground) atom $A$ but is not required to include on ``optimal'' observable $\cqat{A}{d}{\Pi}$ with $d$ computed as the {\em lub} of all $d_i$.
By contrast, the other related works discussed in the Introduction view program interpretations as mappings $\I$ from the ground Herbrand base into some set of lattice elements (the real interval $[0,1]$ in many cases). In such interpretations, each ground atom $A$ has attached one single lattice element $d = \I(A)$ intended as ``the optimal qualification'' for $A$.
Our view of interpretations is closer to the expected operational behavior of goal solving systems and can be used to characterize the validity of  solutions computed by such systems, as we will see in Subsection \ref{sec:sqclp:goals}.

Note also that the notation $\I \isqchlrdc \varphi$ is defined only for the case that $\varphi$ is observable.
In the sequel, we will implicitly assume that $\varphi$ is observable in any context where the notation $\I \isqchlrdc \varphi$ is used. The next technical result shows that validity in any given interpretation is closed under entailment.

\begin{prop}[Entailment Property for Interpretations]
\label{prop:ep-i}
Assume that $\I \isqchlrdc \varphi$ and $\varphi \entail{\qdom,\cdom} \varphi'$. Then $\I \isqchlrdc \varphi'$.
\end{prop}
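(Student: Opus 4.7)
The plan is to proceed by case analysis on the three clauses of Definition~\ref{defn:interpretations}(2) that characterize $\I \isqchlrdc \varphi$. Unpacking the hypothesis $\varphi \entail{\qdom,\cdom} \varphi'$, we fix a $\cdom$-substitution $\theta$ with $A' = A\theta$, $d' \dleq d$ and $\Pi' \model{\cdom} \Pi\theta$, where $\varphi : \cqat{A}{d}{\Pi}$ and $\varphi' : \cqat{A'}{d'}{\Pi'}$. Note that in each case the syntactic form of $A$ (defined, equational, or primitive) is preserved by the substitution $\theta$, so $\varphi'$ falls in the same case as $\varphi$.

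Case (a), when $\varphi$ is defined with $\varphi \in \I$: here $\varphi'$ is also a defined qc-atom and is observable by the standing convention. Since $\I$ is closed under $(\qdom,\cdom)$-entailment (Definition~\ref{defn:interpretations}(1)(b)), $\varphi' \in \I$ and hence $\I \isqchlrdc \varphi'$ directly.

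Case (c), when $\varphi : \cqat{\kappa}{d}{\Pi}$ is primitive with $\Pi \model{\cdom} \kappa$: then $\varphi' : \cqat{\kappa\theta}{d'}{\Pi'}$. Applying the Substitution Lemma~\ref{lem:sl}(2) to $\Pi \model{\cdom} \kappa$ yields $\Pi\theta \model{\cdom} \kappa\theta$, and combining this with $\Pi' \model{\cdom} \Pi\theta$ by transitivity of entailment gives $\Pi' \model{\cdom} \kappa\theta$, so $\I \isqchlrdc \varphi'$.

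The main technical case, and the one I would treat as the key obstacle, is Case (b): $\varphi : \cqat{(t == s)}{d}{\Pi}$ equational with $t \approx_{d,\Pi} s$. First I would apply the Substitution Lemma for $\approx_{\lambda,\Pi}$ (Lemma~\ref{lem:slpeq}) to obtain $t\theta \approx_{d, \Pi\theta} s\theta$. Then I need to upgrade this to $t\theta \approx_{d', \Pi'} s\theta$, which requires two observations: (i) enlarging the constraint context along entailment is harmless, because unfolding Definition~\ref{defn:Pi-prox} produces witnesses $\hat{t}, \hat{s}$ with $t\theta \approx_{\Pi\theta} \hat{t}$ and $s\theta \approx_{\Pi\theta} \hat{s}$, and these $\approx_{\Pi\theta}$-relations lift to $\approx_{\Pi'}$ via $\Pi' \model{\cdom} \Pi\theta$ and transitivity of $\model{\cdom}$; and (ii) lowering the qualification level from $d$ to $d'$ preserves $\simrel$-closeness, since $\hat{t} \approx_d \hat{s}$ means $d \dleq \simrel(\hat{t}, \hat{s})$, and $d' \dleq d$ then gives $d' \dleq \simrel(\hat{t}, \hat{s})$, i.e.\ $\hat{t} \approx_{d'} \hat{s}$. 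Packaging these two observations via Definition~\ref{defn:Pi-prox} with the same witnesses $\hat{t}, \hat{s}$ yields $t\theta \approx_{d', \Pi'} s\theta$, completing the case. The delicate point is recognizing that one can reuse the same witnesses rather than trying to build new ones, which is what makes the interaction between the $\approx_{\Pi}$ and $\approx_{\lambda}$ components of $\approx_{\lambda,\Pi}$ go through cleanly.
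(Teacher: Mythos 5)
Your proof is correct and follows essentially the same route as the paper's: a case analysis on whether $\varphi$ is defined, equational or primitive, using closure of $\I$ under $\entail{\qdom,\cdom}$ in the defined case, Lemma~\ref{lem:slpeq} in the equational case, and Lemma~\ref{lem:sl} in the primitive case. Your explicit unpacking of Definition~\ref{defn:Pi-prox} with reuse of the same witnesses $\hat{t},\hat{s}$ merely spells out the step the paper dismisses as trivially following from $\Pi' \model{\cdom} \Pi\theta$ and $d' \dleq d$, and it does so correctly.
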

\begin{proof*}
Due to the hypothesis $\varphi \entail{\qdom,\cdom} \varphi'$
we can assume $\varphi = (\cqat{A}{d}{\Pi})$, $\varphi' = (\cqat{A'}{d'}{\Pi'})$
and some $\cdom$-substitution $\theta$ such that  $A' = A\theta$, $d' \dleq d$ and $\Pi' \model{\cdom} \Pi\theta$.
We now distinguish cases according to the syntactic form of $\varphi$:
\begin{enumerate}
\item $\varphi$ is defined.
In this case, $\varphi'$ is also defined.
Moreover, $\I \isqchlrdc \varphi$ is equivalent to $\varphi \in \I$ because of Definition \ref{defn:interpretations},
which implies $\varphi' \in \I$ because qc-interpretations are closed under $\entail{\qdom,\cdom}$,
which is equivalent to $\I \isqchlrdc \varphi'$  because of Definition \ref{defn:interpretations}.
\item $\varphi$ is equational.
In this case $A$ and $A'$ have the form $t == s$ and $t\theta == s\theta$, respectively.
Moreover, $\I \isqchlrdc \varphi$ is equivalent to
$t \approx_{d, \Pi} s$ because of Definition \ref{defn:interpretations},
which implies $t\theta \approx_{d, \Pi\theta} s\theta$ because of Lemma \ref{lem:slpeq},
which trivially  implies $t\theta \approx_{d', \Pi'} s\theta$
because of $\Pi' \model{\cdom} \Pi\theta$ and $d' \dleq d$,
which is equivalent to $\I \isqchlrdc \varphi'$  because of Definition \ref{defn:interpretations}.
\item $\varphi$ is primitive.
In this case $A$ and $A'$ have the form $\kappa$ and $\kappa\theta$, respectively.
Moreover, $\I \isqchlrdc \varphi$ is equivalent to $\Pi \model{\cdom} \kappa$ because of Definition \ref{defn:interpretations},
which implies $\Pi\theta \model{\cdom} \kappa\theta$ because of Lemma \ref{lem:sl},
which implies $\Pi' \model{\cdom} \kappa\theta$ because of $\Pi' \model{\cdom} \Pi\theta$,
which is equivalent to $\I \isqchlrdc \varphi'$  because of Definition \ref{defn:interpretations}.
\enspace \mathproofbox
\end{enumerate}
\end{proof*}


The definition below explains when a given  interpretation is regarded as a model of a given program,
as well as the related notion of semantic consequence.

\begin{defn}[Models and semantic consequence]
\label{defn:models}
Let a $\sqclp{\simrel}{\qdom}{\cdom}$-program $\Prog$ and an observable qc-atom 
$\varphi : \cqat{p'(\ntup{t'}{n})}{d}{\Pi}$ be given.
$\varphi$ is an {\em immediate consequence} of a qc-interpretation $\I$ via a program rule
$(R_l :  p(\ntup{t}{n}) \qgets{\alpha} \qat{B_1}{w_1}, \ldots, \qat{B_m}{w_m}) \in \Prog$ iff
there exist a $\cdom$-substitution $\theta$ and a choice of qualification values 
$d_0, d_1, \ldots, d_n, e_1, \ldots, e_m \in \aqdom$ such that:
\begin{enumerate}
\item[(a)]
$\simrel(p',p) = d_0$
\item[(b)]
$\I \isqchlrdc \cqat{(t'_i == t_i\theta)}{d_i}{\Pi}$ (i.e. $t'_i \approx_{d_i,\Pi} t_i\theta$) for $i = 1 \ldots n$
\item[(c)]
$\I \isqchlrdc \cqat{B_j\theta}{e_j}{\Pi}$ with $e_j \dgeq^? w_j$ for $j = 1 \ldots m$
\item[(d)]
$d \dleq \bigsqcap_{i = 0}^{n}d_i \sqcap \alpha \circ \bigsqcap_{j = 1}^m e_j$
[i.e., $d \dleq d_i ~ (0 \le i \le n)$ and $d \dleq \alpha \circ e_j ~ (1 \le j \le m)]$
\end{enumerate}
Note that  the qualification value $d$ attached to $\varphi$ is limited by two kinds of upper bounds:
$d_i ~ (0 \le i \le n)$, i.e. the $\simrel$-proximity between $p'(\ntup{t'}{n})$ and the head of $R_l\theta$;
and $\alpha \circ e_j ~ (1 \le j \le m)$, i.e. the qualification values of the atoms in the body of $R_l\theta$
attenuated w.r.t. $R_l$'s attenuation factor $\alpha$.
Moreover, the inequalities $e_j \!\dgeq^? \!\!w_j ~ (1 \le j \le m)$
are required in order to impose the threshold conditions within $R_l$'s body. 
As already explained at the beginning of this subsection, 
$e_j \dgeq^? \!\!w_j$ means that either $w_j =\ ?$ or else $w_j \in \aqdom$ and $e_j \dgeq w_j$.
Now we can define:
\begin{enumerate}
\item
$\I$ is a \emph{model} of a program rule $R_l \in \Prog$
(in symbols, $\I \model{\simrel,\qdom,\cdom} R_l$)
iff every defined observable qc-atom $\varphi$ which is an immediate consequence of $\I$ via $R_l$ verifies $\varphi \in \I$; and
$\I$ is a \emph{model} of $\Prog$ (in symbols, $\I \model{\simrel,\qdom,\cdom} \Prog$)
iff $\I$ is a model of every program rule $R_l \in \Prog$.
\item
$\varphi$ is a \emph{semantic consequence} of $\Prog$ (in symbols, $\Prog \model{\simrel,\qdom,\cdom} \varphi$)
iff $\I \isqchlrdc \varphi$ for every qc-interpretation $\I$ such that $\I \model{\simrel,\qdom,\cdom} \Prog$. \mathproofbox
\end{enumerate}
\end{defn}

The next example may serve as a concrete illustration:

\begin{exmp}[Models and semantic consequence]
\label{exmp:semantic-consequence}
Recall the $\sqclp{\simrel}{\U}{\rdom}$-program $\Prog$ from Example \ref{exmp:running}.
Let us show that the three qc-atoms $\varphi_1$, $\varphi_2$ and $\varphi_3$ from Example \ref{exmp:qc-atoms}
are semantic consequences of $\Prog$:
\begin{enumerate}
  \item
  Assume an arbitrary model $\I \model{\simrel,\U,\rdom} \Prog$.
  Note that the atom underlying $\varphi_1$ and the head atom of $R_1$ are $q(X,c'(Y))$ and $q(X,c(X))$, respectively.
  Since $\simrel(c,c') = 0.9$ and $\Pi \model{\cdom} X == Y$, $\varphi_1$ can be obtained as an immediate consequence of $\I$ via $R_1$ using $\theta = \varepsilon$.
  Therefore $\varphi_1 \in \I$ and we can conclude that $\Prog \model{\simrel,\U,\rdom} \varphi_1$.
   \item
   Assume an arbitrary model $\I \model{\simrel,\U,\rdom} \Prog$.
   Consider the substitution $\theta = \{Y \mapsto c'(Y)\}$.
   Note that the atom underlying $\varphi_2$ and the head atom of $R_2\theta$ are $p'(c'(Y),c(X))$ and $p(c(X),c'(Y))$, respectively.
   Moreover, $\varphi_1 \in \I$ (due to the previous item) and the atom $q(X,c'(Y))$ underlying $\varphi_1$
  is the same as the atom in the body of $R_2\theta$.
  These facts together with $\simrel(p,p') = 0.8$, $\simrel(c,c') = 0.9$  and $\Pi \model{\cdom} X == Y$
  allow to obtain $\varphi_2$ as an immediate consequence of $\I$ via $R_2$.
  Therefore $\varphi_2 \in \I$ and we can conclude that $\Prog \model{\simrel,\U,\rdom} \varphi_2$.
  \item
  Assume an arbitrary model $\I \model{\simrel,\U,\rdom} \Prog$.
   Consider again the substitution $\theta = \{Y \mapsto c'(Y)\}$.
   Note that the atom underlying $\varphi_3$ and the head atom of $R_3\theta$ are $r(c'(Y),c(X),Z)$ and $r(c(X),c'(Y),Z)$, respectively.
   Moreover, the two annotated atoms $\qat{B_j\theta}{w_j} ~ (1 \le j \le 2)$ occurring in the body of $R_3\theta$ are such that
   $\I \isqchlrdc \cqat{B_j\theta}{e_j}{\Pi}$ for suitable values $e_j \geq^? w_j$, namely $e_1 = 0.9$ and $e_2 = 1.0$.
   Note that $e_1 = 0.9$ works because $B_1\theta$ is the atom $q(X,c'(Y))$ underlying  $\varphi_1$
   and $\varphi_1 \in \I$, as proved in the first item of this example. On the  other hand, $e_2 = 1.0$ works because
   $B_2\theta$ is the primitive atom $cp_{\geq}(X,0.0)$ which is trivially entailed by $\Pi$.
   All these facts, together with $\simrel(c,c') = 0.9$, $0.8 \leq 0.9 \times 0.9$  and $\Pi \model{\cdom} X == Y$
   allow to obtain $\varphi_3$ as an immediate consequence of $\I$ via $R_3$.
  Therefore $\varphi_3 \in \I$ and we can conclude that $\Prog \model{\simrel,\U,\rdom} \varphi_3$.  \enspace  \mathproofbox
\end{enumerate}
\end{exmp}

Now we are ready to obtain results on the declarative semantics of programs in the $\mbox{SQCLP}$ scheme.
We will characterize the observable consequences of a given program $\Prog$ in two different, but equivalent, ways:
either using the interpretation transformer presented in Subsection \ref{sec:sqclp:fixpointsem},
or using the extension of Horn Logic presented in Subsection \ref{sec:sqclp:sqchl}.
In both approaches, we will prove the existence of a least model $\Mp$ for each given program $\Prog$.

\subsubsection{A Fixpoint Semantics}
\label{sec:sqclp:fixpointsem}

A well-known way of characterizing models and least models of programs in declarative languages
proceeds by considering a lattice structure for the family of all program interpretations,
and using an interpretation transformer to compute the immediate consequences obtained from program rules.
This kind of approach is well known for logic programming \cite{VEK76,AVE82,Llo87,Apt90}
and constraint logic programming \cite{GL91,GDL95,JMM+98}.
It has been used also in various extensions of logic programming designed to support uncertain reasoning,
such as quantitative logic programming \cite{VE86},
its extension to qualified logic programming \cite{RR08}
quantitative constraint logic programming \cite{Rie96,Rie98phd},
similarity-based logic programming \cite{Ses02}
and proximity-based logic programming in the sense of  \textsf{Bousi}$\sim$\textsf{Prolog}  \cite{JR09}.

The $\mbox{SQCLP}$ scheme  is intended to unify all these logic programming extensions in a common framework.
This subsection is based on the declarative semantics given in \cite{RR08,RR08TR},
extended to deal with constraints and proximity relations.
Our first result provides a lattice of program interpretations.


\begin{prop}[Lattice of Interpretations]
\label{prop:intdc-lattice}
$\Intdc$, defined as the set of all qc-interpretations over the qualification domain $\qdom$ and the constraint domain $\cdom$, is a complete lattice w.r.t. the set inclusion ordering $\subseteq$. Moreover, the bottom element $\ibot$ and the top element $\itop$ of this lattice are characterized as $\ibot = \emptyset$ and $\itop = \{\varphi \mid \varphi \mbox{ is a defined observable qc-atom}\}$ and for any subset $I \subseteq \Intdc$ its greatest lower bound (glb) and least upper bound (lub) are characterized as follows:
\begin{enumerate}
\item \label{prop:intdc-lattice:infimum}
The glb of $I$ (written as $\infi I$) is $\inter_{\I \in I} \I$, understood as $\itop$ if $I = \emptyset$; and
\item \label{prop:intdc-lattice:supremum}
The lub of $I$ (written as $\supr I$) is $\union_{\I \in I} \I$, understood as $\ibot$ if $I = \emptyset$.
\end{enumerate}
\end{prop}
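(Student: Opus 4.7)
The plan is to verify that $\Intdc$ is closed under arbitrary intersections and unions (the two closure conditions from Definition \ref{defn:interpretations} carry through in each case), and then to check that the glb and lub in $\Intdc$ coincide with set-theoretic intersection and union, with the stated bottom and top as degenerate cases.

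First I would check that the candidates $\ibot$ and $\itop$ are themselves qc-interpretations. For $\ibot = \emptyset$ both conditions in Definition \ref{defn:interpretations}(1) hold vacuously. For $\itop$, condition~(a) holds by construction, and for condition~(b) I observe that whenever $\varphi \entail{\qdom,\cdom} \varphi'$ with $\varphi$ a defined qc-atom, the witnessing substitution $\theta$ produces $\varphi'$ with underlying atom $A\theta$, which is still a defined atom; since $\varphi'$ is additionally assumed observable, $\varphi' \in \itop$. Hence $\itop \in \Intdc$.

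Next I would show closure of $\Intdc$ under arbitrary intersections and unions. For a family $I \subseteq \Intdc$ with $I \neq \emptyset$, each element of $\bigcap_{\I \in I} \I$ is a defined observable qc-atom (since it lies in some, hence any, $\I \in I$); and if $\varphi$ lies in every $\I \in I$ and $\varphi \entail{\qdom,\cdom} \varphi'$, then closure of each $\I$ under $\entail{\qdom,\cdom}$ places $\varphi'$ in every $\I$, so $\varphi' \in \bigcap_{\I \in I} \I$. Dually, each element of $\bigcup_{\I \in I} \I$ is a defined observable qc-atom because it lies in some $\I \in I$; and if $\varphi \in \I_0$ for some $\I_0 \in I$ and $\varphi \entail{\qdom,\cdom} \varphi'$, then $\varphi' \in \I_0 \subseteq \bigcup_{\I \in I} \I$. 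The edge case $I = \emptyset$ is handled by the conventions stated in the proposition: the empty intersection equals $\itop$ and the empty union equals $\ibot$, both of which belong to $\Intdc$ by the first step.

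Finally I would argue that these set-theoretic operations deliver the glb and lub in the poset $(\Intdc,\subseteq)$. For any non-empty $I$, $\bigcap_{\I \in I} \I$ is contained in every $\I \in I$, and any qc-interpretation $\J$ with $\J \subseteq \I$ for every $\I \in I$ satisfies $\J \subseteq \bigcap_{\I \in I} \I$; the dual statement holds for $\bigcup_{\I \in I} \I$. The edge case $I = \emptyset$ is immediate: every element of $\Intdc$ is vacuously an upper (resp.\ lower) bound of $\emptyset$, so the glb is $\itop$ and the lub is $\ibot$. This establishes completeness of the lattice with the stated glb, lub, bottom and top. There is no real obstacle here; the only point that requires a moment's care is verifying that intersections preserve the closure-under-$\entail{\qdom,\cdom}$ condition, together with the handling of the empty-family edge case.
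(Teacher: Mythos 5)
Your proposal is correct and follows essentially the same route as the paper's proof: verify that $\ibot$ and $\itop$ are qc-interpretations, show that arbitrary intersections and unions of qc-interpretations remain qc-interpretations (the closure under $\entail{\qdom,\cdom}$ being the only point needing a check), and then conclude that these coincide with the glb and lub in $(\Intdc,\subseteq)$, with the empty-family conventions handled as stated.
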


\begin{proof*}
Both $\ibot$ and $\itop$ are qc-interpretations because they are sets of defined observable qc-atoms
and they are closed under $(\qdom, \cdom)$-entailment for trivial reasons, namely:
$\ibot$ is empty and $\itop$ includes all the defined observables.
Moreover, they are the minimum and the maximum of $\Intdc$ w.r.t. $\subseteq$
because $\ibot \subseteq \I \subseteq \itop$ is trivially true for each $\I \in \Intdc$.
Thus, we have only left to prove \ref{prop:intdc-lattice:infimum}. and \ref{prop:intdc-lattice:supremum}.:
\begin{enumerate}
\item
$\inter_{\I \in I} \I$ is obviously a set of defined observable qc-atoms because this is the case for each $\I \in I$.
Given any $\varphi \in \inter_{\I \in I}$ and any observable defined qc-atom $\varphi'$ such that $\varphi \entail{\qdom,\cdom} \varphi'$,
we get $\varphi' \in \inter_{\I \in I} \I$ as an obvious consequence of the fact that each $\I \in I$  is closed under $(\qdom, \cdom)$-entailment.
Therefore, $\inter_{\I \in I} \I \in \Intdc$.
Obviously,  $\inter_{\I \in I} \I$ is trivially a lower bound of $I$ w.r.t. $\subseteq$.
Moreover,  $\inter_{\I \in I} \I$ is the glb of $I$, because any given lower bound $\J$ of $I$
verifies $\J \subseteq \I$ for every $\I \in I$ and thus $\J \subseteq \inter_{\I \in I} \I$.
Therefore, $\inter_{\I \in I} \I = \infi I$.
\item
Using the properties of the union of a family of sets it is easy to prove that $\union_{\I \in I} \I \in \Intdc$
and also that $\union_{\I \in I} \I$ is the lub of $I$ w.r.t. $\subseteq$.
A more detailed reasoning would be similar to the previous item.
Therefore, $\union_{\I \in I} \I = \supr I$. \mathproofbox
\end{enumerate}
\end{proof*}

Next we define an \emph{interpretation transformer} $\Tp$, intended to compute the immediate consequences obtained from a given qc-interpretation via the program rules belonging to $\Prog$.


\begin{defn}[Interpretations Transformer]
\label{defn:tp}
Let $\Prog$ be a fixed $\sqclp{\simrel}{\qdom}{\cdom}$-program.
The interpretations transformer $\Tp : \Intdc \to \Intdc$ is defined by the condition:
$$\Tp(\I) \eqdef \{ \varphi \mid \varphi \mbox{ is an immediate consequence of } \I \mbox{ via some } R_l \in \Prog \}
\enspace . \mathproofbox$$
\end{defn}

The computation of immediate consequences of a given qc-interpretation $\I$ via a given program rule $R_l$
has been already explained in  Definition \ref{defn:models}. The following example illustrates the workings of $\Tp$.

\begin{exmp}[Interpretation transformer in action]
\label{exmp:tp} Recall again the $\sqclp{\simrel}{\U}{\rdom}$-program $\Prog$ from Example
\ref{exmp:running} and the observable defined qc-atoms $\varphi_1$, $\varphi_2$ and $\varphi_3$
from Example \ref{exmp:qc-atoms}. Then:
\begin{enumerate}
\item
The arguments given in Example \ref{exmp:semantic-consequence}(1)
can be easily reused to show that $\varphi_1$ is an immediate consequence of the empty interpretation $\ibot$ via the program rule $R_1$.
Therefore, $\varphi_1 \in \Tp(\ibot)$.
\item
The arguments given in Example \ref{exmp:semantic-consequence}(2)
can be easily reused to show that $\varphi_1$ is an immediate consequence of $\I$ via the program rule $R_2$,
provided that $\varphi_1 \in \I$. Therefore, $\varphi_2 \in \Tp(\Tp(\ibot))$.
\item
The arguments given in Example \ref{exmp:semantic-consequence}(3)
can be easily reused to show that $\varphi_3$ is an immediate consequence of $\I$ via the program rule $R_3$,
provided that $\varphi_1 \in \I$. Therefore, $\varphi_3 \in \Tp(\Tp(\ibot))$. \mathproofbox
\end{enumerate}
\end{exmp}

The next proposition states the main properties of interpretation transformers.

\begin{prop}[Properties of interpretation transformers]
\label{prop:tp-properties}
Let $\Prog$ be any fixed $\sqclp{\simrel}{\qdom}{\cdom}$-program. Then:
\begin{enumerate}
\item\label{prop:tp-properties:1}
$\Tp$ is a well defined mapping, 
i.e. for all $\I \in \Intdc$ one has  $\Tp(\I) \in \Intdc$.
\item\label{prop:tp-properties:2}
$\Tp$ is monotonic and continuous.
\item\label{prop:tp-properties:3}
For all $\I \in \Intdc$ one has: $\I \model{\simrel,\qdom,\cdom} \Prog \Longleftrightarrow \Tp(\I) \subseteq \I$,
That is, the models of $\Prog$ are precisely the pre-fixpoints of $\Tp$.
\end{enumerate}
\end{prop}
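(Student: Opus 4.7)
For part (\ref{prop:tp-properties:1}), the plan is to verify both defining conditions of a qc-interpretation for $\Tp(\I)$. Every $\varphi \in \Tp(\I)$ is an observable defined qc-atom by construction, since Definition \ref{defn:models} applies only to observable qc-atoms and the matched head $p(\ntup{t}{n})$ of any rule is a defined atom. For closure under $\entail{\qdom,\cdom}$, I would suppose $\varphi \in \Tp(\I)$ is witnessed by a rule $R_l$, substitution $\theta$ and values $d_0,\ldots,d_n,e_1,\ldots,e_m$, and suppose $\varphi \entail{\qdom,\cdom} \varphi'$ via a substitution $\theta'$ with $d' \dleq d$ and $\Pi' \model{\cdom} \Pi\theta'$, where $\varphi'$ is observable and defined. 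The claim is that $\varphi'$ is itself an immediate consequence of $\I$ via $R_l$, now with the composed substitution $\theta\theta'$ and the same tuple $d_0,\ldots,d_n,e_1,\ldots,e_m$. Condition (a) of Definition \ref{defn:models} is unchanged, condition (d) follows from $d' \dleq d$, and conditions (b) and (c) amount to finitely many validity statements of the form $\I \isqchlrdc \cqat{(t'_i == t_i\theta)}{d_i}{\Pi}$ or $\I \isqchlrdc \cqat{B_j\theta}{e_j}{\Pi}$, each of which transports to the corresponding statement for $\theta\theta'$ and $\Pi'$ by one application of Proposition \ref{prop:ep-i}.

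For part (\ref{prop:tp-properties:2}), monotonicity reduces to the observation that $\I \subseteq \J$ implies $\I \isqchlrdc \psi \Rightarrow \J \isqchlrdc \psi$: for equational and primitive $\psi$ validity does not reference the interpretation at all (it depends only on $\approx_{d,\Pi}$ or $\Pi \model{\cdom} \kappa$), and for defined $\psi$ validity is just membership. Plugging $\J$ into the witnesses for $\varphi \in \Tp(\I)$ then yields $\varphi \in \Tp(\J)$. For continuity I would fix a directed family $I \subseteq \Intdc$ and an arbitrary $\varphi \in \Tp(\supr I) = \Tp(\union_{\I \in I} \I)$. The witness for $\varphi$ mentions only finitely many statements of the form $(\union_{\I \in I} \I) \isqchlrdc \psi_k$; of these, only the ones with $\psi_k$ defined actually inspect the union, and for each such $\psi_k$ there is some $\I_k \in I$ already containing it. Directedness then yields a single $\I^\ast \in I$ that dominates all those $\I_k$, and hence $\varphi \in \Tp(\I^\ast) \subseteq \supr\{\Tp(\I) \mid \I \in I\}$. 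The reverse inclusion is immediate from monotonicity and Proposition \ref{prop:intdc-lattice}.

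For part (\ref{prop:tp-properties:3}), the equivalence is essentially definition unfolding. By Definition \ref{defn:models}, $\I \model{\simrel,\qdom,\cdom} \Prog$ means that for every rule $R_l \in \Prog$, every defined observable qc-atom $\varphi$ which is an immediate consequence of $\I$ via $R_l$ belongs to $\I$; this is precisely the condition that every element of $\Tp(\I)$ lies in $\I$, i.e.\ $\Tp(\I) \subseteq \I$. Both implications read this equivalence in either direction.

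The step I expect to be most delicate is the closure-under-$\entail{\qdom,\cdom}$ argument in part (\ref{prop:tp-properties:1}): one must verify that Proposition \ref{prop:ep-i} can be applied uniformly to the three possible shapes (defined, equational, primitive) of the validity statements occurring in conditions (b) and (c), and then confirm that the resulting substitution $\theta\theta'$ together with the unchanged tuple of qualification values still satisfies bound (d). Everything else is straightforward bookkeeping on witnesses.
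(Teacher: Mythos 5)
Your proposal is correct and follows essentially the same route as the paper's proof: closure of $\Tp(\I)$ under $\entail{\qdom,\cdom}$ via the same rule, the composed substitution and the unchanged qualification values (the paper handles the equational premises with Lemma \ref{lem:slpeq} where you invoke Proposition \ref{prop:ep-i}, but that proposition's equational case is itself Lemma \ref{lem:slpeq}, so this is not a genuinely different argument), monotonicity from monotone validity, continuity from the finiteness of the witnessing premises plus directedness, and part (3) by unfolding Definitions \ref{defn:models} and \ref{defn:tp}. No gaps worth noting.
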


\begin{proof*}
\begin{enumerate}
\item
By definition, $\Tp(\I)$ is a set of observable defined qc-atoms. 
It is sufficient to prove that it is closed under  $(\qdom, \cdom)$-entailment.
Let us assume two observable defined qc-atoms $\varphi$ and $\varphi'$ such that $\varphi \in \Tp(\I)$ 
and $\varphi \entail{\qdom,\cdom} \varphi'$.
Because of $\varphi \entail{\qdom,\cdom} \varphi'$ we can assume $\varphi : \cqat{p(\ntup{t}{n})}{d}{\Pi}$, 
$\varphi' : \cqat{p(\ntup{t'}{n})}{d'}{\Pi'}$
and some substitution $\theta $ such that $p(\ntup{t'}{n}) = p(\ntup{t}{n})\theta$, $d' \dleq d$ and $\Pi' \model{\cdom} \Pi\theta$.
Because of $\varphi \in \Tp(\I)$, we can assume that $\varphi$ is an immediate consequence of $\I$ via some $R_l \in \Prog$.
More precisely, we can assume $(R_l : q(\ntup{s}{n}) \qgets{\alpha} \qat{B_1}{w_1}, \ldots, \qat{B_m}{w_m}) \in \Prog$, some substitution $\sigma$
and some qualification values $d_0, d_1, \ldots,$ $d_n, e_1, \ldots, e_m \in \aqdom$ such that
\begin{enumerate}
\item
$\simrel(p,q) = d_0$,
\item
$\I \isqchlrdc \cqat{(t_i == s_i\sigma)}{d_i}{\Pi}$ for $i = 1 \ldots n$,
\item
$\I \isqchlrdc \cqat{B_j\sigma}{e_j}{\Pi}$ with $e_j \dgeq^? w_j$ for $j = 1 \ldots m$,
\item
$d \dleq \bigsqcap_{i = 0}^{n}d_i \sqcap \alpha \circ \bigsqcap_{j = 1}^m e_j$ 
[i.e., $d \dleq d_i ~ (0 \le i \le n)$ and $d \dleq \alpha \circ e_j ~ (1 \le j \le m)$].
\end{enumerate}
In order to show that $\varphi' \in \Tp(\I)$, we claim that $\varphi'$ can be computed as an immediate consequence of $\I$
via the same program rule $R_l$, using the substitution $\sigma\theta$ and the qualification values
$d_0, d_1, \ldots,$ $d_n, e_1, \ldots, e_m \in \aqdom$. To justify this claim it is enough to check the following items:
\begin{enumerate}
\item[(a')]
$\simrel(p,q) = d_0$,
\item[(b')]
$\I \isqchlrdc \cqat{(t'_i == s_i\sigma\theta)}{d_i}{\Pi'}$ for $i = 1 \ldots n$,
\item[(c')]
$\I \isqchlrdc \cqat{B_j\sigma\theta}{e_j}{\Pi'}$ with $e_j \dgeq^? w_j$ for $j = 1 \ldots m$,
\item[(d')]
$d \dleq \bigsqcap_{i = 0}^{n}d_i \sqcap \alpha \circ \bigsqcap_{j = 1}^m e_j$ 
[i.e., $d \dleq d_i ~ (0 \le i \le n)$ and $d \dleq \alpha \circ e_j ~ (1 \le j \le m)$].
\end{enumerate}
These four items closely correspond to items (a)-(d) above. More specifically: \\
--- Items (a') and (d') are identical to items (a) and (d), respectively. \\
--- Regarding item (b'):
For $i = 1 \ldots n$,  $\I \isqchlrdc \cqat{(t'_i == s_i\sigma\theta)}{d_i}{\Pi}$ is the same as
$t_i\theta \approx_{d_i, \Pi'} s_i\sigma\theta$.
Because of Lemma \ref{lem:slpeq}, this is a consequence of $\Pi' \model{\cdom} \Pi\theta$
and $t_i \approx_{d_i, \Pi} s_i\sigma$,
which is ensured by item (b). \\
--- Regarding  item (c'):
For $j = 1 \ldots m$, $e_j \dgeq^? w_j$ is ensured by item (c),
and $\I \isqchlrdc \cqat{B_j\sigma\theta}{e_j}{\Pi'}$ follows from $\I \isqchlrdc \cqat{B_j\sigma}{e_j}{\Pi}$
--also ensured by item (c)--
and the entailment property for interpretations (Proposition \ref{prop:ep-i}),
which can be applied because $\cqat{B_j\sigma}{e_j}{\Pi} \entail{\qdom,\cdom} \cqat{B_j\sigma\theta}{e_j}{\Pi'}$.

\item
Monotonicity means that the inclusion  $\Tp (\I) \subseteq \Tp(\J)$ holds whenever $\I \subseteq \J$.
This follows very easily from
$$(\spadesuit) \quad \I \isqchlrdc \varphi \mbox{ and } \I \subseteq \J \Longrightarrow \J \isqchlrdc \varphi$$
which is a trivial consequence of Definition \ref{defn:interpretations}.

Continuity means that the equation $\Tp(\supr I) = \supr \{\Tp(\I) \mid \I \in I \}$ holds for any directed set $I \subseteq \Intdc$ of qc-interpretations.
Recall that $I \subseteq \Intdc$ is called directed iff every finite subset $I_0 \subseteq I$ has some upper bound $\I \in I$.
We show that $\Tp(\supr I) = \supr \{Tp(\I) \mid \I \in I \}$ holds by proving the  two inclusions separately:

\begin{enumerate}
\item
For each fixed $\I_0 \in I$, $\Tp(\I_0) \subseteq  \Tp(\supr I)$ follows from $\I_0 \subseteq \supr I$ and monotonicity of $\Tp$.
Then, the inclusion $\supr \{\Tp(\I) \mid \I \in I \} \subseteq \Tp(\supr I)$ holds by definition of supremum.
\item
In order to prove the opposite inclusion $\Tp(\supr I) \subseteq \supr \{\Tp(\I) \mid \I \in I \}$,
consider an arbitrary $\varphi \in \Tp(\supr I)$. Due to Definition \ref{defn:tp}, $\varphi$ is an
immediate consequence of $\supr I$ via some program rule $R_l \in \Prog$. Because of the first item
of Definition \ref{defn:models}, $\varphi$ is an immediate consequence of $\supr I$ via $R_l$ due
to finitely many qc-facts of the form $\cqat{B_j\theta}{e_j}{\Pi}$ (coming from the body of a
suitable instance of $R_l$) that are valid in $\supr I$. Because of $(\spadesuit)$ and the
assumption that  $I$ is a directed set,  it is possible to choose some $\I_0 \in I$ such that all
the qc-facts $\cqat{B_j\theta}{e_j}{\Pi}$ are valid in $\I_0$. Then, $\varphi$ is an immediate
consequence of this particular $\I_0 \in I$ via $R_l$. Therefore, $\varphi \in \Tp(\I_0) \subseteq
\supr \{\Tp(\I) \mid \I \in I \}$.
\end{enumerate}

\item
According to Definition \ref{defn:models}, $\I \model{\simrel,\qdom,\cdom} \Prog$ holds iff
every observable defined qc-atom $\varphi$ which is an immediate consequence of $\I$ via the program rules $R_l \in \Prog$ verifies $\varphi \in \I$.
According to Definition \ref{defn:tp}, $\Tp(\I)$ is just the set of all the defined observable qc-atoms $\varphi$
that can be obtained as immediate consequences of $\I$ via the program rules $R_l \in \Prog$.
Consequently, $\I \model{\simrel,\qdom,\cdom} \Prog$ holds iff $\Tp(\I) \subseteq \I$.  \enspace \mathproofbox
\end{enumerate}
\end{proof*}

The theorem below is the main result in this subsection.

\begin{thm}[Fixpoint characterization of least program models]
\label{thm:tp-leastmodel}
Every $\sqclp{\simrel}{\qdom}{\cdom}$-program $\Prog$ has a \emph{least model} $\Mp$,
smaller than any other model of $\Prog$ w.r.t. the set inclusion ordering of the interpretation lattice $\Intdc$.
Moreover, $\Mp$ can be characterized as the {\em least fixpoint} of $\Tp$ as follows:
$$\Mp = l\!f\!p(\Tp) = \union_{k\in\NAT} \Tp{\uparrow^k}(\ibot) \enspace . \mathproofbox$$
\end{thm}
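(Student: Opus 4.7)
The plan is to derive the theorem as a direct application of the classical Knaster--Tarski and Kleene fixpoint theorems, using the structural results already established in Propositions \ref{prop:intdc-lattice} and \ref{prop:tp-properties} as black boxes. Since Proposition \ref{prop:intdc-lattice} provides that $\langle \Intdc, \subseteq \rangle$ is a complete lattice with bottom $\ibot = \emptyset$ and top $\itop$, and Proposition \ref{prop:tp-properties}(2) establishes that $\Tp : \Intdc \to \Intdc$ is monotonic and continuous, the classical fixpoint machinery applies in a purely routine way.

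First I would invoke Kleene's fixpoint theorem: for any monotonic and continuous operator on a complete lattice, the least fixpoint exists and is given by the supremum of the ascending Kleene chain starting at $\ibot$. Concretely, one considers the chain $\ibot \subseteq \Tp(\ibot) \subseteq \Tp^2(\ibot) \subseteq \cdots$, which is ascending by monotonicity of $\Tp$ applied inductively to $\ibot \subseteq \Tp(\ibot)$ (the base inclusion being trivial since $\ibot = \emptyset$). Set $I = \{\Tp^k(\ibot) \mid k \in \NAT\}$; this is a directed subset of $\Intdc$, so by continuity we get $\Tp(\supr I) = \supr \{\Tp(\I) \mid \I \in I\} = \supr \{\Tp^{k+1}(\ibot) \mid k \in \NAT\} = \supr I$, where the last equality uses that adding $\ibot$ to a chain does not change its supremum (recall $\supr$ is computed as union by Proposition \ref{prop:intdc-lattice}(2)). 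Hence $\supr I = \union_{k \in \NAT}\Tp^k(\ibot)$ is a fixpoint of $\Tp$.

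Next I would show that this fixpoint is the least one, and simultaneously that it coincides with the least model. By Proposition \ref{prop:tp-properties}(3), an interpretation $\I \in \Intdc$ is a model of $\Prog$ iff $\Tp(\I) \subseteq \I$, i.e.\ iff $\I$ is a pre-fixpoint of $\Tp$. A straightforward induction on $k$ then shows that $\Tp^k(\ibot) \subseteq \I$ for every pre-fixpoint $\I$: the base case uses $\ibot \subseteq \I$, and the inductive step uses monotonicity together with $\Tp^{k+1}(\ibot) = \Tp(\Tp^k(\ibot)) \subseteq \Tp(\I) \subseteq \I$. Taking the supremum on $k$ yields $\union_{k \in \NAT}\Tp^k(\ibot) \subseteq \I$ for every model $\I$, so the computed fixpoint is a lower bound of the model class; since it is itself a fixpoint, it is a pre-fixpoint and hence a model. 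Defining $\Mp \eqdef \union_{k \in \NAT}\Tp^k(\ibot)$ gives both the least model and the least fixpoint of $\Tp$, which is the desired characterization.

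I do not anticipate any genuine obstacle: all semantic content (well-definedness of $\Tp$, closure of interpretations under $(\qdom,\cdom)$-entailment, lattice structure, monotonicity, continuity, and the pre-fixpoint/model correspondence) has been packaged into the preceding propositions, so the theorem reduces to the textbook Kleene argument. The only minor point requiring care is justifying that the Kleene chain is directed so that continuity of $\Tp$ applies as stated in Proposition \ref{prop:tp-properties}(2); this follows because the chain is totally ordered by $\subseteq$ via the straightforward induction mentioned above.
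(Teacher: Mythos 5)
Your proposal is correct and takes essentially the same route as the paper: the paper's proof simply invokes the classical Knaster--Tarski/Kleene fixpoint results and combines them with Propositions \ref{prop:intdc-lattice} and \ref{prop:tp-properties}, which is exactly the argument you spell out in detail (Kleene chain, continuity giving a fixpoint, and the pre-fixpoint/model correspondence giving leastness).
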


\begin{proof}
As usual, a given $\I \in \Intdc$ is called a fixpoint of $\Tp$ iff $\Tp(\I) = \I$,
and $\I$ is called a pre-fixpoint of $\Tp$ iff $\Tp(\I) \subseteq \I$.
Due to a well-known theorem by Knaster and Tarski, see \cite{Tar55}, a monotonic mapping from a complete lattice into itself always has a least fixpoint which is also its least pre-fixpoint. In the case that the mapping is continuous, its least fixpoint can be characterized as the lub of the sequence of lattice elements obtained by reiterated application of the mapping to the bottom element. Combining these results with Proposition \ref{prop:tp-properties} trivially proves the theorem.
\end{proof}

\subsubsection{An equivalent Proof-theoretic Semantics}
\label{sec:sqclp:sqchl}


In order to give a logical view of program semantics and an alternative characterization of least program models,
we define the \emph{Proximity-based Qualified Constrained Horn Logic} $\SQCHL(\simrel,\qdom,\cdom)$
as a formal inference system consisting of the three inference rules displayed in Figure \ref{fig:sqchl}.

\begin{figure}[h]
  \figrule
  \centering
  \begin{tabular}{ll}&\\
  \textbf{SQDA} & $\displaystyle\frac
    {~ (~ \cqat{(t'_i == t_i\theta)}{d_i}{\Pi} ~)_{i=1 \ldots n} \quad (~ \cqat{B_j\theta}{e_j}{\Pi} ~)_{j=1 \ldots m} ~}
    {\cqat{p'(\ntup{t'}{n})}{d}{\Pi}}$ \\ \\
  &if $(p(\ntup{t}{n}) \qgets{\alpha} \qat{B_1}{w_1}, \ldots, \qat{B_m}{w_m}) \in \Prog$, $\theta$ subst.,
  $\simrel(p',p) = d_0 \neq \bt$,\\ 
  &$e_j \dgeq^? w_j ~ (1 \le j \le m)$ and $d \dleq \bigsqcap_{i = 0}^{n}d_i \sqcap \alpha \circ \bigsqcap_{j = 1}^m e_j$.\\
  &\\
  \textbf{SQEA} & $\displaystyle\frac
    {}
    {\quad \cqat{(t == s)}{d}{\Pi} \quad}$ ~
  if $t \approx_{d, \Pi} s$. \\
  &\\
  \textbf{SQPA} & $\displaystyle\frac
    {}
    {\quad \cqat{\kappa}{d}{\Pi} \quad}$ ~
  if $\Pi \model{\cdom} \kappa$. \\
  &\\
  \end{tabular}
  \caption{Proximity-based Qualified Constrained Horn Logic}
  \label{fig:sqchl}
  \figrule
\end{figure}


The three inference rules are intended to work with observable qc-atoms.
Rule \textbf{SQDA} is used to infer defined qc-atoms.
It formalizes an extension of  the classical \emph{Modus Ponens}  inference,
allowing  to infer a defined qc-atom $\cqat{p'(\ntup{t'}{n})}{d}{\Pi}$ by means of an instance of a program clause
with head $p(\ntup{t}{n})\theta$ and body atoms $\qat{B_j\theta}{w_j}$.
The $n$ premises $\cqat{(t'_i == t_i\theta)}{d_i}{\Pi}$ combined with the
side condition $\simrel(p',p) = d_0 \neq \bt$  ensure the ``equality'' between
$p'(\ntup{t'}{n})$ and $p(\ntup{t}{n})\theta$ modulo $\simrel$;
the $m$ premises $\cqat{B_j\theta}{e_j}{\Pi}$ require to prove the body atoms;
and the side conditions $e_j \dgeq^? w_j$ and $d \dleq \bigsqcap_{i =
0}^{n}d_i \sqcap \alpha \circ \bigsqcap_{j = 1}^m e_j$ check the threshold conditions of the body
atoms and impose the proper relationships between the qualification
value attached to  the conclusion and  the qualification values attached to the premises.
In particular, the inequality $d \dleq \alpha \circ \bigsqcap_{j = 1}^m e_j$ is imposed,
meaning that the qualification value attached to a clause's head cannot exceed the glb
of the qualification values attached to the body atoms attenuated by the clause's attenuation factor.
Rules \textbf{SQEA} and \textbf{SQPA} are used to infer equational and primitive qc-atoms, respectively.
Rule \textbf{SQEA} is designed to work with term proximity w.r.t. $\Pi$ in the sense of 
Definition \ref{defn:Pi-prox}, inferring  $\cqat{(t  == s)}{d}{\Pi}$ just in the case that $t \approx_{d, \Pi} s$ holds.
Rule  \textbf{SQPA} infers $\cqat{\kappa}{d}{\Pi}$ for an arbitrary $d \in \aqdom$, provided that $\Pi \model{\cdom} \kappa$ holds. This makes sense because the requirements for admissible triples in Definition \ref{defn:simrel:admissible} include the assumption that 
$\simrel(p,p') \neq \bt$ cannot happen if $p, p'  \in PP$ are syntactically different primitive predicate symbols. 

As usual in formal inference systems, $\SQCHL(\simrel,\qdom,\cdom)$ proofs can be represented as {\em proof trees} $T$ whose nodes correspond to qc-atoms, each node being  inferred from its children by means of some  $\SQCHL(\simrel,\qdom,\cdom)$ inference step. In the rest of the report we will use the following notations:

\begin{itemize}
\item
$\Vert T \Vert$ will denote the {\em size} of  the proof tree $T$,  measured as its number of nodes,
which equals   the number of inference steps in the $\SQCHL(\simrel,\qdom,\cdom)$ proof represented by $T$.
\item
$\Vert T \Vert_d$ will denote the number of nodes of  the proof tree $T$ that represent conclusions of
\textbf{SQDA} inference steps. Obviously, $\Vert T \Vert_d \leq \Vert T \Vert$.
\item
$\Prog \sqchlrdc \varphi$ will indicate that $\varphi$ can be inferred
from $\Prog$ in $\SQCHL(\simrel,\qdom,\cdom)$.
\item
$\Prog \sqchlrdcn{k} \varphi$ will indicate that $\varphi$ can be inferred from $\Prog$ in $\SQCHL(\simrel,\qdom,\cdom)$ using some proof tree $T$ such that $\Vert T \Vert_d = k$.
\end{itemize}

The next example  shows a $\SQCHL(\simrel,\U,\rdom)$ proof tree.


\begin{exmp}[$\SQCHL(\simrel,\qdom,\cdom)$ proof tree]
\label{exmp:sqchl-inference}
Recall the proximity relation $\simrel$ and the program  $\Prog$ from our running Example \ref{exmp:running},
as well as the observable qc-statement $\varphi_2 = \cqat{p'(c'(Y),c(X))}{0.8}{\Pi}$ already known from Example \ref{exmp:qc-atoms}.
A $\SQCHL(\simrel,\U,\rdom)$ proof tree witnessing $\Prog \sqchl{\simrel}{\U}{\rdom} \varphi_2$ can be displayed as follows:

$$
\spadesuit = \displaystyle\frac
{~
  \displaystyle\frac{}{\cqat{(Y == Y)}{1.0}{\Pi}} ~ (5) \qquad
  \displaystyle\frac{}{\cqat{(c(X) == c(Y))}{1.0}{\Pi}} ~ (6)
~}
{\cqat{q(Y,c(X))}{1.0}{\Pi}} ~ (4)
$$

$$
\displaystyle\frac
{~
  \displaystyle\frac{}{\cqat{(c'(Y) == c(Y))}{0.8}{\Pi}} ~ (2) \qquad
  \displaystyle\frac{}{\cqat{(c(X) == c(X))}{1.0}{\Pi}} ~ (3) \qquad
  \spadesuit ~ (4)
~}
{\cqat{p'(c'(Y),c(X))}{0.8}{\Pi}} ~ (1)
$$

\smallskip
The inference steps in this proof are commented below.
For the sake of clarity, we have used a different variant of the corresponding program clause
for each each application of the inference rule  \textbf{SQDA}.
\begin{enumerate}
  \item[(1)]
  \textbf{SQDA} step with clause $R_1 = (~ p(c(X_1),Y_1) \qgets{0.9} q(X_1,Y_1) ~)$
  instantiated by  substitution $\theta_1 = \{ X_1 \mapsto Y, Y_1 \mapsto c(X) \}$.
  Note that $0.8$ satisfies $0.8 \le \simrel(p,p') = 0.8$, $0.8 \le 0.8$, $0.8 \le 1.0$, $0.8 \le 0.9 \times 1.0$.
  \item[(2)]
  \textbf{SQEA} step. $c'(Y) \approx_{0.8, \Pi} c(Y)$ holds due to
  $c'(Y) \approx_{\Pi} c'(Y)$, $c(Y) \approx_{\Pi} c(Y)$ and $c'(Y) \approx_{0.8} c(Y)$.
   \item[(3)]
   \textbf{SQEA} step. $c(X) \approx_{1.0, \Pi} c(X)$ holds for trivial reasons.
  \item[(4)]
  \textbf{SQDA} step with clause $R_2 = (~ q(X_2,c(X_2)) \qgets{1.0} ~)$
  instantiated by substitution $\theta_2 = \{ X_2 \mapsto Y \}$.
  Note that $1.0$ satisfies $1.0 \le \simrel(q,q) = 1.0$ and $1.0 \le 1.0$.
  \item[(5)]
  \textbf{SQEA} step. $Y \approx_{1.0, \Pi} Y$ holds for trivial reasons.
  \item[(6)]
  \textbf{SQEA} step. $c(X) \approx_{1.0, \Pi} c(Y)$ holds due to
  $c(X) \approx_{\Pi} c(Y)$ (which follows from $\Pi \model{\rdom} X == Y$)
  and $c(X) \approx_{1.0} c(X)$.
  \enspace \mathproofbox
\end{enumerate}

\end{exmp}

The next technical lemma establishes two basic properties of formal inference in the $\SQCHL(\simrel,\qdom,\cdom)$ logic.

\begin{lem}[Properties of $\SQCHL(\simrel,\qdom,\cdom)$ derivability]
\label{lem:ep}
Let $\Prog$ be any $\sqclp{\simrel}{\qdom}{\cdom}$-program. Then:
\begin{enumerate}
\item \label{lem:ep:1}
\emph{$\Prog$-independent Inferences}: \\ 
Given any $\cdom$-based qc-atom $\varphi$ and any qc-interpretation $\I$, one has:
$$\Prog  \sqchlrdcn{0} \varphi \Longleftrightarrow \Prog \sqchlrdc \varphi \Longleftrightarrow \I \isqchlrdc \varphi \enspace .$$
\item \label{lem:ep:2}
\emph{Entailment Property  for Programs}: \\ 
Given any pair of qc-atoms $\varphi$ and $\varphi'$ such that $\Prog \sqchlrdc \varphi$ with inference proof tree $T$ and $\varphi \entail{\qdom,\cdom} \varphi'$, then $\Prog \sqchlrdc \varphi'$ with an inference proof tree $T'$ of the same size and structure as $T$.
\end{enumerate}
\end{lem}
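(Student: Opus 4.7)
My plan is to prove the two items separately, with item 1 being essentially immediate from the shape of the inference rules and item 2 by structural induction on the proof tree.

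For item 1, the key observation is that the conclusion of any \textbf{SQDA} step is a defined qc-atom, so a $\cdom$-based qc-atom $\varphi$ (equational or primitive) can only appear as the conclusion of an \textbf{SQEA} or \textbf{SQPA} inference. Since both of these rules have no premises, any derivation of $\varphi$ consists of a single axiom step using zero \textbf{SQDA} inferences. This immediately gives $\Prog \sqchlrdcn{0} \varphi \iff \Prog \sqchlrdc \varphi$. Furthermore, the side conditions of \textbf{SQEA} (namely $t \approx_{d,\Pi} s$) and of \textbf{SQPA} (namely $\Pi \model{\cdom} \kappa$) are exactly the conditions that characterize $\I \isqchlrdc \varphi$ for $\cdom$-based $\varphi$ in Definition \ref{defn:interpretations}(2)(b)(c). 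These conditions do not mention $\I$ at all, so $\Prog \sqchlrdc \varphi \iff \I \isqchlrdc \varphi$ holds for every choice of $\I$.

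For item 2, I reason by induction on $\Vert T \Vert$, with a case analysis on the last inference step of $T$. Let $\varphi = \cqat{A}{d}{\Pi}$, $\varphi' = \cqat{A'}{d'}{\Pi'}$, and pick a $\cdom$-substitution $\theta$ witnessing $\varphi \entail{\qdom,\cdom} \varphi'$, so $A' = A\theta$, $d' \dleq d$, and $\Pi' \model{\cdom} \Pi\theta$. In the \textbf{SQDA} case, $T$ uses some clause $(p(\ntup{t}{n}) \qgets{\alpha} \qat{B_1}{w_1}, \ldots, \qat{B_m}{w_m}) \in \Prog$ under substitution $\sigma$ with qualification values $d_0, d_1, \ldots, d_n, e_1, \ldots, e_m$. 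I construct $T'$ using the same clause with the composed substitution $\sigma\theta$ and the same qualification values. The required premises for $T'$ are $\cqat{(t'_i\theta == t_i\sigma\theta)}{d_i}{\Pi'}$ and $\cqat{B_j\sigma\theta}{e_j}{\Pi'}$, and the original premises of $T$ entail these via the same $\theta$ (using $d_i \dleq d_i$, $e_j \dleq e_j$, and $\Pi' \model{\cdom} \Pi\theta$), so the induction hypothesis yields subproofs of the same structure. The side conditions $e_j \dgeq^? w_j$ are inherited, $\simrel(p',p) = d_0$ is unchanged, and the bound $d' \dleq \bigsqcap_{i=0}^n d_i \sqcap \alpha \circ \bigsqcap_{j=1}^m e_j$ follows from $d' \dleq d$ and the corresponding bound for $d$.

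For the \textbf{SQEA} case, $A$ is $t == s$ with $t \approx_{d,\Pi} s$, and I need $t\theta \approx_{d',\Pi'} s\theta$. The Substitution Lemma for $\approx_{\lambda,\Pi}$ (Lemma \ref{lem:slpeq}) gives $t\theta \approx_{d, \Pi\theta} s\theta$; then $d' \dleq d$ weakens the proximity level (since $d' \dleq d \dleq \simrel(\hat{t},\hat{s})$ for suitable witnesses), and $\Pi' \model{\cdom} \Pi\theta$ allows the witnesses in the definition of $t\theta \approx_{d,\Pi\theta} s\theta$ to be reused under $\Pi'$ (any $\eta \in \Solc{\Pi'}$ lies in $\Solc{\Pi\theta}$, so all equations defining $\approx_{\Pi\theta}$ still hold under $\Pi'$). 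For the \textbf{SQPA} case, $A = \kappa$ with $\Pi \model{\cdom} \kappa$; by Lemma \ref{lem:sl}(2) we get $\Pi\theta \model{\cdom} \kappa\theta$, and composing with $\Pi' \model{\cdom} \Pi\theta$ yields $\Pi' \model{\cdom} \kappa\theta$, so a single \textbf{SQPA} step gives $T'$. In every case $T'$ has the same tree structure as $T$, so sizes coincide as well. The main technical point is the \textbf{SQEA} case, where one must carefully propagate $\approx_{\lambda,\Pi}$ across both substitution and constraint entailment, which is precisely what Lemma \ref{lem:slpeq} plus monotonicity in $\lambda$ and $\Pi$ provide.
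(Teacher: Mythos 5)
Your proposal is correct and follows essentially the same route as the paper's proof: item 1 by observing that $\cdom$-based qc-atoms can only be concluded by the premise-free rules \textbf{SQEA}/\textbf{SQPA}, whose side conditions coincide with Definition \ref{defn:interpretations}(2) and are independent of $\I$, and item 2 by induction on $\Vert T \Vert$ with a case split on the last rule, reusing the same clause and qualification values under the composed substitution $\sigma\theta$ and invoking Lemmas \ref{lem:slpeq} and \ref{lem:sl} in the base cases. Your explicit unfolding of why $\Pi' \model{\cdom} \Pi\theta$ and $d' \dleq d$ let the $\approx_{d,\Pi\theta}$ witnesses transfer to $\approx_{d',\Pi'}$ is a small extra detail the paper merely asserts, not a different argument.
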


\begin{proof*}[Proof of $\Prog$-independent Inferences]
Since $\varphi$ is $\cdom$-based, we can assume $\varphi = \cqat{A}{d}{\Pi}$ where $A$ is either an equation or a primitive atom.
In both cases the equivalence $\Prog  \sqchlrdcn{0} \varphi \Longleftrightarrow \Prog \sqchlrdc \varphi$ is obvious.
In order to prove the equivalence $\Prog \sqchlrdc \varphi \Longleftrightarrow \I \isqchlrdc \varphi$ we distinguish the two cases:
\begin{enumerate}
  \item
  $\varphi$ is equational.
  Then $A$ has the form $t == s$.
  Considering  the $\SQCHL(\simrel,\qdom,\cdom)$-inference rule \textbf{SQEA}
  and the second item of Definition \ref{defn:interpretations}, we get
  $$\Prog \sqchlrdc \varphi \Longleftrightarrow s \approx_{d, \Pi} t \Longleftrightarrow \I \isqchlrdc \varphi \enspace .$$
  \item
  $\varphi$ is primitive.
  Then $A$ is a primitive atom $\kappa$.
  Considering the $\SQCHL(\simrel,\qdom,\cdom)$-inference rule \textbf{SQPA}
  and the second item of Definition \ref{defn:interpretations}, we get
  $$\Prog \sqchlrdc \varphi \Longleftrightarrow \Pi \model{\cdom} \kappa \Longleftrightarrow \I \isqchlrdc \varphi \enspace . \mathproofbox$$
\end{enumerate}
\end{proof*}

\begin{proof*}[Proof of Entailment Property  for Programs]
Due to the hypothesis   $\varphi \entail{\qdom,\cdom} \varphi'$ and Definition \ref{defn:atoms-entail},
we can assume $\varphi = \cqat{A}{d}{\Pi}$ and $\varphi' = \cqat{A'}{d'}{\Pi'}$
with $A' = A\theta$, $d' \dleq d$ and $\Pi' \model{\cdom} \Pi\theta$ for some substitution $\theta$.
We reason by complete induction on $\Vert T \Vert$.
There are three possible cases,  according to the the syntactic form of the atom $A$.
In each case we argue how to build the desired proof tree $T'$.
\begin{enumerate}
\item $A$ is a defined atom:
In this case, $A$ is $p(\ntup{t}{n})$ with $p \in DP^n$, and $A'$: is  $p(\ntup{t'}{n})$ with $p(\ntup{t'}{n}) = p(\ntup{t}{n})\theta$.
Moreover, $T$ must be a proof tree of the following  form:
$$
T :
\displaystyle\frac
  {~
    \left( \displaystyle\frac{}{\cqat{(t_i == s_i\sigma)}{d_i}{\Pi}} \right)_{i=1 \ldots n} \quad
    \left( \displaystyle\frac{\cdots}{\cqat{B_j\sigma}{e_j}{\Pi}} \right)_{j=1 \ldots m}
  ~}
  {\cqat{p(\ntup{t}{n})}{d}{\Pi}} ~ \mathbf{SQDA}
$$
where:
  \begin{itemize}
  \item
  The \textbf{SQDA} root inference uses
  some $R_l : (q(\ntup{s}{n}) \qgets{\alpha} \qat{B_1}{w_1}, \ldots, \qat{B_m}{w_m}) \in \Prog$,
  some substitution $\sigma$
  and some qualification values $d_0, d_1, \ldots, d_n, e_1, \ldots e_m \in \aqdom$ such that
  $\simrel(p,q) = d_0 \neq \bt$, $d \dleq d_i ~ (0 \le i \le n)$ and $d \dleq \alpha \circ e_j ~ (1 \le j \le m)$.
  \item
  For $i = 1 \ldots n$, $\cqat{(t_i == s_i\sigma)}{d_i}{\Pi}$ has a proof tree $T_{i}^h$
  with $\Vert T_{i}^h \Vert < \Vert T \Vert$.
  \item
  For $j = 1 \ldots m$, $\cqat{B_j\sigma}{e_j}{\Pi}$ has a proof tree $T_{j}^b$
  with $\Vert T_{j}^b \Vert < \Vert T \Vert$.
  \end{itemize}
Then, $T'$ can be built as a proof tree of the form:
$$
T' :
\displaystyle\frac
  {~
    \left( \displaystyle\frac{}{\cqat{(t'_i == s_i\sigma\theta)}{d_i}{\Pi'}} \right)_{i=1 \ldots n} \quad
    \left( \displaystyle\frac{\cdots}{\cqat{B_j\sigma\theta}{e_j}{\Pi'}} \right)_{j=1 \ldots m}
  ~}
  {\cqat{p(\ntup{t'}{n})}{d'}{\Pi'}} ~ \mathbf{SQDA}
$$
where:
  \begin{itemize}
  \item
  The \textbf{SQDA} root inference uses the same program clause $R_l \in \Prog$,
  the substitution $\sigma\theta$
  and the same qualification values $d_i ~ (0 \le i \le n)$ and $e_j ~ (1 \le j \le m)$,
  satisfying $\simrel(p,q) = d_0 \neq \bt$, $d' \dleq d \dleq d_i ~ (0 \le i \le n)$ and $d' \dleq d \dleq \alpha \circ e_j ~ (1 \le j \le m)$.
  \item
  For $i = 1 \ldots n$, $\cqat{(t'_i == s_i\sigma\theta)}{d_i}{\Pi'}$ has a proof tree $T_{i}^{'h}$
  of the same size and structure as $T_{i}^h$.
  In fact, $T_{i}^{'h}$ can be obtained by induction hypothesis applied to $T_{i}^h$,
  which is allowed because $\Vert T_{i}^h \Vert < \Vert T \Vert$
  and $\cqat{(t_i == s_i\sigma)}{d_i}{\Pi} \entail{\qdom,\cdom} \cqat{(t'_i == s_i\sigma\theta)}{d_i}{\Pi'}$.
  Note that this entailment holds thanks to substitution $\theta$, since $t'_i = t_i\theta$ and $\Pi' \model{\cdom} \Pi\theta$.
  \item
  For $j = 1 \ldots m$, $\cqat{B_j\sigma\theta}{e_j}{\Pi'}$ has a proof tree $T_{j}^{'b}$
  of the same size and structure as $T_{j}^b$.
  In fact, $T_{j}^{'b}$ can be obtained by induction hypothesis applied to $T_{j}^b$,
  which is allowed because $\Vert T_{j}^b \Vert < \Vert T \Vert$
  and $\cqat{B_j\sigma}{e_j}{\Pi} \entail{\qdom,\cdom} \cqat{B_j\sigma\theta}{e_j}{\Pi'}$.
  Note that this entailment holds thanks to substitution $\theta$, since $\Pi' \model{\cdom} \Pi\theta$.
  \end{itemize}
By construction, $T'$ has the same size and structure as $T$, as desired.
\item $A$ is an equation:
In this case, $A : t == s$ and $A' : t' == s'$ with $t' = t\theta$, $s' = s\theta$.
Moreover, $T$ must consist of one single node $\cqat{(t == s)}{d}{\Pi}$ inferred by means of \textbf{SQEA}.
Therefore, $t \approx_{d, \Pi} s$ holds.
This implies $t\theta \approx_{d, \Pi\theta} s\theta$ (i.e. $t' \approx_{d, \Pi\theta} s'$)
due to the Substitution Lemma \ref{lem:slpeq}.
From this we conclude $t' \approx_{\Pi'} s'$ due to $d' \dleq d$ and $\Pi' \model{\cdom} \Pi\theta$.
Therefore, $T'$ can be built as a proof tree consisting of one single node $\cqat{(t' == s')}{d'}{\Pi'}$ inferred by means of \textbf{SQEA}.
\item $A$ is a primitive atom:
In this case, $A : \kappa$ and $A' : \kappa' = \kappa\theta$.
Moreover, $T$ must consist of one single node $\cqat{\kappa}{d}{\Pi}$ inferred by means of \textbf{SQPA}.
Therefore, $\Pi \model{\cdom} \kappa$ holds.
This implies $\Pi\theta \model{\cdom} \kappa\theta$ due to the Substitution Lemma \ref{lem:sl}.
From this we conclude $\Pi' \model{\cdom} \kappa'$ due to $\kappa' = \kappa\theta$ and $\Pi' \model{\cdom} \Pi\theta$.
Therefore, $T'$ can be built as a proof tree consisting of one single node $\cqat{\kappa'}{d'}{\Pi'}$ inferred by means of \textbf{SQPA}.
\mathproofbox
\end{enumerate}
\end{proof*}

The following theorem is the main result in this subsection.
It  characterizes the least model of a $\sqclp{\simrel}{\qdom}{\cdom}$-program $\Prog$ w.r.t. the logic $\SQCHL(\simrel,\qdom,\cdom)$:

\begin{thm}[Logical characterization of least program models]
\label{thm:SQCHL-leastmodel}
For any $\sqclp{\simrel}{\qdom}{\cdom}$-program $\Prog$, its least model can be characterized as: $$\Mp = \{\varphi \mid \varphi \mbox{ is a defined observable qc-atom and }\Prog \sqchlrdc \varphi\} \enspace .$$
\end{thm}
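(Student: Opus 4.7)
The plan is to prove the claimed equality by showing both inclusions between $\Mp$ and the set $\J = \{\varphi \mid \varphi \mbox{ is a defined observable qc-atom and } \Prog \sqchlrdc \varphi\}$. The soundness direction ($\J \subseteq \Mp$) will rely on an induction over proof-tree sizes, using the fact that $\Mp$ is itself a model (Theorem \ref{thm:tp-leastmodel}). The completeness direction ($\Mp \subseteq \J$) will proceed by showing that $\J$ is itself a model of $\Prog$ and then invoking minimality of $\Mp$.

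For soundness I would first establish the auxiliary statement: for every qc-interpretation $\I$ with $\I \model{\simrel,\qdom,\cdom} \Prog$ and every observable qc-atom $\varphi$, if $\Prog \sqchlrdc \varphi$ then $\I \isqchlrdc \varphi$. The argument is by induction on $\Vert T \Vert$, where $T$ is the witnessing SQCHL proof tree. If the root of $T$ is an \textbf{SQEA} or \textbf{SQPA} step, then $\varphi$ is $\cdom$-based and the conclusion is immediate from the $\Prog$-independent inferences part of Lemma \ref{lem:ep}(\ref{lem:ep:1}). If the root is an \textbf{SQDA} step, it supplies a clause $R_l \in \Prog$, a substitution $\theta$, and qualification values $d_0, d_1, \ldots, d_n, e_1, \ldots, e_m$ meeting the side conditions of the rule. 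Observability of $\varphi$ forces $d \neq \bt$ and $\Pi$ satisfiable; combined with $d \dleq d_i$ and $d \dleq \alpha \circ e_j$, and using the axiom $\alpha \circ \bt = \bt$, we infer $d_i \neq \bt$ and $e_j \neq \bt$, so every premise is itself observable. The induction hypothesis applies to each premise, yielding $\I \isqchlrdc \cqat{(t'_i == t_i\theta)}{d_i}{\Pi}$ and $\I \isqchlrdc \cqat{B_j\theta}{e_j}{\Pi}$. These make $\varphi$ an immediate consequence of $\I$ via $R_l$ in the sense of Definition \ref{defn:models}, and the model property of $\I$ gives $\varphi \in \I$, i.e., $\I \isqchlrdc \varphi$. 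Instantiating $\I := \Mp$ delivers $\J \subseteq \Mp$.

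For completeness I would verify that $\J$ is itself a model of $\Prog$ and appeal to the minimality of $\Mp$. That $\J$ is a qc-interpretation requires two checks: every element of $\J$ is a defined observable qc-atom by construction, and closure of $\J$ under $\entail{\qdom,\cdom}$ is precisely the Entailment Property for Programs, Lemma \ref{lem:ep}(\ref{lem:ep:2}), since an entailed variant $\varphi'$ admits a proof tree of the same size and structure whenever $\varphi$ does. To see that $\J$ models $\Prog$, let $\varphi$ be a defined observable qc-atom which is an immediate consequence of $\J$ via some $R_l \in \Prog$, with data $\theta$ and $d_0, \ldots, d_n, e_1, \ldots, e_m$. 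The $n$ equational premises are $\cdom$-based, hence provable in SQCHL by Lemma \ref{lem:ep}(\ref{lem:ep:1}). Each body premise $\cqat{B_j\theta}{e_j}{\Pi}$ is either $\cdom$-based (again provable via Lemma \ref{lem:ep}(\ref{lem:ep:1})) or defined, in which case $\J \isqchlrdc \cqat{B_j\theta}{e_j}{\Pi}$ means $\cqat{B_j\theta}{e_j}{\Pi} \in \J$ and a SQCHL proof exists by definition of $\J$. A single \textbf{SQDA} step then assembles these proofs into a proof of $\varphi$, showing $\varphi \in \J$.

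The main obstacle I expect is the bookkeeping around observability in the \textbf{SQDA} case of the soundness induction: the inference rule allows proofs of non-observable intermediate qc-atoms, and one must carefully exploit the lattice inequalities and the axiom $\alpha \circ \bt = \bt$ to guarantee that every premise used in the induction step is itself observable, so that the induction hypothesis applies as stated. Once this is verified, the remaining steps consist of straightforward applications of Lemmata \ref{lem:ep} and Theorem \ref{thm:tp-leastmodel}.
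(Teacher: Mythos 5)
Your proof is correct, but it follows a genuinely different route from the paper's. The paper exploits the fixpoint characterization $\Mp = \union_{k\in\NAT} \Tp{\uparrow}^k(\ibot)$ from Theorem \ref{thm:tp-leastmodel} and establishes, by one simultaneous induction on $k$, a stratified correspondence between derivations counted by their number of \textbf{SQDA} steps ($\Prog \sqchlrdcn{k} \varphi$) and membership in the iterates $\Tp{\uparrow}^{k'}(\ibot)$. You instead prove a genuine soundness statement (validity of every derivable observable in \emph{every} model, by induction on proof-tree size) and then specialize to $\Mp$, while for the converse you show that the set $\J$ of derivable defined observables is itself a qc-interpretation and a model, and invoke leastness of $\Mp$. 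Your handling of the delicate point is right: from $d \neq \bt$, $d \dleq d_i$, $d \dleq \alpha \circ e_j$ and the axiom $\alpha \circ \bt = \bt$ one does get $d_i, e_j \in \aqdom$, so all \textbf{SQDA} premises are observable and the induction hypothesis (and Definition \ref{defn:models}) apply; and closure of $\J$ under $\entail{\qdom,\cdom}$ is exactly Lemma \ref{lem:ep}(\ref{lem:ep:2}), a lemma the paper's proof of this theorem does not need (the iterates $\Tp{\uparrow}^k(\ibot)$ are interpretations by construction), but which your route requires to make $\J$ a legitimate interpretation. What each approach buys: yours directly delivers the soundness half of Corollary \ref{cor:correctness} (validity in all models) as a by-product and is the more classical ``derivable set is a pre-fixpoint'' argument; the paper's yields finer quantitative information, namely the level-by-level correspondence between the number of \textbf{SQDA} inferences in a proof and the number of $\Tp$ iterations needed to generate the atom.
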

\begin{proof*}
By Theorem \ref{thm:tp-leastmodel}, we already know that $\Mp = \union_{k\in\NAT} \Tp{\uparrow}^k(\ibot)$.
Therefore, it is sufficient to prove that the two implications
\begin{enumerate}
\item \label{thm:SQCHL-leastmodel:1} 
$\Prog \sqchlrdcn{k} \varphi \Longrightarrow \exists k' : \varphi \in \Tp{\uparrow}^{k'}(\ibot)$
\item \label{thm:SQCHL-leastmodel:2} 
$\varphi \in \Tp{\uparrow}^k(\ibot) \Longrightarrow \exists k' : \Prog \sqchlrdcn{k'} \varphi$
\end{enumerate}
hold for any defined observable qc-atom $\varphi = \cqat{p(\ntup{t}{n})}{d}{\Pi}$ and for any integer value $k \geq 1$.
We prove both implications within one single inductive reasoning on $k$.

\begin{description}
\item[Basis ($k = 1$).] \hfill \\
--- {\it Implication 1.}
Assume $\Prog \sqchlrdcn{1} \varphi$.
Then, due to the single \textbf{SQDA} inference,
there must exist  some $R_l = (q(\ntup{s}{n}) \qgets{\alpha}) \in \Prog$ with empty body,
some substitution $\theta$ and some $d_0, d_1, \ldots, d_n \in \aqdom$
such that $\Prog \sqchlrdcn{0} \cqat{(t_i == s_i\theta)}{d_i}{\Pi}$ for $i = 1 \ldots n$,
$\simrel(p,q) = d_0 \neq \bt$, $d \dleq d_i ~ (0 \le i \le n)$ and $d \dleq \alpha$.
Then $\ibot \sqchlrdcn{0} \cqat{(t_i == s_i\theta)}{d_i}{\Pi}$ holds for $i = 1 \ldots n$,
because of Lemma \ref{lem:ep}(\ref{lem:ep:1}).
Therefore $\varphi$ is an immediate consequence of $\ibot$ via $R_l$,
which guarantees $\varphi \in \Tp{\uparrow}^1(\ibot)$. \\
--- {\it Implication 2.}
Assume now $\varphi \in \Tp{\uparrow}^1(\ibot)$.
Then $\varphi$ must be an immediate consequence of $\ibot$
via some $R_l = (q(\ntup{s}{n}) \qgets{\alpha}) \in \Prog$ with empty body.
Then there are some substitution $\theta$ and some $d_0, d_1, \ldots, d_n \in \aqdom$
such that $\ibot \isqchlrdc \cqat{(t_i == s_i\theta)}{d_i}{\Pi}$ for $i = 1 \ldots n$,
$\simrel(p,q) = d_0 \neq \bt$, $d \dleq d_i ~ (0 \le i \le n)$ and $d \dleq \alpha$.
Again because of Lemma \ref{lem:ep}(\ref{lem:ep:1}),
we get $\Prog \sqchlrdcn{0} \cqat{(t_i == s_i\theta)}{d_i}{\Pi}$ for $i = 1 \ldots n$,
which guarantees $\Prog \sqchlrdcn{1} \varphi$
with one single \textbf{SQDA} inference using $R_l$ instantiated by $\theta$.

\smallskip
\smallskip
\item[Inductive step ($k > 1$).] \hfill \\
--- {\it Implication 1.} Assume $\Prog \sqchlrdcn{k} \varphi$. Since the root inference must be \textbf{SQDA},  there must exist  some program rule $(R_l : q(\ntup{s}{n}) \qgets{\alpha} \qat{B_1}{w_1}, \ldots, \qat{B_m}{w_m}) \in \Prog$, some substitution $\theta$ and some qualification values $d_0, d_1$, \ldots, $d_n, e_1, \ldots, e_m \in \aqdom$ such that
      \begin{itemize}
        \item $\Prog \sqchlrdcn{0} \phi_i = (\cqat{(t_i == s_i\theta)}{d_i}{\Pi})$ for $i = 1 \ldots n$,
        \item $\Prog \sqchlrdcn{k_j} \psi_j = (\cqat{B_j\theta}{e_j}{\Pi})$ with $e_j \dgeq^? w_j$ for $j = 1 \ldots m$, and
        \item $\simrel(p,q) = d_0 \neq \bt$, $d \dleq d_i ~ (0 \le i \le n)$ and $d \dleq \alpha \circ e_j ~ (1 \le j \le m)$
      \end{itemize}
      where $\Sigma_{j = 1}^{m} k_j = k - 1$. For each $j = 1 \ldots m$,
      either $\psi_j$ is defined, and then induction hypothesis yields some $k'_j$ such that
       $\psi_j \in \Tp{\uparrow}^{k'_j}(\ibot)$ and therefore also $\Tp{\uparrow}^{k'_j}(\ibot) \isqchlrdc \psi_j$;
      or else $\psi_j$ is not defined and then $\Tp{\uparrow}^{k'_j}(\ibot) \isqchlrdc \psi_j$
      for any arbitrarily chosen $k'_j$, by Lemma \ref{lem:ep}(\ref{lem:ep:1}).
      Then $l = max \{k'_j \mid 1 \leq j \leq m\}$ verifies that $\varphi$ is an immediate consequence of
      $\Tp{\uparrow}^{l}(\ibot)$ via $R_l$, which implies $\varphi \in \Tp{\uparrow}^{k'}(\ibot)$ for $k' = l+1$. \\
--- {\it Implication 2.} Assume $\varphi \in \Tp{\uparrow}^k(\ibot) = \Tp(\Tp{\uparrow}^{k-1}(\ibot))$.
Then $\varphi$ is an immediate consequence of $Tp{\uparrow}^{k-1}(\ibot)$
via some clause $(R_l : q(\ntup{s}{n})\qgets{\alpha} \qat{B_1}{w_1}, \ldots,$ $\qat{B_m}{w_m}) \in \Prog$.
Therefore, there exist some substitution $\theta$ and some qualification values $d_0, d_1$, \ldots, $d_n, e_1, \ldots, e_m$ $\in$ $\aqdom$ such that:
      \begin{itemize}
        \item $\Tp{\uparrow}^{k{-}1}(\ibot) \isqchlrdc \phi_i = (\cqat{(t_i == s_i\theta)}{d_i}{\Pi})$ for $i = 1 \ldots n$,
        \item $\Tp{\uparrow}^{k{-}1}(\ibot) \isqchlrdc \psi_j = (\cqat{B_j\theta}{e_j}{\Pi})$ with $e_j \dgeq^? w_j$ for $j = 1 \ldots m$, and
        \item $\simrel(p,q) = d_0 \neq \bt$, $d \dleq d_i ~ (0 \le i \le n)$ and $d \dleq \alpha \circ e_j ~ (1 \le j \le m)$.
      \end{itemize}
For each $i = 1 \ldots n$, Lemma \ref{lem:ep}(\ref{lem:ep:1}) yields $\Prog \sqchlrdcn{0} \phi_i$.
For each $j = 1 \ldots m$,
either $\psi_j$ is defined,  in which case $\psi_j \in \Tp{\uparrow}^{k{-}1}(\ibot)$, $k-1 \geq 1$,
and induction hypothesis yields some $k'_j$ such that $\Prog \sqchlrdcn{k'_j} \psi_j$;
or else $\psi_j$ is not defined, in which case $\Prog \sqchlrdcn{k'_j} \psi_j$ for $k'_j = 0$, by Lemma \ref{lem:ep}(\ref{lem:ep:1}).
In these conditions, $\Prog \sqchlrdcn{k'} \varphi$ holds for $k' = 1 + \Sigma_{j=1}^{m} k'_j$,
with a proof tree using a \textbf{SQDA} root inference based on $R_l$ instantiated by $\theta$. \mathproofbox
\end{description}
\end{proof*}

As an easy consequence of the previous theorem we get:

\begin{cor}[$\SQCHL(\simrel,\qdom,\cdom)$ is sound and complete]
\label{cor:correctness}
For any $\sqclp{\simrel}{\qdom}{\cdom}$-program $\Prog$ and any observable qc-atom $\varphi$, the following three statements are equivalent:
$$
    (a) ~ \Prog \sqchlrdc \varphi \hspace*{1cm}
    (b) ~ \Prog \model{\simrel,\qdom,\cdom} \varphi \hspace*{1cm}
    (c) ~ \Mp \isqchlrdc \varphi
$$
Moreover, we also have:
\begin{enumerate}
  \item {\em Soundness}: $\Prog \sqchlrdc \varphi \Longrightarrow \Prog \model{\simrel,\qdom,\cdom} \varphi$.
  \item {\em Completeness}: $\Prog \model{\simrel,\qdom,\cdom} \varphi \Longrightarrow \Prog \sqchlrdc \varphi$.
\end{enumerate}
\end{cor}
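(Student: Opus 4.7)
The plan is to derive the three-way equivalence by chaining together Theorem \ref{thm:SQCHL-leastmodel}, Theorem \ref{thm:tp-leastmodel}, and a monotonicity property of validity in interpretations.

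First I would establish (a) $\Longleftrightarrow$ (c) by a case split on the syntactic form of $\varphi$. If $\varphi$ is a defined observable qc-atom, then Theorem \ref{thm:SQCHL-leastmodel} tells us $\varphi \in \Mp$ iff $\Prog \sqchlrdc \varphi$, and by clause (a) of the validity clause in Definition \ref{defn:interpretations}, $\Mp \isqchlrdc \varphi$ iff $\varphi \in \Mp$. If instead $\varphi$ is $\cdom$-based (equational or primitive), then Lemma \ref{lem:ep}(\ref{lem:ep:1}) directly gives $\Prog \sqchlrdc \varphi \Longleftrightarrow \I \isqchlrdc \varphi$ for every qc-interpretation $\I$, in particular for $\I = \Mp$.

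Next I would handle (b) $\Longleftrightarrow$ (c). The implication (b) $\Longrightarrow$ (c) is immediate from Definition \ref{defn:models}: since $\Mp$ is a model of $\Prog$ (by Theorem \ref{thm:tp-leastmodel} it is even the least one), the semantic consequence assumption forces $\Mp \isqchlrdc \varphi$. For the converse (c) $\Longrightarrow$ (b), I would invoke the monotonicity property already recorded as $(\spadesuit)$ inside the proof of Proposition \ref{prop:tp-properties}, namely that $\I \isqchlrdc \varphi$ and $\I \subseteq \J$ together imply $\J \isqchlrdc \varphi$. Then for any model $\J$ of $\Prog$, the least-model property of $\Mp$ yields $\Mp \subseteq \J$, so from $\Mp \isqchlrdc \varphi$ we obtain $\J \isqchlrdc \varphi$, which is exactly what Definition \ref{defn:models} requires for $\Prog \model{\simrel,\qdom,\cdom} \varphi$.

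Finally, Soundness and Completeness are immediate restatements of the implications (a) $\Longrightarrow$ (b) and (b) $\Longrightarrow$ (a) already established. No step is genuinely hard here, since all the machinery is in place; the only minor obstacle is remembering to invoke monotonicity of validity (which was only stated inside the earlier proof rather than as a standalone lemma) when moving from the least model to an arbitrary model in the direction (c) $\Longrightarrow$ (b).
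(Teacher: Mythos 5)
Your proposal is correct and follows essentially the same route as the paper: (a) $\Leftrightarrow$ (c) by splitting into the defined case (Theorem \ref{thm:SQCHL-leastmodel}) and the $\cdom$-based case (Lemma \ref{lem:ep}(\ref{lem:ep:1})), (b) $\Leftrightarrow$ (c) via the fact that $\Mp$ is the least model (Theorem \ref{thm:tp-leastmodel}), and Soundness/Completeness read off from (a) $\Leftrightarrow$ (b). The only cosmetic difference is in (c) $\Rightarrow$ (b), where you invoke the monotonicity fact $(\spadesuit)$ uniformly while the paper repeats the defined/$\cdom$-based case split (set inclusion for defined atoms, interpretation-independence for $\cdom$-based ones); both rest on exactly the same underlying facts.
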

\begin{proof}
Soundness and completeness are just a trivial consequence of $(a) \Leftrightarrow (b)$.
To finish the proof it suffices to  prove the two equivalences $(a) \Leftrightarrow (c)$ and $(b) \Leftrightarrow (c)$.
This is done as follows:

\smallskip
\noindent [$(a) \Leftrightarrow (c)$] In the case that $\varphi$ is a defined qc-atom, $\Mp \isqchlrdc \varphi$ reduces to $\varphi \in \Mp$ which is equivalent to $\Prog \sqchlrdc \varphi$ by Theorem \ref{thm:SQCHL-leastmodel}.
 Otherwise, $\Prog \sqchlrdc \varphi \Longleftrightarrow \Mp \isqchlrdc \varphi$ holds because of Lemma \ref{lem:ep}(\ref{lem:ep:1}).

\smallskip
\noindent [$(b) \Rightarrow (c)$] Assume $\Prog \model{\simrel,\qdom,\cdom} \varphi$ and recall Definition \ref{defn:models}.
Then $\I \isqchlrdc \varphi$ for every qc-interpretation $\I$ such that $\I \model{\simrel,\qdom,\cdom} \Prog$.
In particular, $\Mp \isqchlrdc \varphi$, since $\Mp \model{\simrel,\qdom,\cdom} \Prog$ was proved in Theorem \ref{thm:tp-leastmodel}.

\smallskip
\noindent [$(c) \Rightarrow (b)$] Assume $\Mp \isqchlrdc \varphi$.
In order to obtain $\Prog \model{\simrel,\qdom,\cdom} \varphi$ we must prove:
$$(\star) \quad \I \isqchlrdc \varphi \mbox{ holds for any qc-interpretation } \I \mbox{ such that } \I \model{\simrel,\qdom,\cdom} \Prog \enspace .$$
In the case that $\varphi$ is a defined qc-atom, $\Mp \isqchlrdc \varphi$ reduces to $\varphi \in \Mp$,
which implies $(\star)$ because $\Mp$ is the least model of $\Prog$, as proved in Theorem \ref{thm:tp-leastmodel}.
In the case that $\varphi$ is not defined but  $\cdom$-based, $(\star)$ follows form the fact
that $\I \isqchlrdc \varphi$ holds for any arbitrary qc-interpretation $\I$, as proved in Lemma \ref{lem:ep}(\ref{lem:ep:1}).
\end{proof}

We close this subsection with a brief discussion on the relationship between the entailment
relation $\!\!\entail{\qdom,\cdom}\!\!$ used in this report and a different one that was proposed
in \cite{CRR08} and noted $\!\!\entail{\simrel,\qdom}\!\!$.
In contrast to $\!\!\entail{\qdom,\cdom}\!\!$, the entailment $\!\!\entail{\simrel,\qdom}\!\!$
depended on a given {\em similarity} relation $\simrel$.
In the context of the SQCLP scheme, one could think of an entailment 
$\!\!\entail{\simrel,\qdom,\cdom}\!\!$ depending on $\simrel$ and defined in the following way:
given two qc-atoms $\varphi$ and $\varphi'$, we could say that $\varphi$ $(\simrel,\qdom,\cdom)$-entails
$\varphi'$ (in symbols, $\varphi \entail{\simrel,\qdom,\cdom} \varphi'$) iff $\varphi : \cqat{A}{d}{\Pi}$ and
$\varphi' : \cqat{A'}{d'}{\Pi'}$ such that there is some substitution $\theta$ satisfying 
$\simrel(A',A\theta) = \lambda \neq \bt$, $d' \dleq \lambda$, $d' \dleq d$ and $\Pi' \model{\cdom} \Pi\theta$.

However, $\!\!\!\!\entail{\simrel,\qdom,\cdom}\!\!\!\!$ would not work properly in the case that $\simrel$ is not transitive, as shown by the following simple example: think of a $\sqclp{\simrel}{\U}{\rdom}$-program $\Prog$ including just a clause
\begin{itemize}
  \item[] $R_1 : p_1 \qgets{1.0}$
\end{itemize}
and assume that $\simrel$ verifies $\simrel(p_1,p_2) = 0.9$, $\simrel(p_2,p_3) = 0.9$ and $\simrel(p_1,p_3) = 0.4$ where $p_1,p_2,p_3 \in DP^0$. Then, $\Prog \sqchl{\simrel}{\U}{\rdom} \cqat{p_2}{0.9}{\emptyset}$ can be easily proved with the SQCHL rule {\bf SQDA} and $\cqat{p_2}{0.9}{\emptyset} \entail{\simrel,\U,\rdom} \cqat{p_3}{0.9}{\emptyset}$ holds because of $\simrel(p_2,p_3) = 0.9$, but $\Prog \sqchl{\simrel}{\U}{\rdom} \cqat{p_3}{0.9}{\emptyset}$ does not hold. Therefore, the Entailment Property for Programs (Lemma \ref{lem:ep}(\ref{lem:ep:2})) would fail if the entailment $\!\!\entail{\simrel,\qdom,\cdom}\!\!$ were adopted in place of $\!\!\entail{\qdom,\cdom}\!\!$.

Since the Entailment Property for Programs is a very natural condition that must be preserved, we conclude that
the entailment relation $\!\!\entail{\qdom,\cdom}\!\!$ used in this report is the right choice in a framework where the
underlaying proximity relation is not guaranteed to be a similarity.
\subsection{Goals and their Solutions}
\label{sec:sqclp:goals}

In this brief subsection we present  the syntax and declarative semantics of goals in the $\mbox{SQCLP}$ scheme,
and we define natural soundness and completeness properties which are expected to be fulfilled by goal solving devices.
These notions are intended as a useful tool to reason about the correctness of $\mbox{SQCLP}$ implementations
to be developed in the future.

In order to build goals for $\sqclp{\simrel}{\qdom}{\cdom}$-programs, we assume a countably infinite set $\War$ of so-called {\em qualification variables} $W$, disjoint from $\Var$ and $\cdom$'s signature $\Sigma$.
Goals for a given program $\Prog$ have the form 
$$G ~:~ \qat{A_1}{W_1},~ \ldots,~ \qat{A_m}{W_m} \sep W_1 \dgeq^? \!\beta_1,~ \ldots,~ W_m \dgeq^? \!\beta_m$$
abbreviated as $(\qat{A_i}{W_i},~ W_i \dgeq^? \!\beta_i)_{i = 1 \ldots m}$, where $\qat{A_i}{W_i} ~ (1 \leq i \leq m)$ are atoms annotated with different qualification variables $W_i$; and $W_i \dgeq^? \!\beta_i$ are so-called {\em threshold conditions} with $\beta_i \in \bqdom ~ (1 \leq i \leq m)$.
The notations ? and $\dgeq^?$ have been already explained in Section \ref{sec:sqclp:programs}.

In the sequel, the notation $\warset{o}$ will denote the set of all qualification variables occurring in the syntactic object $o$. In particular, for a goal $G$ as displayed above, $\warset{G}$ denotes the set $\{W_i \mid 1 \leq i \leq m\}$.
In the case $m = 1$ the goal is called  {\em atomic}.
The declarative semantics of goals is provided by their solutions, that are defined as follows: 

\begin{defn}[Goal Solutions]
\label{dfn:goalsol}
Assume a given $\sqclp{\simrel}{\qdom}{\cdom}$-program $\Prog$ 
and a goal $G$ for the program $\Prog$ with the syntax displayed above. 
Then:
\begin{enumerate}
\item 
A {\em solution} for $G$ is any triple $\langle \sigma, \mu, \Pi \rangle$ such that 
$\sigma$ is a $\cdom$-substitution, $\mu : \warset{G} \to \aqdom$, 
$\Pi$ is a satisfiable and finite set of atomic $\cdom$-constraints 
and the following two conditions hold for all $i = 1 \ldots m$:
  \begin{enumerate}
  \item
  $W_i\mu = d_i \dgeq^? \!\beta_i$ and
  \item
  $\Prog \sqchlrdc \cqat{A_i\sigma}{W_i\mu}{\Pi}$.
  \end{enumerate}
The set of all solutions for $G$ is noted $\Sol{\Prog}{G}$.
Note that solutions are {\em open} in the sense that the substitution $\sigma$ is not required to be ground.
\item
A solution $\langle \eta, \rho, \Pi \rangle$ for $G$ is called {\em ground} iff $\Pi = \emptyset$ and
$\eta \in \mbox{Val}_\cdom$ is a variable valuation such that $A_i\eta$ is a ground atom for all $i = 1 \ldots m$.
The set of all ground solutions for $G$ is noted $\GSol{\Prog}{G}$.
Obviously, $\GSol{\Prog}{G} \subseteq \Sol{\Prog}{G}$.  
\item
A ground solution $\langle \eta, \rho, \emptyset \rangle \in \GSol{\Prog}{G}$ is {\em subsumed} by 
$\langle \sigma, \mu, \Pi \rangle$ iff 
there is some $\nu \in \Solc{\Pi}$ s.t.
$\eta =_{\varset{G}} \sigma\nu$ and $W_i\rho \dleq W_i\mu$ for $i = 1 \ldots m$.
\mathproofbox
\end{enumerate}
\end{defn}

Implicitly, the first item in the previous definition requires $\cqat{A_i\sigma}{W_i\mu}{\Pi}$ 
to be observable qc-atoms in the sense of Definition \ref{defn:atoms-entail}, 
which is trivially true because $W_i\mu = d_i \in \aqdom$ and $\Pi$ is satisfiable. 
In fact, Definition \ref{defn:atoms-entail} was designed with the aim 
of using observable qc-atoms as observations of valid open solutions for atomic goals.
The next example illustrates the definition:

\begin{exmp}[Solutions for an atomic goals]
\label{exmp:goal-sol}
\begin{enumerate}
\item
 $G : \qat{\texttt{goodWork(X)}}{\texttt{W}}  \sep \texttt{W} \dgeq \texttt{(0.55,30)}$
is a goal for the program fragment  $\Prog$ shown  in Figure \ref{fig:sample},
and the arguments given near the beginning of Subsection \ref{sec:sqclp:programs} can be formalized to prove that 
$\langle \{\texttt{X} \mapsto \texttt{king\_liar}\}, \{\texttt{W} \mapsto \texttt{(0.6,5)}\}, \emptyset \rangle \in \mbox{Sol}_\Prog(G)$.
\item
As an additional example involving constraints, recall the $\sqclp{\simrel}{\U}{\rdom}$-program 
$\Prog$ presented in Example \ref{exmp:running}.
An atomic goal $G$ for this program is $\qat{p'(c'(Y),Z)}{W} \sep W {\geq^?} 0.75$.
Consider 
$\sigma = \{Z \mapsto c(X)\}$,
$\mu = \{W \mapsto 0.8\}$ and
$\Pi = \{cp_{>}(X,1.0), op_{+}(A,A,$ $X), op_{\times}(2.0,A,Y)\}$.
Note that $0.8 \geq 0.75$ and
$\Prog \vdash_{\simrel,\U,\rdom} \cqat{p'(c'(Y),Z)\sigma}{W\mu}{\Pi}$,
as we have seen in Example \ref{exmp:sqchl-inference}.
Therefore, the requirements of Definition \ref{dfn:goalsol} are fulfilled, and 
$\langle \sigma, \mu, \Pi \rangle \in \mbox{Sol}_\Prog(G)$. \enspace \mathproofbox
\end{enumerate}
\end{exmp}

In practice, users of $\mbox{SQCLP}$ languages will rely on  some available {\em goal solving system} 
for computing goal solutions.
The following definition specifies two important properties of goal solving systems:

\begin{defn}[Correct Goal Solving Systems]
\label{dfn:goalsolsys}
At a high abstraction level, a  {\em goal solving system} for $\sqclp{\simrel}{\qdom}{\cdom}$ 
can be thought as a device that takes a program $\Prog$ and a goal $G$ as input
and yields various triples $\langle \sigma, \mu, \Pi \rangle$,
called {\em computed answers}, as outputs.
Such a goal solving system is called:
\begin{enumerate}
 \item 
 {\em Sound} iff every computed answer  is a solution $\langle \sigma, \mu, \Pi \rangle \in \mbox{Sol}_\Prog(G)$.
 \item
 {\em Weakly complete} iff every ground solution $\langle \eta, \rho, \emptyset \rangle \in \mbox{GSol}_\Prog(G)$
 is subssumed by some computed answer.
 \item
 {\em Correct} iff it is both sound and weakly complete.
\enspace \mathproofbox
\end{enumerate}
\end{defn}

Every goal solving system for a SQCLP instance should be sound and ideally also weakly complete.
Implementing such systems is one of the major lines of future research mentioned in the Conclusions of this report. 
\section{Conclusions}
\label{sec:conclusions}


We have extended the classical CLP scheme to a new scheme SQCLP whose instances  $\sqclp{\simrel}{\qdom}{\cdom}$ are
parameterized by a proximity relation $\simrel$, a qualification domain $\qdom$ and a constraint domain $\cdom$.
In addition to the known features of CLP programming, the new scheme offers extra facilities for dealing with expert knowledge representation and flexible query answering.
Inspired by the observable CLP semantics in \cite{GL91,GDL95},
we have presented a declarative semantics for SQCLP that 
provides  fixpoint and proof-theoretical characterizations of least program models
as well as an implementation-independent  notion of goal solutions.

SQCLP is a quite general scheme. Different partial instantiations of its three parameters lead to 
more particular schemes, most of which can be placed in close correspondence to previous proposals.
The items below present seven particularizations, along with some comments which make use of 
the notions {\em threshold-free}, {\em attenuation-free} and {\em constraint-free} which have been 
explained at the beginning of Section \ref{sec:sqclp:programs}.

\begin{enumerate}
\item 
By definition, QCLP has instances $\mbox{QCLP}(\qdom,\cdom) \eqdef \mbox{SQCLP}(\sid,\qdom,\cdom)$,
where $\sid$ is the {\em identity} proximity relation.
The {\em quantitative} $\mbox{CLP}$ scheme proposed in \cite{Rie98phd} 
can be understood as a further particularization  of QCLP 
that works with threshold-free $\mbox{QCLP}(\U,\cdom)$ programs,
where $\U$ is the qualification domain of uncertainty values (see Subsection \ref{ssec:ud}).
%
\item 
By definition, SQLP has instances $\mbox{SQLP}(\simrel,\qdom) \eqdef \mbox{SQCLP}(\simrel,\qdom,\rdom)$,
where $\rdom$ is the real constraint domain (see Subsection \ref{sec:cdoms:domains}).
The scheme with the same name originally proposed in \cite{CRR08} 
can be understood as a restricted form of the present formulation;
it worked with  threshold-free and constraint-free $\mbox{SQLP}(\simrel,\qdom)$ programs
and it restricted the choice of the $\simrel$ parameter to transitive proximity (i.e. similarity)  relations. 
%
\item 
By definition, SCLP\footnote{Not to be confused with SCLP in the sense of \cite{BMR01}, discussed below.} has instances $\mbox{SCLP}(\simrel,\cdom)  \eqdef  \mbox{SQCLP}(\simrel,\B,\cdom)$,
where $\B$ is the qualification domain of classical boolean values  (see Subsection \ref{ssec:bd}).
Due to the fixed parameter choice $\qdom = \B$,
both attenuation values and threshold values become useless,  
and each choice of $\simrel$ must necessarily represent a crisp reflexive and symmetric relation.
Therefore, this new scheme is not so interesting from the viewpoint of uncertain and qualified reasoning.
\item 
By definition, QLP has instances $\mbox{QLP}(\qdom) \eqdef \mbox{SQCLP}(\sid,\qdom,\rdom)$.
The scheme with the same name originally proposed  in \cite{RR08} 
can be understood as a restricted form of the present formulation;
it worked with  threshold-free and constraint-free $\mbox{QLP}(\qdom)$ programs.
%
\item 
By definition, SLP has instances $\mbox{SLP}(\simrel) \eqdef  \mbox{SQCLP}(\simrel,\U,\rdom)$.
The pure fragment of \textsf{Bousi}\verb+~+\textsf{Prolog} \cite{JR09}
can be understood as a restricted form of SLP in the present formulation;
it works with threshold-free, attenuation-free and constraint-free $\mbox{SLP}(\simrel)$ programs.
Moreover, restricting the choice of $\simrel$ to similarity relations leads to 
SLP in the sense of \cite{Ses02} and related papers.
%
\item 
The CLP scheme can be defined  by  instances $\mbox{CLP}(\cdom) \eqdef \mbox{SQCLP}(\sid,\B,\cdom)$. 
Both attenuation values and threshold values are useless in CLP programs,  due to the fixed parameter choice $\qdom = \B$.
%
\item 
Finally, the pure LP paradigm can be defined as $\mbox{LP} \eqdef  \mbox{SQCLP}(\sid,\B,\hdom)$,
where $\hdom$ is the {\em Herbrand} constraint domain.
Again,  attenuation values and threshold values are useless in LP due to the fixed parameter choice $\qdom = \B$.
\end{enumerate}

In all the previous items, the schemes obtained by partial instantiation inherit the declarative semantics from SQCLP, using sets of observables of the form $\cqat{A}{d}{\Pi}$ as interpretations. 
A similar semantic approach were used in our previous papers \cite{RR08,CRR08}, except that $\Pi$ and equations were absent due to the lack of CLP features.
The other related works discussed in the Introduction view program interpretations 
as mappings $\I$ from the ground Herbrand base into some set of lattice elements (the real interval $[0,1]$ in many cases), as already discussed in the explanations following Definition \ref{defn:interpretations}.

As seen in Subsection \ref{sec:sqclp:goals}, SQCLP's semantics enables a declarative characterization of valid goal solutions. This fact is relevant for modeling the expected behavior of goal solving devices and reasoning about their correctness.
Moreover, the relations $\approx_{\lambda, \Pi}$ introduced for the first time in the present paper (see Definition \ref{defn:Pi-prox}) allow to specify the semantic role of $\simrel$ in a constraint-based framework, with less technical overhead than in previous related approaches.

A related work not mentioned in items 1--7 above is the semiring-based CLP of \cite{BMR01}, a scheme with instances SCLP(S) parameterized by a semiring $\mbox{S} = \langle A, +, \times, \mathbf{0}, \mathbf{1} \rangle$ whose elements are used to represent consistency levels in soft constraint solving. The semirings used in this approach can be equipped with a lattice structure whose {\em lub} operation is always $+$, but whose {\em glb} operation may be different from $\times$.
On the other hand, our qualification domains are defined as lattices with an additional attenuation operation $\circ$.
It turns out that the kind of semirings used in SCLP(S) correspond to qualification domains only in some cases.
Moreover, $\times$ is used in SCLP(S) to interpret logical conjunction in clause bodies and goals, while the {\em glb} operation is used in $\sqclp{\simrel}{\qdom}{\cdom}$ for the same purpose.
For this reason, even if $\qdom$ is ``equivalent'' to S, $\sqclp{\simrel}{\qdom}{\cdom}$ cannot be naturally used to express SCLP(S) in the case that $\times$ is not the {\em glb}.
Assuming that $\qdom$ is ``equivalent'' to S and that $\times$ behaves as the {\em glb} in S, program clauses in SCLP(S) can be viewed as a particular case of program clauses in $\sqclp{\simrel}{\qdom}{\cdom}$ which use an attenuation factor different from $\tp$ only for facts.
Other relevant differences between $\sqclp{\simrel}{\qdom}{\cdom}$ and SCLP(S) can be explained by comparing the parameters.
As said before $\qdom$ may be ``equivalent'' to S in some cases, but $\simrel$ is absent and $\cdom$ is not made explicit in SCLP(S).
Seemingly, the intended use of SCLP(S) is related to finite domain constraints and no parametrically given constraint domain is provided.


In the future we plan to implement some SQCLP instances by extending the semantically correct program transformation techniques from \cite{CRR08}, and to investigate applications which can profit from flexible query answering.
Other interesting lines of future work include: a) extension of the qualified SLD resolution presented in \cite{RR08} to a SQCLP goal solving procedure able to work with constraints and a proximity relation; and b) extension of the QCFLP scheme in \cite{CRR09} to work with a proximity relation and higher-order functions.


\section*{Acknowledgements}

This report is a widely extended version of \cite{RR10}.
The authors are thankful to the anonymous referees of \cite{RR10} for constructive remarks and suggestions which helped to improve the presentation.
They are also thankful to Rafael Caballero for useful discussions on the report's topics and to Jes\'us Almendros for pointing to bibliographic references in the area of flexible query answering.

\bibliographystyle{acmtrans}
\bibliography{../biblio}

\begin{thebibliography}{}

\bibitem[\protect\citeauthoryear{Apt}{Apt}{1990}]{Apt90}
{\sc Apt, K.~R.} 1990.
\newblock Logic programming.
\newblock In {\em Handbook of Theoretical Computer Science}, {J.~van Leeuwen},
  Ed. Vol. B: Formal Models and Semantics. Elsevier and The MIT Press,
  493--574.

\bibitem[\protect\citeauthoryear{Apt and Gabbrielli}{Apt and
  Gabbrielli}{1994}]{AG94}
{\sc Apt, K.~R.} {\sc and} {\sc Gabbrielli, M.} 1994.
\newblock Declarative interpretations reconsidered.
\newblock In {\em Proceedings of the 11th International Conference on Logic
  Programming (ICLP'94)}, {P.~van Hentenryck}, Ed. The MIT Press, 74--89.

\bibitem[\protect\citeauthoryear{Apt and van Emden}{Apt and van
  Emden}{1982}]{AVE82}
{\sc Apt, K.~R.} {\sc and} {\sc van Emden, M.~H.} 1982.
\newblock Contributions to the theory of logic programming.
\newblock {\em Journal of the Association for Computing Machinery
  (JACM)\/}~{\em 29,\/}~3, 841--862.

\bibitem[\protect\citeauthoryear{Arcelli and Formato}{Arcelli and
  Formato}{1999}]{AF99}
{\sc Arcelli, F.} {\sc and} {\sc Formato, F.} 1999.
\newblock Likelog: a logic programming language for flexible data retrieval.
\newblock In {\em Proceedings of the 1999 ACM Symposium on Applied computing
  (SAC'99)}. ACM Press, New York, NY, USA, 260--267.

\bibitem[\protect\citeauthoryear{Arenas, Fern\'andez, Gil, L\'opez-Fraguas,
  Rodr\'iguez-Artalejo, and S\'aenz-P\'erez}{Arenas et~al\mbox{.}}{2007}]{toy}
{\sc Arenas, P.}, {\sc Fern\'andez, A.~J.}, {\sc Gil, A.}, {\sc
  L\'opez-Fraguas, F.~J.}, {\sc Rodr\'iguez-Artalejo, M.}, {\sc and} {\sc
  S\'aenz-P\'erez, F.} 2007.
\newblock $\mathcal{TOY}$, a multiparadigm declarative language (version
  2.3.1).
\newblock In R. Caballero and J. S\'anchez, editors, {\em User Manual},
  available at \url{http://toy.sourceforge.net}.

\bibitem[\protect\citeauthoryear{Baader and Nipkow}{Baader and
  Nipkow}{1998}]{BN98}
{\sc Baader, F.} {\sc and} {\sc Nipkow, T.} 1998.
\newblock {\em Term Rewriting and All That}.
\newblock Cambridge University Press.

\bibitem[\protect\citeauthoryear{Bistarelli, Montanari, and Rossi}{Bistarelli
  et~al\mbox{.}}{2001}]{BMR01}
{\sc Bistarelli, S.}, {\sc Montanari, U.}, {\sc and} {\sc Rossi, F.} 2001.
\newblock Semiring-based constraint logic programming: Syntax and semantics.
\newblock {\em ACM Transactions on Programming Languages and Systems\/}~{\em
  3,\/}~1 (January), 1--29.

\bibitem[\protect\citeauthoryear{Bossi, Gabbrielli, Levi, and Martelli}{Bossi
  et~al\mbox{.}}{1994}]{BGLM94}
{\sc Bossi, A.}, {\sc Gabbrielli, M.}, {\sc Levi, G.}, {\sc and} {\sc Martelli,
  M.} 1994.
\newblock The s-semantics approach: Theory and applications.
\newblock {\em Journal of Logic Programming\/}~{\em 19/20}, 149--197.

\bibitem[\protect\citeauthoryear{Caballero, Rodr\'iguez-Artalejo, and
  Romero-D\'iaz}{Caballero et~al\mbox{.}}{2008}]{CRR08}
{\sc Caballero, R.}, {\sc Rodr\'iguez-Artalejo, M.}, {\sc and} {\sc
  Romero-D\'iaz, C.~A.} 2008.
\newblock Similarity-based reasoning in qualified logic programming.
\newblock In {\em PPDP '08: Proceedings of the 10th international ACM SIGPLAN
  conference on Principles and Practice of Declarative Programming}. ACM,
  Valencia, Spain, 185--194.

\bibitem[\protect\citeauthoryear{Caballero, Rodr\'iguez-Artalejo, and
  Romero-D\'iaz}{Caballero et~al\mbox{.}}{2009}]{CRR09}
{\sc Caballero, R.}, {\sc Rodr\'iguez-Artalejo, M.}, {\sc and} {\sc
  Romero-D\'iaz, C.~A.} 2009.
\newblock Qualified computations in functional logic programming.
\newblock In {\em Logic Programming (ICLP'09)}, {P.~Hill} {and} {D.~Warren},
  Eds. LNCS, vol. 5649. Springer-Verlag Berlin Heidelberg, Pasadena, CA, USA,
  449--463.

\bibitem[\protect\citeauthoryear{Clark}{Clark}{1979}]{Cla79}
{\sc Clark, K.~L.} 1979.
\newblock Predicate logic as a computational formalism (res. report doc 79/59).
\newblock Tech. rep., Imperial College, Dept. of Computing, London.

\bibitem[\protect\citeauthoryear{Dubois and Prade}{Dubois and
  Prade}{1980}]{DP80}
{\sc Dubois, D.} {\sc and} {\sc Prade, H.} 1980.
\newblock {\em Fuzzy Sets and Systems: Theory and Applications}.
\newblock Academic Press, New York, NY, USA.

\bibitem[\protect\citeauthoryear{Est{\'e}vez-Mart\'{\i}n, Hortal\'a-Gonz\'alez,
  Rodr\'{\i}guez-Artalejo, del Vado~V\'{\i}rseda, S\'aenz-P\'erez, and
  Fern{\'a}ndez}{Est{\'e}vez-Mart\'{\i}n et~al\mbox{.}}{2009}]{EHR+09}
{\sc Est{\'e}vez-Mart\'{\i}n, S.}, {\sc Hortal\'a-Gonz\'alez, T.}, {\sc
  Rodr\'{\i}guez-Artalejo, M.}, {\sc del Vado~V\'{\i}rseda, R.}, {\sc
  S\'aenz-P\'erez, F.}, {\sc and} {\sc Fern{\'a}ndez, A.~J.} 2009.
\newblock On the cooperation of the constraint domains $\mathcal{H}$,
  $\mathcal{R}$ and $\mathcal{FD}$ in $cflp$.
\newblock {\em Theory and Practice of Logic Programming\/}~{\em 9,\/}~4,
  415--527.

\bibitem[\protect\citeauthoryear{Falaschi, Levi, Martelli, and
  Palamidessi}{Falaschi et~al\mbox{.}}{1993}]{FLMP93}
{\sc Falaschi, M.}, {\sc Levi, G.}, {\sc Martelli, M.}, {\sc and} {\sc
  Palamidessi, C.} 1993.
\newblock A model-theoretic reconstruction of the operational semantics of
  logic programs.
\newblock {\em Information and Computation\/}~{\em 102,\/}~1, 86--113.

\bibitem[\protect\citeauthoryear{Falaschi, Levi, Palamidessi, and
  Martelli}{Falaschi et~al\mbox{.}}{1989}]{FLMP89}
{\sc Falaschi, M.}, {\sc Levi, G.}, {\sc Palamidessi, C.}, {\sc and} {\sc
  Martelli, M.} 1989.
\newblock Declarative modeling of the operational behavior of logic languages.
\newblock {\em Theoretical Computer Science\/}~{\em 69,\/}~3 (December),
  289--318.

\bibitem[\protect\citeauthoryear{Freuder and Wallace}{Freuder and
  Wallace}{1992}]{FW92}
{\sc Freuder, E.~C.} {\sc and} {\sc Wallace, R.~J.} 1992.
\newblock Partial constraint satisfaction.
\newblock {\em Artificial Intelligence\/}~{\em 58,\/}~1--3, 21--70.

\bibitem[\protect\citeauthoryear{Gabbrielli, Dore, and Levi}{Gabbrielli
  et~al\mbox{.}}{1995}]{GDL95}
{\sc Gabbrielli, M.}, {\sc Dore, G.~M.}, {\sc and} {\sc Levi, G.} 1995.
\newblock Observable semantics for constraint logic programs.
\newblock {\em Journal of Logic and Computation\/}~{\em 5,\/}~2, 133--171.

\bibitem[\protect\citeauthoryear{Gabbrielli and Levi}{Gabbrielli and
  Levi}{1991}]{GL91}
{\sc Gabbrielli, M.} {\sc and} {\sc Levi, G.} 1991.
\newblock Modeling answer constraints in constraint logic programs.
\newblock In {\em Proceedings of the 8th International Conference on Logic
  Programming (ICLP'91)}. The MIT Press, 238--252.

\bibitem[\protect\citeauthoryear{Georget and Codognet}{Georget and
  Codognet}{1998}]{GC98}
{\sc Georget, Y.} {\sc and} {\sc Codognet, P.} 1998.
\newblock Compiling semiring-based constraints with {CLP(FD,S)}.
\newblock In {\em Proceedings of the 4th International Conference on Principles
  and Practice of Constraint Programming}. LNCS, vol. 1520. Springer-Verlag,
  205--219.

\bibitem[\protect\citeauthoryear{Guadarrama, Mu{\~{n}}oz, and
  Vaucheret}{Guadarrama et~al\mbox{.}}{2004}]{GMV04}
{\sc Guadarrama, S.}, {\sc Mu{\~{n}}oz, S.}, {\sc and} {\sc Vaucheret, C.}
  2004.
\newblock Fuzzy prolog: A new approach using soft constraint propagation.
\newblock {\em Fuzzy Sets and Systems\/}~{\em 144,\/}~1, 127--150.

\bibitem[\protect\citeauthoryear{H\'ajek}{H\'ajek}{1998}]{Haj98}
{\sc H\'ajek, P.} 1998.
\newblock {\em Metamathematics of Fuzzy Logic}.
\newblock Dordrecht: Kluwer.

\bibitem[\protect\citeauthoryear{Hanus}{Hanus}{}]{curry}
{\sc Hanus, Ed., M.}
\newblock {C}urry: {A}n integrated functional logic language (vers. 0.8.2,
  2006); \url{http://www.curry-language.org}.

\bibitem[\protect\citeauthoryear{H\"{o}hfeld and Smolka}{H\"{o}hfeld and
  Smolka}{1988}]{HS88}
{\sc H\"{o}hfeld, M.} {\sc and} {\sc Smolka, G.} 1988.
\newblock Definite relations over constraint languages.
\newblock Tech. Rep. LILOG Report 53, IBM Deutschland.

\bibitem[\protect\citeauthoryear{Jaffar and Lassez}{Jaffar and
  Lassez}{1987}]{JL87}
{\sc Jaffar, J.} {\sc and} {\sc Lassez, J.~L.} 1987.
\newblock Constraint logic programming.
\newblock In {\em Proceedings of the 14th ACM SIGACT-SIGPLAN symposium on
  Principles of Programming Languages (POPL'87)}. ACM New York, NY, USA,
  Munich, West Germany, 111--119.

\bibitem[\protect\citeauthoryear{Jaffar and Maher}{Jaffar and
  Maher}{1994}]{JM94}
{\sc Jaffar, J.} {\sc and} {\sc Maher, M.} 1994.
\newblock Constraint logic programming: a survey.
\newblock {\em Journal of Logic Programming\/}~{\em 19\&20}, 503--581.

\bibitem[\protect\citeauthoryear{Jaffar, Maher, Marriott, and Stuckey}{Jaffar
  et~al\mbox{.}}{1998}]{JMM+98}
{\sc Jaffar, J.}, {\sc Maher, M.}, {\sc Marriott, K.}, {\sc and} {\sc Stuckey,
  P.~J.} 1998.
\newblock Semantics of constraints logic programs.
\newblock {\em Journal of Logic Programming\/}~{\em 37,\/}~1-3, 1--46.

\bibitem[\protect\citeauthoryear{Jaffar, Michaylov, Stuckey, and Yap}{Jaffar
  et~al\mbox{.}}{1992}]{JMSY92}
{\sc Jaffar, J.}, {\sc Michaylov, S.}, {\sc Stuckey, P.~J.}, {\sc and} {\sc
  Yap, R. H.~C.} 1992.
\newblock The {CLP}({$\mathit{R}$}) language and system.
\newblock {\em ACM Transactions on Programming Languages and Systems\/}~{\em
  14(3)}, 339--395.

\bibitem[\protect\citeauthoryear{Juli\'an-Iranzo, Rubio, and
  Gallardo}{Juli\'an-Iranzo et~al\mbox{.}}{2009}]{JRG09}
{\sc Juli\'an-Iranzo, P.}, {\sc Rubio, C.}, {\sc and} {\sc Gallardo, J.} 2009.
\newblock Bousi$\sim${P}rolog: a prolog extension language for flexible query
  answering.
\newblock In {\em Proceedings of the Eighth Spanish Conference on Programming
  and Computer Languages (PROLE 2008)}, {J.~M. Almendros-Jim\'enez}, Ed. ENTCS,
  vol. 248. Elsevier, Gij\'on, Spain, 131--147.

\bibitem[\protect\citeauthoryear{Juli\'an-Iranzo and
  Rubio-Manzano}{Juli\'an-Iranzo and Rubio-Manzano}{2009a}]{JR09}
{\sc Juli\'an-Iranzo, P.} {\sc and} {\sc Rubio-Manzano, C.} 2009a.
\newblock A declarative semantics for {B}ousi$\sim${P}rolog.
\newblock In {\em PPDP'09: Proceedings of the 11th ACM SIGPLAN conference on
  Principles and practice of declarative programming}. ACM, Coimbra, Portugal,
  149--160.

\bibitem[\protect\citeauthoryear{Juli\'an-Iranzo and
  Rubio-Manzano}{Juli\'an-Iranzo and Rubio-Manzano}{2009b}]{JR09b}
{\sc Juli\'an-Iranzo, P.} {\sc and} {\sc Rubio-Manzano, C.} 2009b.
\newblock A similarity-based {WAM} for {B}ousi$\sim${P}rolog.
\newblock In {\em Bio-Inspired Systems: Computational and Ambient Intelligence
  (IWANN 2009)}. LNCS, vol. 5517. Springer Berlin / Heidelberg, Salamanca,
  Spain, 245--252.

\bibitem[\protect\citeauthoryear{Kifer and Subrahmanian}{Kifer and
  Subrahmanian}{1992}]{KS92}
{\sc Kifer, M.} {\sc and} {\sc Subrahmanian, V.~S.} 1992.
\newblock Theory of generalized annotated logic programs and their
  applications.
\newblock {\em Journal of Logic Programming\/}~{\em 12,\/}~3\&4, 335--367.

\bibitem[\protect\citeauthoryear{Kraj\v{c}i, Lencses, and
  Vojt\'{a}\v{s}}{Kraj\v{c}i et~al\mbox{.}}{2004}]{KLV04}
{\sc Kraj\v{c}i, S.}, {\sc Lencses, R.}, {\sc and} {\sc Vojt\'{a}\v{s}, P.}
  2004.
\newblock A comparison of fuzzy and annotated logic programming.
\newblock {\em Fuzzy Sets and Systems\/}~{\em 144}, 173--192.

\bibitem[\protect\citeauthoryear{Lloyd}{Lloyd}{1987}]{Llo87}
{\sc Lloyd, J.~W.} 1987.
\newblock {\em Foundations of Logic Programming, Second Edition}.
\newblock Springer.

\bibitem[\protect\citeauthoryear{Loia, Senatore, and Sessa}{Loia
  et~al\mbox{.}}{2004}]{LSS04}
{\sc Loia, V.}, {\sc Senatore, S.}, {\sc and} {\sc Sessa, M.~I.} 2004.
\newblock Similarity-based {SLD} resolution and its role for web knowledge
  discovery.
\newblock {\em Fuzzy Sets and Systems\/}~{\em 144,\/}~1, 151--171.

\bibitem[\protect\citeauthoryear{L\'opez-Fraguas, Rodr\'iguez-Artalejo, and del
  Vado-V\'irseda}{L\'opez-Fraguas et~al\mbox{.}}{2007}]{LRV07}
{\sc L\'opez-Fraguas, F.~J.}, {\sc Rodr\'iguez-Artalejo, M.}, {\sc and} {\sc
  del Vado-V\'irseda, R.} 2007.
\newblock A new generic scheme for functional logic programming with
  constraints.
\newblock {\em Journal of Higher-Order and Symbolic Computation\/}~{\em
  20,\/}~1\&2, 73--122.

\bibitem[\protect\citeauthoryear{Lucio, Orejas, Pasarella, and Pino}{Lucio
  et~al\mbox{.}}{2008}]{LOPP08}
{\sc Lucio, P.}, {\sc Orejas, F.}, {\sc Pasarella, E.}, {\sc and} {\sc Pino,
  E.} 2008.
\newblock A functorial framework for constraint normal logic programming.
\newblock {\em Applied Categoriacal Structures\/}~{\em 16,\/}~3, 421--450.

\bibitem[\protect\citeauthoryear{Medina, Ojeda-Aciego, and
  Vojt\'{a}\v{s}}{Medina et~al\mbox{.}}{2001a}]{MOV01a}
{\sc Medina, J.}, {\sc Ojeda-Aciego, M.}, {\sc and} {\sc Vojt\'{a}\v{s}, P.}
  2001a.
\newblock Multi-adjoint logic programming with continuous semantics.
\newblock In {\em Logic Programming and Non-Monotonic Reasoning (LPNMR'01)},
  {T.~Eiter}, {W.~Faber}, {and} {M.~Truszczyinski}, Eds. LNAI, vol. 2173.
  Springer-Verlag, 351--364.

\bibitem[\protect\citeauthoryear{Medina, Ojeda-Aciego, and
  Vojt\'{a}\v{s}}{Medina et~al\mbox{.}}{2001b}]{MOV01b}
{\sc Medina, J.}, {\sc Ojeda-Aciego, M.}, {\sc and} {\sc Vojt\'{a}\v{s}, P.}
  2001b.
\newblock A procedural semantics for multi-adjoint logic programming.
\newblock In {\em Progress in Artificial Intelligence (EPIA'01)}, {P.~Brazdil}
  {and} {A.~Jorge}, Eds. LNAI, vol. 2258. Springer-Verlag, 290--297.

\bibitem[\protect\citeauthoryear{Moreno and Pascual}{Moreno and
  Pascual}{2007}]{MP07}
{\sc Moreno, G.} {\sc and} {\sc Pascual, V.} 2007.
\newblock Formal properties of needed narrowing with similarity relations.
\newblock {\em Electronic Notes in Theoretical Computer Science\/}~{\em 188},
  21--35.

\bibitem[\protect\citeauthoryear{Ng and Subrahmanian}{Ng and
  Subrahmanian}{1992}]{NS92}
{\sc Ng, R.~T.} {\sc and} {\sc Subrahmanian, V.~S.} 1992.
\newblock Probabilistic logic programming.
\newblock {\em Information and Computation\/}~{\em 101,\/}~2, 150--201.

\bibitem[\protect\citeauthoryear{Riezler}{Riezler}{1996}]{Rie96}
{\sc Riezler, S.} 1996.
\newblock Quantitative constraint logic programming for weighted grammar
  applications.
\newblock In {\em Proceedings of the Logical Aspects of Computational
  Linguistics (LACL'96)}, {C.~Retor\'e}, Ed. LNCS, vol. 1328. Springer-Verlag,
  346--365.

\bibitem[\protect\citeauthoryear{Riezler}{Riezler}{1998}]{Rie98phd}
{\sc Riezler, S.} 1998.
\newblock Probabilistic constraint logic programming.
\newblock Ph.D. thesis, Neuphilologischen Fakult\"{a}t del Universit\"{a}t
  T\"{u}bingen.

\bibitem[\protect\citeauthoryear{Rodr\'iguez-Artalejo and
  Romero-D\'iaz}{Rodr\'iguez-Artalejo and Romero-D\'iaz}{2008a}]{RR08TR}
{\sc Rodr\'iguez-Artalejo, M.} {\sc and} {\sc Romero-D\'iaz, C.~A.} 2008a.
\newblock A generic scheme for qualified logic programming.
\newblock Tech. Rep. SIC-1-08 (CoRR abs/1008.3863), Universidad Complutense,
  Departamento de Sistemas Inform\'aticos y Computaci\'on, Madrid, Spain.

\bibitem[\protect\citeauthoryear{Rodr\'iguez-Artalejo and
  Romero-D\'iaz}{Rodr\'iguez-Artalejo and Romero-D\'iaz}{2008b}]{RR08}
{\sc Rodr\'iguez-Artalejo, M.} {\sc and} {\sc Romero-D\'iaz, C.~A.} 2008b.
\newblock Quantitative logic programming revisited.
\newblock In {\em Functional and Logic Programming (FLOPS'08)}, {J.~Garrigue}
  {and} {M.~Hermenegildo}, Eds. LNCS, vol. 4989. Springer-Verlag, Ise, Japan,
  272--288.

\bibitem[\protect\citeauthoryear{Rodr\'iguez-Artalejo and
  Romero-D\'iaz}{Rodr\'iguez-Artalejo and Romero-D\'iaz}{2009}]{RR08b}
{\sc Rodr\'iguez-Artalejo, M.} {\sc and} {\sc Romero-D\'iaz, C.~A.} 2009.
\newblock Qualified logic programming with bivalued predicates.
\newblock In {\em Proceedings of the Eighth Spanish Conference on Programming
  and Computer Languages (PROLE 2008)}, {J.~M. Almendros-Jim\'enez}, Ed. ENTCS,
  vol. 248. Elsevier, Gij\'on, Spain, 67--82.

\bibitem[\protect\citeauthoryear{Rodr\'iguez-Artalejo and
  Romero-D\'iaz}{Rodr\'iguez-Artalejo and Romero-D\'iaz}{2010}]{RR10}
{\sc Rodr\'iguez-Artalejo, M.} {\sc and} {\sc Romero-D\'iaz, C.~A.} 2010.
\newblock A declarative semantics for {CLP} with qualification and proximity.
\newblock {\em Theory and Practice of Logic Programming, 26th Int'l. Conference
  on Logic Programming (ICLP'10) Special Issue\/}~{\em 10,\/}~4--6, 627--642.

\bibitem[\protect\citeauthoryear{Saraswat}{Saraswat}{1992}]{Sar92}
{\sc Saraswat, V.~A.} 1992.
\newblock The category of constraint systems is cartesian-closed.
\newblock In {\em Proceedings of the Seventh Annual IEEE Symposium on Logic in
  Computer Science (LICS '92)}. 341--345.

\bibitem[\protect\citeauthoryear{Sessa}{Sessa}{2002}]{Ses02}
{\sc Sessa, M.~I.} 2002.
\newblock Approximate reasoning by similarity-based {SLD} resolution.
\newblock {\em Theoretical Computer Science\/}~{\em 275,\/}~1-2, 389--426.

\bibitem[\protect\citeauthoryear{Shenoi and Melton}{Shenoi and
  Melton}{1999}]{SM99}
{\sc Shenoi, S.} {\sc and} {\sc Melton, A.} 1999.
\newblock Proximity relations in the fuzzy relational database model.
\newblock {\em Fuzzy Sets and Systems\/}~{\em 100,\/}~supl., 51--62.

\bibitem[\protect\citeauthoryear{Tarski}{Tarski}{1955}]{Tar55}
{\sc Tarski, A.} 1955.
\newblock A lattice-theoretical fixpoint theorem and its applications.
\newblock {\em Pacific Journal of Mathematics\/}~{\em 5,\/}~2, 285--309.

\bibitem[\protect\citeauthoryear{van Emden}{van Emden}{1986}]{VE86}
{\sc van Emden, M.~H.} 1986.
\newblock Quantitative deduction and its fixpoint theory.
\newblock {\em Journal of Logic Programming\/}~{\em 3,\/}~1, 37--53.

\bibitem[\protect\citeauthoryear{van Emden and Kowalski}{van Emden and
  Kowalski}{1976}]{VEK76}
{\sc van Emden, M.~H.} {\sc and} {\sc Kowalski, R.~A.} 1976.
\newblock The semantics of predicate logic as a programming language.
\newblock {\em Journal of the Association for Computing Machinery
  (JACM)\/}~{\em 23,\/}~4, 733--742.

\bibitem[\protect\citeauthoryear{Vaucheret, Guadarrama, and
  Mu{\~{n}}oz}{Vaucheret et~al\mbox{.}}{2002}]{VGM02}
{\sc Vaucheret, C.}, {\sc Guadarrama, S.}, {\sc and} {\sc Mu{\~{n}}oz, S.}
  2002.
\newblock Fuzzy prolog: A simple general implementation using
  {CLP}($\mathcal{R}$).
\newblock In {\em Logic for Programming, Artificial Intelligence, and Reasoning
  (LPAR'02)}, {M.~Baaz} {and} {A.~Voronkov}, Eds. LNCS, vol. 2514. Springer
  Berlin / Heidelberg, Tbilisi, Georgia, 450--463.

\bibitem[\protect\citeauthoryear{Vojt\'{a}\v{s}}{Vojt\'{a}\v{s}}{2001}]{Voj01}
{\sc Vojt\'{a}\v{s}, P.} 2001.
\newblock Fuzzy logic programming.
\newblock {\em Fuzzy Sets and Systems\/}~{\em 124}, 361--370.

\bibitem[\protect\citeauthoryear{Zadeh}{Zadeh}{1965}]{Zad65}
{\sc Zadeh, L.~A.} 1965.
\newblock Fuzzy sets.
\newblock {\em Information and Control\/}~{\em 8,\/}~3, 338--353.

\bibitem[\protect\citeauthoryear{Zadeh}{Zadeh}{1971}]{Zad71}
{\sc Zadeh, L.~A.} 1971.
\newblock Similarity relations and fuzzy orderings.
\newblock {\em Information Sciences\/}~{\em 3,\/}~2, 177--200.

\end{thebibliography}

\end{document}